\renewcommand{\qed}{\rule{2mm}{2mm}}
\newcommand{\indep}{\perp \!\!\! \perp}
\DeclareMathOperator{\var}{Var}
\DeclareMathOperator{\bias}{Bias}
\DeclareMathOperator{\mse}{MSE}
\newcommand{\dd}{\mathrm{d}}
\newtheorem{theorem}{Theorem}[section]
\newtheorem{lemma}{Lemma}[section]
\newtheorem{remark}{Remark}[section]
\newtheorem{assumption}{Assumption}[section]
\begin{document}

\author{
Yuehao Bai \\
Department of Economics \\
University of Michigan \\
\href{mailto:yuehaob@umich.edu}{\texttt{yuehaob@umich.edu}}
}

\bigskip

\title{Optimality of Matched-Pair Designs in Randomized Controlled Trials\thanks{Bai: University of Michigan (email: yuehaob@umich.edu). Isaiah Andrews was the coeditor for this article. I am deeply grateful for the encouragement and guidance from my advisors Azeem Shaikh, Stephane Bonhomme, Alex Torgovitsky, and Leonardo Bursztyn. I thank the coeditor and the referees for numerous comments that helped improve the paper greatly. I thank Marinho Bertanha, John Bound, Charlie Brown, Aibo Gong, Florian Gunsilius, Sara Heller, Wooyong Lee, Jonathan Roth, Joshua Shea, Max Tabord-Meehan, Dean Yang, Basit Zafar, and Qinyue Zhou for extensive feedback on the paper. I thank Rex Hsieh and Jiehan Xu for outstanding research assistance. I also thank Bobbie Goettler for excellent copy editing.}}

\maketitle

\begin{spacing}{1.2}
\begin{abstract}
In randomized controlled trials (RCTs), treatment is often assigned by stratified randomization. I show that among all stratified randomization schemes which treat all units with probability one half, a certain matched-pair design achieves the maximum statistical precision for estimating the average treatment effect (ATE). In an important special case, the optimal design pairs units according to the baseline outcome. In a simulation study based on datasets from 10 RCTs, this design lowers the standard error for the estimator of the ATE by 10\% on average, and by up to 34\%, relative to the original designs.
\end{abstract}
\end{spacing}

\noindent \textsc{Keywords}: Matched-pair design, baseline outcome, stratified randomization, experiment, randomized controlled trial

\noindent \textsc{JEL classification codes}: C12, C13, C14, C90

\thispagestyle{empty}
\newpage
\setcounter{page}{1}

This paper studies the optimality of matched-pair designs in randomized controlled trials (RCTs). Matched-pair designs are examples of stratified randomization, in which the researcher partitions a set of units into strata (groups) based on their observed covariates and assigns a fraction of units in each stratum to treatment. A matched-pair design is a stratified randomization scheme with two units in each stratum.

Stratified randomization is prevalent in economics. Among the 5,000 RCTs in the AEA RCT Registry, more than 800 are stratified. The schemes in these papers, however, differ vastly in terms of the covariates used to stratify and how fine the strata are. Among these 800 RCTs, around 50 use matched-pair designs. Moreover, 56\% of the researchers interviewed in \cite{bruhn2009pursuit} have used matched-pair designs at some point in their research. Yet, despite the frequency with which applied researchers make decisions about how to stratify, there are few general econometric results on whether matched-pair designs lead to better precision of estimators of treatment effects than other stratified randomization schemes and the best way to pair units.

I derive the exact form of the stratified randomization scheme that has the maximum statistical precision for estimating the average treatment effect (ATE). The optimal scheme is a matched-pair design. In an important special case, the optimal design is to order the units according to the baseline values of the primary outcome variable of interest and then pair the adjacent units. When I simulate this simple design using data from 10 recent papers in the \textit{American Economic Journal: Applied Economics}, I find it lowers the standard error of the difference-in-means estimator by 10\% on average, and by up to 34\%, relative to the designs actually used in those studies. I also find some more complicated stratifications with strata of four units according to multiple covariates could further lower both the MSE and the standard error. Based on these findings, I make practical recommendations across a wide range of empirical settings.

In Section \ref{sec:oracle}, I study settings where the treated fractions are identically $\frac{1}{2}$ across strata. In such settings, a common estimator for the ATE is the difference in the means of the treated and control groups. The properties of the difference-in-means estimator, however, vary substantially with how the researcher stratifies. To begin, consider the thought experiment where we know the distributions of the potential outcomes. Let $Y(1)$ denote the potential outcome if a unit is treated and let $Y(0)$ denote the potential outcome if it is not treated. Let $X$ denote the observed, baseline covariates. I define an index function $E[Y(1) + Y(0) | X]$, the expected sum of the potential outcomes given the covariates. My first result shows the mean-squared error (MSE) of the difference-in-means estimator is minimized by a matched-pair design, where units are ordered according to this index function and paired adjacently. My optimality result holds at any sample size and without any distributional assumption beyond the existence of moments. In particular, my result does not rely on restrictions on treatment effects heterogeneity.

I describe a special case where the optimal stratification is feasible even without knowing the index function. Suppose $X$ contains a single covariate. Further suppose both $Y(1)$ and $Y(0)$ are higher in expectation when $X$ is higher, so that $E[Y(1) + Y(0) | X]$ is increasing in $X$. In this case, pairing units according to $X$ is optimal. An important example in empirical practice is when $X$ is the baseline value of the primary outcome variable of interest. For instance, in \cite{angrist2009effects}, the primary outcome variable of interest is a test score and the treatment is an educational program, so we expect a higher baseline test score ($X$) implies a higher endline test score ($Y(1)$ and $Y(0)$) in expectation.

If researchers are unsure about the monotonicity condition, or if multiple covariates are available, then the optimal stratification is generally unknown because the index function is generally unknown. As such, Section \ref{sec:practical} studies several feasible procedures. With multiple covariates, I study pairing units to minimize the (Mahalanobis) distances of the covariates. In settings with auxiliary data, such as data from pilot studies, I propose several matched-pair designs in which the index function is approximated by a proxy based on the auxiliary data.

In Section \ref{sec:asymptotic}, to compare the performance of these practical procedures, I study the asymptotic properties of the difference-in-means estimator. I show that relative to not stratifying, pairing according to any function of the covariates can only reduce the limiting variance of the difference-in-means estimator. Moreover, the limiting variance is lower if the stratifying variables explain a larger proportion of the variation in $Y(1) + Y(0)$.

In Section \ref{sec:sims}, I conduct a simulation study using data from a systematically selected set of 10 RCTs from recent issues of the \textit{American Economic Journal: Applied Economics}. Relative to the original stratifications used in those 10 papers, if the researchers had just paired the units according to their baseline outcomes, then the MSE of the difference-in-means estimator would be 24\% smaller on average and 56\% smaller in some cases. The standard error of the difference-in-means estimator would be 10\% smaller on average and 34\% smaller in some cases.

Among all methods in the simulation, pairing units to minimize the sum of the squared Mahalanobis distances of the covariates usually leads to the smallest MSEs. When the number of covariates is large, however, the standard error could be even larger than the standard error when pairing according to the baseline outcome alone. Intuitively, this is because the quality of the variance estimator is lower when the curse of dimensionality is more severe. An alternative that balances the MSE and the standard error is to match units into sets of four, instead of pairs, to minimize the sum of the squared Mahalanobis distance of the covariates. Such a method has both smaller MSEs and standard errors than pairing according to the baseline outcome alone while being computationally more intensive.

I conclude with recommendations for empirical practice in Section \ref{sec:conclusion}. I recommend different stratifications based on the availability of auxiliary datasets and whether one main outcome of interest clearly dominates the others. All of my recommended procedures are defined by pairing units or matching units into sets of four according to all or a subset of the available covariates.

\paragraph{Related Literature} This paper is most closely related to \cite{barrios2013optimal} and \cite{tabord-meehan2020stratification}. \cite{barrios2013optimal} studies minimizing the variance of the difference-in-means estimator. He is the first to show pairing units according to my index function is optimal among all matched-pair designs, albeit under the assumption of homogeneous treatment effects. My optimality result holds among all stratified randomization schemes and with heterogeneous treatment effects. \cite{tabord-meehan2020stratification} studies optimality within a class of stratification trees. Because the number of strata is fixed in his asymptotic framework, he can optimize over the treated fraction in each stratum. In a matched-pair design, the number of strata is half of the sample size and hence not fixed as the sample size increases, so matched-pair designs are precluded in his framework. In Section \ref{sec:unequal} of the supplement, I elaborate on the comparison between the two papers and further note that combining our procedures is straightforward.

The following papers also study matched-pair designs: \cite{greevy2004optimal} study pairing units to minimize the sum of the squared Mahalanobis distances of the covariates. \cite{imai2008variance} studies matched-pair designs, focusing on the sample ATE. The inference methods in this paper build on and extend those in \cite{bai2021inference}. In addition, inference under matched-pair designs has also been studied in \cite{abadie2008estimation}, who assume a different sampling framework, \cite{fogarty2018mitigating,fogarty2018regression-assisted}, who provides conservative estimators for the limiting variance, and \cite{de_chaisemartin2021at} in a finite-population setting.

\section{Setup and Notation} \label{sec:setup}
Let $Y_i$ denote the observed outcome of interest for the $i$th unit, let $D_i$ denote the treatment status for the $i$th unit, and let $X_i$ denote the observed, baseline covariates for the $i$th unit. Further denote by $Y_i(1)$ the potential outcome of the $i$th unit if treated and by $Y_i(0)$ if not treated. As usual, the observed outcome is related to the potential outcomes and treatment status by the relationship
\[ Y_i = Y_i(1) D_i + Y_i(0) (1 - D_i)~. \]
For ease of exposition, I assume the sample size is even and denote it by $2n$. I assume $((Y_i(1), Y_i(0), X_i): 1 \leq i \leq 2n)$ is an i.i.d. sequence of random vectors. Note the potential outcomes and the covariates are drawn from a population and hence are random instead of fixed. For any random vector indexed by $i$, $A_i$, define $A^{(n)} = (A_1, \dots, A_{2n})'$. The main parameter of interest is the average treatment effect (ATE):
\[ \theta = E[Y_i(1) - Y_i(0)]~. \]

In stratified randomization, I first partition the set of units into strata. Formally, I define a stratification $\lambda = \{\lambda_s: 1 \leq s \leq S\}$ as a partition of $\{1, \dots, 2n\}$:
\begin{enumerate}[\rm (a)]
\item $\lambda_s \bigcap \lambda_{s'} = \varnothing$ for all $s$ and $s'$ such that $1 \leq s \neq s' \leq S$.
\item $\bigcup\limits_{1 \leq s \leq S} \lambda_s = \{1, \dots, 2n\}$.
\end{enumerate}

\noindent Let $\Lambda_n$ denote the set of all stratifications of $2n$ units. Define $n_s = |\lambda_s|$ and $\tau_s$ as the treated fraction in stratum $\lambda_s$. A matched-pair design is simply a stratified randomization scheme with $S = n$ and $n_s = 2$ for $1 \leq s \leq S$. I define $\Lambda_n^{\rm pair} \subseteq \Lambda_n$ as the set of all matched-pair designs for $2n$ units.

I make the following assumption on the treatment assignment scheme:

\begin{assumption} \rm \label{as:half}
Given the covariates $X^{(n)}$, treatment status is determined as follows: independently for $1 \leq s \leq S$, uniformly at random choose $n_s \tau_s$ units in $\lambda_s$, and assign $D_i = 1$ to them and $D_i = 0$ to the other units in $\lambda_s$. Furthermore, $\tau_s = \frac{1}{2}$ for $1 \leq s \leq S$.
\end{assumption}

Assumption \ref{as:half} implies
\begin{equation} \label{eq:indep}
(Y^{(n)}(0), Y^{(n)}(1)) \indep D^{(n)} | X^{(n)}~.
\end{equation}
In other words, treatment status and potential outcomes are conditionally independent given the covariates. Assumption \ref{as:half} also implies $n_s$ has to be even because a unit cannot be cut in half. Note the distribution of the vector of treatment status $D^{(n)}$ depends on $\lambda$. Most results below can be extended to settings where $\tau_s, 1 \leq s \leq S$ are identical but not $\frac{1}{2}$, or where they are additionally allowed to vary across subpopulations. See Remark \ref{remark:tau} for details.

For all treatment assignment schemes in the main text, I estimate the ATE by the difference in the means of the treated and control groups. Formally, for $d \in \{0, 1\}$, define
\[ \hat \mu_n(d) = \frac{1}{n} \sum_{1 \leq i \leq 2n: D_i = d} Y_i~. \]
The difference-in-means estimator is defined as
\[ \hat \theta_n =  \hat \mu_n(1) - \hat \mu_n(0)~. \]
The difference-in-means estimator is widely used because it is simple and transparent. Under Assumption \ref{as:half}, it coincides with the OLS estimator for the coefficient in the linear regression of the outcome on treatment status and strata fixed effects and the OLS estimator from the fully saturated version of that regression, both of which are also widely used in analyses of RCTs. See, for example, \cite{duflo2007using}, \cite{glennerster2013running}, and \cite{crepon2015estimating}.

\section{Optimal Stratification} \label{sec:oracle}
This section studies the optimal stratification. To preview the results, define the index function
\begin{equation} \label{eq:g}
g(x) = E[Y_i(1) + Y_i(0) | X_i = x]~.
\end{equation} 
I show the optimal stratification is given by ordering the units according to $g_i = g(X_i)$ and then pairing the adjacent units. In the special case where $X_i$ is a scalar and $E[Y_i(1) | X_i = x]$ and $E[Y_i(0) | X_i = x]$ are both weakly increasing (or both weakly decreasing) in $x$, the optimal stratification is given by ordering the units according to $X_i$ and then pairing the adjacent units.

The analysis in this section is conditional on $X^{(n)}$. In this section only, instead of the population ATE, I focus on the ATE conditional on $X^{(n)}$:
\[ \theta_n = \frac{1}{2n} \sum_{1 \leq i \leq 2n} E[Y_i(1) - Y_i(0) | X_i]~. \]
Focusing on $\theta_n$ simplifies the discussion. Moreover, conditional on a fixed sample with covariates $X^{(n)}$, I can only hope to be unbiased for $\theta_n$ instead of $\theta$. The conclusions of the theorems in this section are the same regardless of whether the parameter of interest is $\theta_n$ or $\theta$.

My objective function is the MSE of $\hat \theta_n$ for $\theta_n$ conditional on $X^{(n)}$ under a stratification $\lambda \in \Lambda_n$:
\[ \mse (\lambda | X^{(n)}) = E_\lambda[(\hat \theta_n - \theta_n)^2 | X^{(n)}]~. \]
Here, the notation $E_\lambda$ indicates the distribution of the vector of treatment status $D^{(n)}$ depends on the stratification. I consider minimizing the conditional MSE over the set of all stratifications:
\begin{equation} \label{eq:min-mse}
\min_{\lambda \in \Lambda_n} ~ \mse (\lambda | X^{(n)})~.
\end{equation}
In what follows, I derive the optimal stratification as the solution to \eqref{eq:min-mse}. I emphasize that by a simple bias-variance decomposition, one can show \eqref{eq:min-mse} is equivalent to the problem where $\theta_n$ is replaced by $\theta$, so focusing on $\theta_n$ in this section is genuinely without loss of generality.

Solving \eqref{eq:min-mse} involves two intermediate results, each carrying additional insights into the problem. To describe the first intermediate result, I define the ex-ante bias of $\hat \theta_n$ for $\theta_n$ conditional on $X^{(n)}$ as
\[ \bias_{n, \lambda}^{\rm ante}(\hat \theta_n | X^{(n)}) = E_\lambda[\hat \theta_n | X^{(n)}] - \theta_n~, \]
and the ex-post bias of $\hat \theta_n$ for $\theta_n$ conditional on $X^{(n)}$ and $D^{(n)}$ as
\[ \bias_n^{\rm post}(\hat \theta_n | X^{(n)}, D^{(n)}) = E[\hat \theta_n | X^{(n)}, D^{(n)}] - \theta_n~. \]
Here, ex-ante bias refers to the bias conditional only on the covariates, before treatment status is realized; ex-post bias refers to the bias conditional on both the covariates and treatment status, after treatment status is realized. Note in the definition of the ex-post bias, the $\lambda$ subscript does not appear because $D^{(n)}$ is already given. Note from the definition of the difference-in-means estimator that
\begin{equation} \label{eq:diffim-d}
\hat \theta_n = \frac{1}{n} \sum_{1 \leq i \leq 2n} (Y_i(1) D_i - Y_i(0) (1 - D_i))~.
\end{equation}
By Assumption \ref{as:half}, the marginal treatment probability of each unit satisfies $E_\lambda[D_i | X^{(n)}] = \frac{1}{2}$, and together with the conditional independence assumption in \eqref{eq:indep}, they imply
\[ E_\lambda[\hat \theta_n | X^{(n)}] = \theta_n~. \]
Therefore, the ex-ante bias is identically zero across $\lambda \in \Lambda_n$, which is not surprising because the ex-ante bias should be zero if we run an experiment. By the law of iterated expectations,
\[ E_\lambda[\bias_n^{\rm post}(\hat \theta_n | X^{(n)}, D^{(n)}) | X^{(n)}] = \bias_{n, \lambda}^{\rm ante}(\hat \theta_n | X^{(n)}) = 0~, \]
so the mean of the ex-post bias over the distribution of treatment status equals the ex-ante bias, which is zero.

The first intermediate result is a decomposition of the conditional MSE in \eqref{eq:min-mse}. Because $E_\lambda[\hat \theta_n - \theta_n | X^{(n)}] = 0$, by the law of total variance,
\begin{multline} \label{eq:var}
\mse(\lambda | X^{(n)}) = \var[\hat \theta_n - \theta_n | X^{(n)}] \\
= E_\lambda[\var[\hat \theta_n - \theta_n| X^{(n)}, D^{(n)}] | X^{(n)}] + \var_\lambda[E[\hat \theta_n | X^{(n)}, D^{(n)}] - \theta_n | X^{(n)}]~.
\end{multline}
For any $\lambda \in \Lambda_n$, the first term on the right-hand side of \eqref{eq:var} equals
\begin{multline*}
E_\lambda \Big [ \frac{1}{n^2} \sum_{1 \leq i \leq 2n} (\var[Y_i(1) | X_i] D_i + \var[Y_i(0) | X_i] (1 - D_i)) \Big | X^{(n)} \Big ] \\
= \frac{1}{2n^2} \sum_{1 \leq i \leq 2n} (\var[Y_i(1) | X_i] + \var[Y_i(0) | X_i])~,
\end{multline*}
which is identical across all $\lambda \in \Lambda_n$. Note I used the conditional independence assumption in \eqref{eq:indep}, the facts that $\theta_n$ is a constant given $X^{(n)}$, that $D_i(1 - D_i) = 0$ for $1 \leq i \leq 2n$, and that $E_\lambda[D_i | X^{(n)}] = \frac{1}{2}$. Hence, \eqref{eq:min-mse} is further equivalent to minimizing the second term on the right-hand side of \eqref{eq:var}, which is the variance of the ex-post bias:
\[ \var_\lambda[\bias_n^{\rm post}(\hat \theta_n | X^{(n)}, D^{(n)}) | X^{(n)}]~. \]
The discussion so far leads to my first intermediate result:

\begin{lemma} \label{lem:post}
Suppose the treatment assignment scheme satisfies Assumption \ref{as:half}. Then, \eqref{eq:min-mse} is equivalent to
\[ \min_{\lambda \in \Lambda_n} ~ \var_\lambda[\bias_n^{\rm post}(\hat \theta_n | X^{(n)}, D^{(n)}) | X^{(n)}]~. \]
\end{lemma}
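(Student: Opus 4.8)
The plan is to decompose the conditional MSE via the law of total variance and then argue that only one of the two resulting terms depends on the stratification $\lambda$. First I would record that, because Assumption \ref{as:half} forces $E_\lambda[D_i \mid X^{(n)}] = \frac{1}{2}$ and the conditional independence in \eqref{eq:indep} holds, the ex-ante bias vanishes, so $\mse(\lambda \mid X^{(n)}) = \var[\hat \theta_n - \theta_n \mid X^{(n)}]$. Conditioning additionally on the realized treatment vector $D^{(n)}$ and applying the law of total variance splits this into the expected within-$D^{(n)}$ variance plus the variance of the conditional mean $E[\hat \theta_n \mid X^{(n)}, D^{(n)}]$; after subtracting the constant $\theta_n$, the latter is precisely $\var_\lambda[\bias_n^{\rm post}(\hat \theta_n \mid X^{(n)}, D^{(n)}) \mid X^{(n)}]$.

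The crux of the argument is to show the first term is constant across all $\lambda \in \Lambda_n$. Using the representation \eqref{eq:diffim-d}, conditional on $X^{(n)}$ and $D^{(n)}$ each summand $Y_i(1) D_i - Y_i(0)(1 - D_i)$ involves only the single draw $(Y_i(1), Y_i(0))$, and since the units are i.i.d. and the potential outcomes are independent of $D^{(n)}$ given $X^{(n)}$, these summands are mutually independent given $(X^{(n)}, D^{(n)})$. Hence the inner variance collapses to $\frac{1}{n^2} \sum_i (\var[Y_i(1) \mid X_i] D_i + \var[Y_i(0) \mid X_i](1 - D_i))$, where I use $D_i(1 - D_i) = 0$. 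Taking the outer expectation over $D^{(n)}$ and invoking $E_\lambda[D_i \mid X^{(n)}] = \frac{1}{2}$ replaces $D_i$ and $1 - D_i$ each by $\frac{1}{2}$, yielding $\frac{1}{2n^2} \sum_i (\var[Y_i(1) \mid X_i] + \var[Y_i(0) \mid X_i])$, an expression that manifestly does not reference $\lambda$.

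Combining the two observations, $\mse(\lambda \mid X^{(n)})$ equals a $\lambda$-free constant plus $\var_\lambda[\bias_n^{\rm post}(\hat \theta_n \mid X^{(n)}, D^{(n)}) \mid X^{(n)}]$, so minimizing the MSE over $\Lambda_n$ is equivalent to minimizing the variance of the ex-post bias, which is the claimed statement. I expect the only delicate step to be justifying that the cross terms in the inner variance drop out, which rests on the conditional independence of the potential-outcome draws across units given $(X^{(n)}, D^{(n)})$; everything else is bookkeeping with the law of iterated expectations. Since the text preceding the lemma already carries out this decomposition in full, the formal proof is essentially an assembly of these steps.
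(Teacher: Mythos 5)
Your proposal is correct and follows essentially the same route as the paper: ex-ante unbiasedness reduces the MSE to a conditional variance, the law of total variance splits it into the expected within-$D^{(n)}$ variance plus the variance of the ex-post bias, and the first term is shown to equal $\frac{1}{2n^2}\sum_{1 \leq i \leq 2n}(\var[Y_i(1)\mid X_i] + \var[Y_i(0)\mid X_i])$, which is free of $\lambda$. The paper's own argument is exactly this assembly of steps carried out in the text preceding the lemma, so nothing further is needed.
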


Next, I describe the second intermediate result in solving \eqref{eq:min-mse}. The result states any stratification is a convex combination of matched-pair designs. Formally, for $\lambda, \lambda' \in \Lambda_n^{\rm pair}$ and $\delta \in [0, 1]$, define $\delta \lambda \oplus (1 - \delta) \lambda'$ as the randomization between $\lambda$ and $\lambda'$ such that $\lambda$ is implemented with probability $\delta$. Define the convex hull formed by all convex combinations of any finite number of matched-pair designs as
\begin{multline*}
\mathrm{co}(\Lambda_n^{\rm pair}) = \Big \{ \bigoplus_{1 \leq j \leq J} \delta_j \lambda^j: \lambda^j \in \Lambda_n^{\rm pair}, \\
\delta_j \geq 0 \text{ for } 1 \leq j \leq J, \sum_{1 \leq j \leq J} \delta_j = 1, 1 \leq J < \infty \Big \}~.
\end{multline*}
In other words, a member of the convex hull is the ``mixing'' of $J$ matched-pair designs, where $J$ is finite.

For example, suppose $2n = 4$. Then, four stratifications are possible:
\begin{align*}
\lambda^0 & = \{\{1, 2, 3, 4\}\} \\
\lambda^1 & = \{\{1, 2\}, \{3, 4\}\} \\
\lambda^2 & = \{\{1, 3\}, \{2, 4\}\} \\
\lambda^3 & = \{\{1, 4\}, \{2, 3\}\}~.
\end{align*}
$\lambda^0$ puts four units in the same stratum. $\lambda^1$ pairs 1 and 2 together and 3 and 4 together. $\lambda^2$ and $\lambda^3$ are defined similarly. Note that implementing each of the three matched-pair designs with probability $1 / 3$ is equivalent to implementing $\lambda^0$, in the sense that the distributions of $(D_1, D_2, D_3, D_4)$ are the same under the two implementations. Indeed, under $\lambda^1$, $(D_1, D_2, D_3, D_4)$ takes the following four values each with probability $1/4$: $(1, 0, 1, 0)$, $(1, 0, 0, 1)$, $(0, 1, 1, 0)$, $(0, 1, 0, 1)$. Similarly, under $\lambda^2$, it takes the following four values each with probability $1/4$: $(1, 0, 0, 1)$, $(1, 1, 0, 0)$, $(0, 0, 1, 1)$, $(0, 1, 1, 0)$. Under $\lambda^3$, it takes the following four values each with probability $1/4$: $(1, 0, 1, 0)$, $(1, 1, 0, 0)$, $(0, 1, 0, 1)$, $(0, 0, 1, 1)$. Accordingly, under $\frac{1}{3} \lambda^1 \oplus \frac{1}{3} \lambda^2 \oplus \frac{1}{3} \lambda^3$, it takes the following six values each with probability $1/6$: $(1, 1, 0, 0)$, $(1, 0, 1, 0)$, $(1, 0, 0, 1)$, $(0, 1, 1, 0)$, $(0, 1, 0, 1)$, $(0, 0, 1, 1)$. This distribution is the same as that of $(D_1, D_2, D_3, D_4)$ under $\lambda^0$, where two out of four units are treated uniformly at random. As a result, $\lambda^0 \in \mathrm{co}(\{\lambda^1, \lambda^2, \lambda^3\})$, meaning $\lambda^0$ can be written as a convex combination of the three matched-pair designs.

I show in Section \ref{sec:proof} of the supplement that the result above holds in general and summarize it into the following lemma:

\begin{lemma} \label{lem:mixing}
If the treatment assignment scheme satisfies Assumption \ref{as:half}, then $\Lambda_n \subseteq \mathrm{co}(\Lambda_n^{\rm pair})$. In other words, any stratification is a convex combination of matched-pair designs.
\end{lemma}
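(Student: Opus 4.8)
The plan is to argue entirely at the level of the conditional distribution of the treatment vector $D^{(n)}$ given $X^{(n)}$, since ``$\lambda \in \mathrm{co}(\Lambda_n^{\rm pair})$'' means precisely that this distribution coincides with the one induced by some finite convex combination of matched-pair designs. Under Assumption \ref{as:half}, treatment is assigned independently across the strata of $\lambda$, so this distribution factorizes as a product over $1 \le s \le S$ of the within-stratum assignment laws. The key observation is that any matched-pair design all of whose pairs lie inside the strata of $\lambda$ also factorizes as such a product, and that the union over $s$ of a perfect matching of each $\lambda_s$ is itself a perfect matching of $\{1, \dots, 2n\}$, hence a member of $\Lambda_n^{\rm pair}$. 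Because a product of convex combinations is again a convex combination of the corresponding products, it therefore suffices to handle a single stratum: if I can write the within-stratum uniform assignment as a convex combination of within-stratum matched-pair assignments, the product over strata will automatically express the law of $D^{(n)}$ as a convex combination of (global) matched-pair designs.

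For the single-stratum step, fix a stratum of even size $2m$. By Assumption \ref{as:half}, its assignment law is uniform over the $\binom{2m}{m}$ treated sets of size $m$. I would compare this to the law obtained by mixing uniformly over all $(2m-1)!!$ perfect matchings of the stratum, where within each pair one unit is treated uniformly and independently. The cleanest route is symmetry: both laws are invariant under relabeling the $2m$ units, and the matching-mixture is supported on treated sets of size exactly $m$; since the symmetric group acts transitively on size-$m$ subsets, any invariant law supported there must be uniform, so the two laws coincide. Equivalently, a direct count shows that for a fixed target treated set $T$ of size $m$, the (matching, choice) outcomes producing $T$ are exactly the $m!$ bijections between $T$ and its complement, which over the $(2m-1)!! \cdot 2^m$ equally likely outcomes yields probability $1 / \binom{2m}{m}$, matching the uniform law.

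Combining the two steps completes the argument: writing each within-stratum uniform law as the uniform mixture of within-stratum matchings and expanding the product over strata exhibits the law of $D^{(n)}$ under $\lambda$ as a finite convex combination of matched-pair designs, so $\lambda \in \mathrm{co}(\Lambda_n^{\rm pair})$ and hence $\Lambda_n \subseteq \mathrm{co}(\Lambda_n^{\rm pair})$. I expect the only real obstacle to be the bookkeeping in pushing the per-stratum mixtures through the product while keeping every pair inside a single stratum---that is, verifying carefully that ``product of mixtures equals mixture of products'' produces coefficients that are nonnegative, sum to one, and index genuine global matched-pair designs. The finiteness of $\Lambda_n^{\rm pair}$ guarantees that the number of mixing components is finite, as required by the definition of $\mathrm{co}(\Lambda_n^{\rm pair})$.
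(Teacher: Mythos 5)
Your proposal is correct and is essentially the paper's own argument: both identify the law of $D^{(n)}$ under $\lambda$ with the uniform mixture over all within-stratum perfect matchings, verify the single-stratum identity by the same count (your $m!$ bijections between a treated set $T$ and its complement are exactly the paper's $(n_s/2)!$ matchings consistent with a given assignment, each realizing it with probability $2^{-n_s/2}$), and combine strata via independence. The only cosmetic differences are that you factor the argument explicitly into a one-stratum step plus a ``product of mixtures is a mixture of products'' step, and that you offer a symmetry/transitivity shortcut in place of the count, whereas the paper computes the joint probability over all strata directly.
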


Combining Lemmas \ref{lem:post}--\ref{lem:mixing} to minimize the MSE as in \eqref{eq:min-mse} is now straightforward. To state the result, I need an equivalent notation for matched-pair designs. Recall that a permutation of $\{1, \dots, 2n\}$ is a function that maps $\{1, \dots, 2n\}$ onto itself. Let $\Pi_n$ denote the group of all permutations of $\{1, \dots, 2n\}$. A matched-pair design is a stratified randomization scheme with
\[ \lambda = \{\{\pi(2s - 1), \pi(2s)\}: 1 \leq s \leq n\}~, \]
where $\pi \in \Pi_n$. Recall the definition of the index function $g$ in \eqref{eq:g} and order the units by defining $\pi^g \in \Pi_n$ that satisfies $g_{\pi^g(1)} \leq \dots \leq g_{\pi^g(2n)}$. Define the stratification
\begin{equation} \label{eq:oracle}
\lambda^g(X^{(n)}) = \{\{\pi^g(2s - 1), \pi^g(2s)\}: 1 \leq s \leq n\}~.
\end{equation}
The stratification in \eqref{eq:oracle} is given by ordering the units according to $g_i$ and then pairing the adjacent units. I now show it minimizes the MSE as in \eqref{eq:min-mse}.

For each $\lambda \in \Lambda_n$, define $V(\lambda)$ as the objective in Lemma \ref{lem:post}. Recall $g^{(n)} = (g_1, \ldots, g_n)'$. Then,
\begin{align}
\nonumber V(\lambda) & = \var_\lambda[E[\hat \theta_n | X^{(n)}, D^{(n)}] - \theta_n | X^{(n)}] \\
\nonumber & = \frac{1}{n^2} \var_\lambda \Big [ \sum_{1 \leq i \leq 2n} (D_i E[Y_i(1) | X_i] - (1 - D_i) E[Y_i(0) | X_i]) \Big | X^{(n)} \Big ] \\
\nonumber & = \frac{1}{n^2} \var_\lambda \Big [ \sum_{1 \leq i \leq 2n} D_i(E[Y_i(0) | X_i] + E[Y_i(1) | X_i]) \Big | X^{(n)} \Big ] \\
\nonumber & = \frac{1}{n^2} (g^{(n)})' \var_\lambda[D^{(n)} | X^{(n)}] g^{(n)}~,
\end{align}
where the first equality follows from the definition of the ex-post bias, the second equality follows from \eqref{eq:diffim-d} and the fact that $\theta_n$ is a constant given $X^{(n)}$, and the last two equalities follow by inspection. Recall the variance of $D_i$ is $\frac{1}{4}$. Also recall that for a matched-pair design, the covariance between treatment status of the two units in a pair is $- \frac{1}{4}$, and that of units across pairs is 0. Therefore, for any $\lambda = \{\{\pi(1), \pi(2)\}, \allowbreak \dots, \{\pi(2n - 1), \pi(2n)\}\} \in \Lambda_n^{\rm pair}$,
\[ V(\lambda) = \frac{1}{4n^2}\sum_{1 \leq s \leq n} (g_{\pi(2s-1)} - g_{\pi(2s)})^2~. \]
Therefore, $V(\lambda)$ is proportional to the sum of squared distances of $g$ within each pair. By Lemma \ref{lem:hardy-littlewood} in the supplement, which is a simple consequence of the Hardy-Littlewood-P\'olya rearrangement inequality, $V(\lambda^g(X^{(n)})) \leq V(\lambda)$ for any $\lambda \in \Lambda_n^{\rm pair}$. Therefore, $\lambda^g(X^{(n)})$ minimizes the MSE among $\Lambda_n^{\rm pair}$, the set of all matched-pair designs.

To conclude $\lambda^g(X^{(n)})$ is optimal among the set of all stratifications $\Lambda_n$, note each stratification is a mixing of matched-pair designs, and no ``mixed strategy'' has a better payoff than the optimal ``pure strategy.'' Formally, by Lemma \ref{lem:mixing}, any $\lambda \in \Lambda_n$ can be written as
\[ \lambda = \bigoplus_{1 \leq j \leq J} \delta_j \lambda^j~, \]
where $\lambda^j \in \Lambda_n^{\rm pair}$, $\delta_j \geq 0$ for $1 \leq j \leq J$, and $\sum_{1 \leq j \leq J} \delta_j = 1$. As a result,
\begin{align*}
\mse(\lambda | X^{(n)}) & = \sum_{1 \leq j \leq J} \delta_j \mse(\lambda^j | X^{(n)}) \\
& \geq \min_{1 \leq j \leq J} \mse(\lambda^j | X^{(n)}) \geq \mse(\lambda^g(X^{(n)}) | X^{(n)})~,
\end{align*}
where the equality follows from the definition of the MSE, the first inequality follows because any weighted average of a set of numbers is weakly larger than the minimum across them, and the last inequality follows because $\lambda^g(X^{(n)})$ minimizes $\mse(\lambda | X^{(n)})$ across $\Lambda_n^{\rm pair}$. Therefore, I have established my main theorem on the optimal stratification:

\begin{theorem} \label{thm:oracle}
Suppose the treatment assignment scheme satisfies Assumption \ref{as:half}. Then, the matched-pair design defined in \eqref{eq:oracle} minimizes the MSE as in \eqref{eq:min-mse}. In other words, the optimal stratification is given by ordering the units according to $g_i$ and then pairing the adjacent units.
\end{theorem}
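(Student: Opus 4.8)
The plan is to solve \eqref{eq:min-mse} in two stages: first minimize the objective over the set of matched-pair designs $\Lambda_n^{\rm pair}$, where the problem reduces to a clean combinatorial optimization, and then use the fact that every stratification is a mixture of matched-pair designs to lift the pairwise optimum to optimality over all of $\Lambda_n$. The two preceding lemmas supply the structural scaffolding: Lemma \ref{lem:post} reduces \eqref{eq:min-mse} to minimizing the variance of the ex-post bias, which I denote $V(\lambda)$, and Lemma \ref{lem:mixing} guarantees $\Lambda_n \subseteq \mathrm{co}(\Lambda_n^{\rm pair})$.

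For the first stage, I would begin by rewriting $V(\lambda)$ as a quadratic form in the index vector $g^{(n)}$. Using the representation \eqref{eq:diffim-d} of $\hat\theta_n$ and the fact that $\theta_n$ is constant given $X^{(n)}$, the ex-post bias is, up to the factor $1/n$, a linear combination of the $D_i$ with coefficients $E[Y_i(0)|X_i] + E[Y_i(1)|X_i] = g_i$, so that $V(\lambda) = \frac{1}{n^2}(g^{(n)})' \var_\lambda[D^{(n)}|X^{(n)}] g^{(n)}$. For a matched-pair design the covariance matrix of $D^{(n)}$ is explicit: each $\var[D_i|X^{(n)}] = \frac14$, the within-pair covariance is $-\frac14$, and the across-pair covariance is $0$. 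Substituting this structure collapses the quadratic form to $V(\lambda) = \frac{1}{4n^2}\sum_{1\leq s\leq n}(g_{\pi(2s-1)} - g_{\pi(2s)})^2$, a sum of squared within-pair gaps in $g$. Minimizing this sum over all ways of pairing $2n$ numbers is precisely the setting of the Hardy-Littlewood-P\'olya rearrangement inequality (Lemma \ref{lem:hardy-littlewood}), which yields that the minimizer sorts the $g_i$ and pairs adjacent values, i.e., the design $\lambda^g(X^{(n)})$ defined in \eqref{eq:oracle}.

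For the second stage, the key observation is that the MSE is affine in the mixing weights. If $\lambda = \bigoplus_{1\leq j\leq J}\delta_j\lambda^j$ with each $\lambda^j\in\Lambda_n^{\rm pair}$, then because mixing draws $D^{(n)}$ from the corresponding mixture distribution and $\mse(\cdot|X^{(n)})$ is an expectation over $D^{(n)}$, we have $\mse(\lambda|X^{(n)}) = \sum_{1\leq j\leq J}\delta_j\,\mse(\lambda^j|X^{(n)})$. A convex combination of numbers is at least their minimum, and that minimum is in turn bounded below by $\mse(\lambda^g(X^{(n)})|X^{(n)})$ from the first stage; chaining these two inequalities gives $\mse(\lambda|X^{(n)})\geq\mse(\lambda^g(X^{(n)})|X^{(n)})$ for every $\lambda\in\Lambda_n$, which is the theorem.

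I expect the conceptual crux to be the second stage rather than the computation. The claim is optimality over the \emph{full} set $\Lambda_n$, and it is not a priori obvious that a finely stratified design cannot beat the best matched-pair design; the argument hinges entirely on Lemma \ref{lem:mixing} together with the linearity of MSE in the mixing probabilities, which together rule out any ``mixed strategy'' improving on the best ``pure strategy.'' By contrast, the first-stage computation is routine once the covariance structure of $D^{(n)}$ under a matched-pair design is written down, and the final pairing step is a direct appeal to Lemma \ref{lem:hardy-littlewood}.
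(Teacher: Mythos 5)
Your proposal is correct and follows essentially the same route as the paper: reduce to the variance of the ex-post bias via Lemma \ref{lem:post}, write it as the quadratic form $\frac{1}{n^2}(g^{(n)})'\var_\lambda[D^{(n)}|X^{(n)}]g^{(n)}$, collapse it for matched pairs to the sum of squared within-pair gaps and apply Lemma \ref{lem:hardy-littlewood}, then lift to all of $\Lambda_n$ by combining Lemma \ref{lem:mixing} with the linearity of the MSE in the mixing weights. The two-stage structure, the covariance computation, and the ``no mixed strategy beats the best pure strategy'' conclusion all match the paper's argument.
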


\begin{remark} \rm
Note the optimal stratification does not depend on knowledge of the conditional variances of $Y_i(1)$ and $Y_i(0)$ given $X_i$.
\end{remark}

\begin{remark} \rm \label{remark:tau}
Theorem \ref{thm:tau-oracle} in the supplement examines settings where the treated fractions are identical across strata but not $\frac{1}{2}$. Formally, suppose $\tau_s = \tau = \frac{l}{k}$ for $1 \leq s \leq S$, where $l, k \in \mathbf N$, $0 < l < k$, and $l$ and $k$ are mutually prime. Define
\begin{equation} \label{eq:tau-g}
g^\tau(X_i) = \frac{E[Y_i(1) | X_i]}{\tau} + \frac{E[Y_i(0) | X_i]}{1 - \tau}~.
\end{equation}
$g^\tau$ adjusts for the treatment probability by inverse probability weighting. The optimal stratification is defined by the following algorithm:
\begin{enumerate}[(a)]
\item Order the units according to $g^\tau(X_i)$.
\item Put the first $k$ units in the first stratum, the second $k$ units in the second stratum, and so on.
\item Uniformly at random assign $l$ of the $k$ units in each stratum to treatment.
\end{enumerate}
In this case, the optimal design is not paired, but stratified randomization with the appropriate group size remains optimal. For examples in this spirit of small strata, see \cite{bold2018experimental} and \cite{brown2020inducing}.
\end{remark}

\begin{remark} \rm \label{remark:rerand}
Re-randomization, studied by \cite{morgan2012rerandomization,morgan2015rerandomization}, is an alternative to stratified randomization. Re-randomization takes random draws of treatment status until it falls in an admissible set. The admissible set is usually defined as the collection of treatment assignments under which the distance between the treated and control units is below a threshold. The notion of distance can be, for instance, the (Mahalanobis) distance in the covariates or the distance in $g$. In matched-pair designs, units are matched to \emph{minimize} the distance between treated and control units. As such, each possible realization of the vector of treatment status under a matched-pair design not only belongs to the admissible set but also attains the smallest distance within the admissible set. For example, suppose each distinct value of the covariate appears twice in the sample. Then, a matched-pair design is equivalent to re-randomization with the distance threshold set to zero.
\end{remark}

Note from \eqref{eq:g} that the index function $g_i$ is a scalar regardless of the dimension of $X_i$. Moreover, the optimal stratification depends not on the values but merely on the ordering of $g_i$. For instance, if $X_i$ is univariate and $g(x)$ is monotonic in $x$, then the optimal stratification in \eqref{eq:oracle} is given by ordering the units by $X_i$ and then pairing the adjacent units. This scenario arises in many settings, especially if $X_i$ is the baseline value of the primary outcome variable of interest, which is collected in the baseline survey before treatment is assigned. For instance, \cite{angrist2009effects} study the effect of an educational program on test scores. In their paper, $X_i$ is the baseline test score, so we expect both $E[Y_i(1) | X_i]$ and $E[Y_i(0) | X_i]$ are weakly increasing in $X_i$. I record this result as a theorem. Let $\pi^X \in \Pi_n$ be such that $X_{\pi^X(1)} \leq \dots \leq X_{\pi^X(2n)}$.

\begin{theorem} \label{thm:monotonic}
Suppose $X_i$ is univariate, the treatment assignment scheme satisfies Assumption \ref{as:half}, and $g(x)$ in \eqref{eq:g} is monotonic in $x$. Then,
\[ \lambda^g(X^{(n)}) = \{\{\pi^X(2s - 1), \pi^X(2s)\}: 1 \leq s \leq n\}~. \]
In other words, the optimal stratification is given by ordering the units according to their covariate values and then pairing the adjacent units.
\end{theorem}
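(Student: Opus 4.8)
The plan is to show that, under univariate $X_i$ and monotonicity of $g$, the sorting permutation $\pi^X$ (or its reversal) is itself an admissible choice for the sorting permutation $\pi^g$ used in \eqref{eq:oracle}, so that the optimal stratification $\lambda^g(X^{(n)})$ coincides with the adjacent-pairing stratification built from $\pi^X$. The driving fact is that $g_i = g(X_i)$ is a deterministic function of $X_i$, so a monotonic $g$ transfers the ordering of the $X_i$'s directly to the ordering of the $g_i$'s; the only care needed concerns the direction of monotonicity and the handling of ties.

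First I would treat the weakly increasing case. Since $\pi^X$ satisfies $X_{\pi^X(1)} \leq \dots \leq X_{\pi^X(2n)}$, applying the weakly increasing function $g$ preserves these inequalities and yields $g_{\pi^X(1)} \leq \dots \leq g_{\pi^X(2n)}$. Hence $\pi^X$ is an admissible choice of $\pi^g$ in the definition of $\lambda^g(X^{(n)})$, and with this choice the two stratifications agree term by term.

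Next I would treat the weakly decreasing case, where the orderings are reversed: $X_{\pi^X(1)} \leq \dots \leq X_{\pi^X(2n)}$ now implies $g_{\pi^X(1)} \geq \dots \geq g_{\pi^X(2n)}$. Defining the reversed permutation $\tilde\pi(i) = \pi^X(2n+1-i)$ gives $g_{\tilde\pi(1)} \leq \dots \leq g_{\tilde\pi(2n)}$, so $\tilde\pi$ is an admissible $\pi^g$. The key observation is that reversing the order leaves the adjacent-pairing partition unchanged because the sample size is even: a short index computation shows $\{\tilde\pi(2s-1), \tilde\pi(2s)\} = \{\pi^X(2(n+1-s)-1), \pi^X(2(n+1-s))\}$, and as $s$ runs over $1, \dots, n$ the index $n+1-s$ runs over $n, \dots, 1$, reproducing exactly the collection $\{\{\pi^X(2s-1), \pi^X(2s)\}: 1 \leq s \leq n\}$. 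Thus the two stratifications coincide as sets.

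The main obstacle, though a mild one, is the treatment of ties: when $g$ is only weakly monotonic, distinct units may share a common $g_i$ value, so $\pi^g$ is not unique and may break ties differently from $\pi^X$. This is not a genuine difficulty, since the objective $V(\lambda) = \frac{1}{4n^2}\sum_{1 \leq s \leq n}(g_{\pi(2s-1)} - g_{\pi(2s)})^2$ depends only on the $g$ values, so every admissible sorting permutation attains the same minimal MSE. The argument above simply exhibits one optimal stratification that is realized by pairing adjacent units ordered by $X_i$, which is precisely what the theorem asserts.
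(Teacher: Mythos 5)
Your proposal is correct and follows essentially the same route as the paper: the paper treats Theorem \ref{thm:monotonic} as an immediate consequence of Theorem \ref{thm:oracle} together with the observation that the optimal stratification in \eqref{eq:oracle} depends only on the ordering of the $g_i$, which under monotonicity of $g$ coincides (up to reversal) with the ordering of the $X_i$. Your additional care with the weakly decreasing case (showing the reversed permutation induces the same partition) and with ties (all admissible sorting permutations attain the same value of the objective) simply makes explicit details the paper leaves implicit.
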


\section{Feasible Procedures} \label{sec:practical}
The optimal stratification in Theorem \ref{thm:oracle} depends on the index function $g$, which is generally unknown, so the optimal stratification is also generally unknown. Therefore, researchers often need to approximate the index function with some proxies, possibly with the help of auxiliary data. This section studies a wide range of feasible stratification methods. Some procedures are based on data from pilot experiments, which are smaller-scale copies of the main experiment run on the same population. Depending on the availability of a pilot experiment and its sample size, different procedures are available. I switch the parameter of interest back to the population ATE $\theta$, recalling that all results in the previous section hold for both $\theta_n$ and $\theta$.

\subsection{Settings without Pilot Data} \label{sec:none}
According to Theorem \ref{thm:monotonic}, if $X_i$ is univariate and the index function $g(x)$ is monotonic in $x$, then the optimal stratification is given by pairing units according to $X_i$. A prominent example is where $X_i$ is the baseline value of the primary outcome variable of interest, and $E[Y_i(1) | X_i = x]$ and $E[Y_i(0) | X_i = x]$ are both weakly increasing or both weakly decreasing in $x$. 

Even if the monotonicity condition fails, units can still be paired according to their baseline outcomes. Theorem \ref{thm:limit} and Remark \ref{remark:loss} below study the limiting variance of the difference-in-means estimator. They reveal that if we need to choose a single covariate to pair on, the smallest limiting variance is attained by pairing units according to a covariate that explains the largest proportion of the variation in the potential outcomes. \cite{bruhn2009pursuit} note the baseline outcome is often such a covariate. Simulation evidence in Section \ref{sec:sims} further shows pairing units according to the baseline outcome performs better than the status-quo methods in terms of both the MSE and the standard error of the difference-in-means estimator.

Regardless of whether the baseline outcome is available, if $X_i$ is multivariate, then researchers can also pair units to minimize the sum of the squared Mahalanobis distances of the covariates:
\begin{equation} \label{eq:maha}
d(x_1, x_2) = (x_1 - x_2)' \hat \Sigma_n^{-1} (x_1 - x_2)~.
\end{equation}
Here, $\hat \Sigma_n$ is the sample variance matrix of $X$. \eqref{eq:maha} is simply the squared Euclidean distance if $\hat \Sigma_n$ is the identity matrix, and $\hat \Sigma_n^{-1}$ serves as a scale normalization because different covariates may be measured in different units or have different standard deviations. Note
\[ d(x_1, x_2) = \|\hat \Sigma_n^{-1/2} (x_1 - x_2)\|^2~, \]
where $\hat \Sigma_n^{-1/2}$ is the square root of $\hat \Sigma_n$. So the Mahalanobis distance between $x_1$ and $x_2$ equals the Euclidean distance between $\hat \Sigma_n^{-1/2} x_1$ and $\hat \Sigma_n^{-1/2} x_2$.

When the baseline outcome is unavailable but a large amount of auxiliary data is available, I can calculate a sample counterpart of \eqref{eq:oracle}. In general, the auxiliary data needs to come from pilot experiments, but in one special case, even observational data suffices. If the conditional ATEs are homogeneous, meaning
\begin{equation} \label{eq:hom}
E[Y_i(1) - Y_i(0) | X_i] = E[Y_i(1) - Y_i(0)] \text{ with probability one}~,
\end{equation}
then the ordering of $g_i$ is the same as that of $E[Y_i(0) | X_i]$. Suppose we have an observational dataset where the distribution of $(Y_i(0), X_i)$ is the same as that in the main experiment. As an example, suppose in an RCT to study the effect of educational program on test scores, the researcher has administrative data on the test scores of the previous cohort, and the distributions of $(Y_i(0), X_i)$ are the same across the two cohorts. Then, they can estimate $E[Y_i(0) | X_i = x]$ by a nonparametric regression using the data for the previous cohort and pair the units in the current cohort according to the predicted values in the regression. A key requirement is that the estimator for $E[Y_i(0) | X_i = x]$ is consistent in the sense of Assumption \ref{as:l2} and Theorem \ref{thm:l2} below. Then, as the sample sizes of the auxiliary data and the main experiment both increase, the limiting variance of $\hat \theta_n$ when units are paired according to the predicted values of $E[Y_i(0) | X_i]$ is the same as that under the optimal stratification in \eqref{eq:oracle}.

\subsection{Settings with Large Pilots} \label{sec:large}
Next, I consider settings with data from a pilot experiment. Let $m$ denote the sample size of the pilot experiment. I assume the pilot units are drawn from the same population as the main experiment. 

I start by investigating settings where the sample size of the pilot experiment is large. Formally, in the asymptotic framework, I allow both $m$ and $n$ to go to infinity. I pair units according to a suitable estimator $\tilde g_m$ of the index function $g$, where $\tilde g_m$ comes from a nonparametric regression using the pilot data. Again, a key requirement is that $\tilde g_m$ is consistent for $g$ in the sense of Assumption \ref{as:l2} and Theorem \ref{thm:l2} below. Then, as the sample sizes of the auxiliary data and the main experiment both increase, the limiting variance of $\hat \theta_n$ when units are paired according to $\tilde g_m$ is the same as that under the optimal stratification in \eqref{eq:oracle}.

If the pilot data is imperfect in the sense that it does not come from the same population as the main experiment, or if the estimation method for constructing $\tilde g_m$ is not flexible enough, then $\tilde g_m$ may not converge to $g$ but instead to another function $h$. In that case, the limiting variance of $\hat \theta_n$ is different from that under the optimal stratification in \eqref{eq:oracle} and depends on $h$, but it is still smaller than that under no stratification. See Theorem \ref{thm:l2} and Remark \ref{remark:compare} for details.

\subsection{Settings with Small Pilots} \label{sec:small}
In practice, even if pilot data is available, its sample size is often small. In those settings, pairing units according to $\tilde g_m$ generally does not ensure efficiency, unlike in settings with large pilots. We may be concerned that $\tilde g_m$ is a poor approximation of the index function $g$, and as a result, if units are paired according to $\tilde g_m$, then both the conditional MSE and the limiting variance of $\hat \theta_n$ are large.

Researchers could of course ignore the information in the small pilot and implement the procedures in Section \ref{sec:none}. If they would like to incorporate information from the pilot experiment, they can consider the following procedure. For $d \in \{0, 1\}$, let $\tilde \beta_m(d)$ denote the OLS estimators of the linear regression coefficients among the treated or untreated units in the pilot experiment and let $\tilde \Omega_m(d)$ denote the variance estimators in OLS assuming homoskedasticity (see Section \ref{sec:pen-detail} of the supplement for details). Further define
\begin{align*}
\tilde \beta_m & = \tilde \beta_m(1) + \tilde \beta_m(0) \\
\tilde \Omega_m & = \tilde \Omega_m(1) + \tilde \Omega_m(0)~.
\end{align*}
I pair the units to minimize the sum of the following distances of the covariates:
\begin{equation} \label{eq:pen-metric}
d^{\rm pen}(x_1, x_2) = (x_1' \tilde \beta_m - x_2' \tilde \beta_m)^2 + (x_1 - x_2)' \tilde \Omega_m (x_1 - x_2)~.
\end{equation}

To shed some light on the behavior of such a minimization problem, I consider two extreme cases. If $\tilde \Omega_m = 0$, which means $\tilde \beta_m$ is very precise, then the solution is given by pairing units according to $\tilde g_m = X_i' \tilde \beta_m$. If $\tilde \Omega_m$ is large, which means $\tilde \beta_m$ is very imprecise, then the second term on the right-hand side of \eqref{eq:pen-metric} dominates the first term, so the solution is close to a paired matching weighted by $\tilde \Omega_m$. Therefore, the solution can be viewed as penalizing the pairing according to $\tilde g_m(x) = x' \tilde \beta_m$, with the penalization determined by the variance estimator $\tilde \Omega_m$. I refer to the solution as the penalized matched-pair design. In Section \ref{sec:pen-detail} of the supplement, I show it is optimal in a Bayesian framework.

\subsection{Other Practical Considerations}
Each matched-pair design discussed in this section has a counterpart where units are matched into sets of four instead of pairs. Specifically, I first pair the units and then pair the pairs using the midpoints of all pairs, as in Section 4 of \cite{bai2021inference}. Such a design is also discussed by \cite{athey2017econometrics}. It often increases the MSE relative to its paired version but often improves inference for the ATE, especially with multiple covariates. In particular, simulation evidence in Section \ref{sec:sims} shows that, with multiple covariates, the test with matched sets of four usually has the correct size but the test with matched pairs often severely underrejects. I refer interested readers to Section \ref{sec:asymptotic-mult} of the supplement for a detailed discussion.

A frequent concern in experiments is attrition, meaning units in the baseline survey may drop out in the follow-up survey, so their covariates are available but outcomes are not. I emphasize that even when a unit attrites, the entire pair may not need to be dropped. If attrition happens, then I redefine the difference-in-means estimator using only non-attritors. If attrition is independent of treatment status conditional on the covariates, then this estimator is consistent for the ATE for non-attritors. I refer interested readers to Section \ref{sec:attrition} of the supplement for details. The case with differential attrition, as in the setting of \cite{lee2009training}, is an interesting topic for future work.

Another related question that frequently arises in the design of experiments is that some studies are implemented in multiple waves. Although a full-length discussion of such settings is beyond the scope of the paper, a possible solution is to implement the procedures discussed in this section repeatedly. For instance, in the first wave, researchers could pair the units according to their baseline outcomes. In the second wave, they could use the data from the first wave as pilot data, and implement the pilot-based procedures discussed earlier in this section. They can repeatedly implement the pilot-based procedures in the following waves. In Section \ref{sec:pooled} of the supplement, I discuss how to pool the data from multiple waves for estimation and inference.

\section{Asymptotic Results and Inference} \label{sec:asymptotic}
The optimality result in Section \ref{sec:oracle} pinpoints the optimal stratification but is silent on how the feasible procedures in Section \ref{sec:practical} compare with each other. To make such a comparison, this section studies the asymptotic properties of the difference-in-means estimator. I also provide inference methods for the ATE under different stratifications. The main difficulty in deriving the theoretical results is that under matched-pair designs, treatment status across units is heavily dependent; in fact, treatment status of the two units in a pair is perfectly correlated. I extend the results in \cite{bai2021inference} by allowing units to be paired according to functions of the covariates instead of the covariates themselves, and furthermore allowing the function to be random and dependent on auxiliary data. To begin, I make the following mild moment restriction on the distributions of potential outcomes:

\begin{assumption} \rm \label{as:moments}
$E[Y_i^2(d)] < \infty$ for $d \in \{0, 1\}$.
\end{assumption}

\subsection{Pairing on Nonrandom Functions}
I provide general results when units are paired according to a measurable function $h$ that maps from the support of $X_i$ into $\mathbf R$. The results can be easily specialized to the procedures in Section \ref{sec:practical}. Let $\pi^h \in \Pi_n$ be such that $h_{\pi^h(1)} \leq \dots \leq h_{\pi^h(2n)}$ and define the stratification that pairs units according to $h$ as
\[ \lambda^h(X^{(n)}) = \{\{\pi^h(2s - 1), \pi^h(2s)\}: 1 \leq s \leq n\}~. \]

To describe the requirements on $h$, define $\mathbf H$ to be the set of all measurable functions mapping from the support of $X_i$ into $\mathbf R$ such that the following three conditions hold:
\begin{enumerate}[\rm (a)]
\item $0 < E[\var[Y_i(d) | h(X_i)]]$ for $d \in \{0, 1\}$.
\item $E[Y_i^r(d) | h(X_i) = z]$ is Lipschitz in $z$ for $r = 1, 2$ and $d = 0, 1$.
\item $E[h^2(X_i)] < \infty$.
\end{enumerate}
\noindent (a) is a mild restriction to rule out degenerate situations and to permit the application of suitable laws of large numbers and central limit theorems, and (c) is another mild moment restriction to ensure the pairs are ``close'' in the limit. Some restrictive primitive conditions for (b) are provided in Section \ref{sec:geometry} of the supplement. I assume $h$ lies in the set $\mathbf H$:

\begin{assumption} \rm \label{as:H}
$h \in \mathbf H$.
\end{assumption}

The next theorem establishes the limiting distribution of $\hat \theta_n$ when units are paired according to $h$, where $h$ satisfies Assumption \ref{as:H}.

\begin{theorem} \label{thm:limit}
Suppose the treatment assignment scheme satisfies Assumption \ref{as:half}, the distribution of the data satisfies Assumption \ref{as:moments}, and $h$ satisfies Assumption \ref{as:H}. Then, when units are paired according to $h$, as $n \to \infty$,
\[ \sqrt n (\hat \theta_n - \theta) \stackrel{d}{\to} N(0, \varsigma_h^2)~, \]
where
\begin{equation} \label{eq:limit}
\varsigma_h^2 = \var[Y_i(1)] + \var[Y_i(0)] - \frac{1}{2} E[(E[Y_i(1) + Y_i(0) | h(X_i)] - E[Y_i(1) + Y_i(0)])^2]~.
\end{equation}
\end{theorem}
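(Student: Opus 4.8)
The plan is to split $\sqrt n(\hat\theta_n-\theta)$ into a randomization component and a sampling component and to show these are asymptotically independent normals whose variances sum to $\varsigma_h^2$. Relabel the units so that $h(X_1)\le\cdots\le h(X_{2n})$ and pair $j$ consists of units $2j-1$ and $2j$; let $B_j\in\{0,1\}$ be the fair coin assigning treatment within pair $j$, and write $\mathcal F$ for the $\sigma$-field generated by $(Y^{(n)}(1),Y^{(n)}(0),X^{(n)})$. Setting $\tilde Y_i=Y_i(1)+Y_i(0)$ and $C_j=\tilde Y_{2j-1}-\tilde Y_{2j}$, \eqref{eq:diffim-d} together with $E[\hat\theta_n\mid\mathcal F]=\tfrac1{2n}\sum_{i}(Y_i(1)-Y_i(0))$ yields
$$\sqrt n(\hat\theta_n-\theta)=\underbrace{\frac{1}{\sqrt n}\sum_{j=1}^n\Big(B_j-\tfrac12\Big)C_j}_{A_n}+\underbrace{\frac{1}{\sqrt n}\Big(\tfrac12\sum_{i=1}^{2n}(Y_i(1)-Y_i(0))-n\theta\Big)}_{B_n}.$$
The structural facts I exploit are that $A_n$ has mean zero given $\mathcal F$ and that $B_n$ is $\mathcal F$-measurable while the coins $(B_j)$ are independent of $\mathcal F$ by Assumption~\ref{as:half}.

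The term $B_n$ is immediate: the $Y_i(1)-Y_i(0)$ are i.i.d. with mean $\theta$ and, by Assumption~\ref{as:moments}, finite variance, so $B_n\stackrel{d}{\to}N(0,\tfrac12\var[Y_i(1)-Y_i(0)])$. The work is in $A_n$. Conditional on $\mathcal F$ its summands are independent and mean zero with variance $\tfrac14 C_j^2$, so its conditional variance is $\tfrac1{4n}\sum_j C_j^2$, and I must show this converges in probability to $\sigma_A^2:=\tfrac12 E[\var[\tilde Y_i\mid h(X_i)]]$. Writing $\tilde Y_i=\bar g(h(X_i))+u_i$ with $\bar g(z)=E[\tilde Y_i\mid h(X_i)=z]$ and $E[u_i\mid h(X_i)]=0$ gives $C_j=(\bar g(h(X_{2j-1}))-\bar g(h(X_{2j})))+(u_{2j-1}-u_{2j})$. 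Since $\bar g$ is Lipschitz by condition (b), the conditional-mean part contributes at most $L^2\,\tfrac1n\sum_j(h(X_{2j-1})-h(X_{2j}))^2$, which vanishes in probability by the matched-pairs ``closeness'' lemma (the extension of the matching lemmas of \cite{bai2021inference} to the sorted index $h(X_i)$, valid under condition (c)), and the cross term then vanishes by Cauchy--Schwarz. For the residual part, conditional on $(h(X_1),\dots,h(X_{2n}))$ the pairing is deterministic and the $u_i$ are independent across $i$, so $\tfrac1n\sum_j(u_{2j-1}-u_{2j})^2$ has conditional mean $\tfrac1n\sum_{i}\var[\tilde Y_i\mid h(X_i)]\to 2E[\var[\tilde Y_i\mid h(X_i)]]$ by the law of large numbers.

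With the conditional variance pinned down, I conclude by a conditional central limit argument. Given $\mathcal F$, $A_n$ is a sum of independent, bounded-coefficient Bernoulli perturbations satisfying the Lindeberg condition, because $\tfrac1n\sum_j C_j^2=O_p(1)$ while $\max_j C_j^2\le 4\max_i\tilde Y_i^2=o_p(n)$ under Assumption~\ref{as:moments}; hence $E[e^{itA_n}\mid\mathcal F]\to e^{-t^2\sigma_A^2/2}$ in probability. Because this limit is nonrandom and $B_n$ is $\mathcal F$-measurable, factoring $E[e^{itA_n+isB_n}]=E[e^{isB_n}E[e^{itA_n}\mid\mathcal F]]$ shows $(A_n,B_n)$ is asymptotically bivariate normal with independent coordinates, so $\sqrt n(\hat\theta_n-\theta)\stackrel{d}{\to}N(0,\sigma_A^2+\tfrac12\var[Y_i(1)-Y_i(0)])$. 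Finally, the identities $\var[\tilde Y_i]=\var[Y_i(1)]+\var[Y_i(0)]+2\,\mathrm{Cov}[Y_i(1),Y_i(0)]$ and $E[\var[\tilde Y_i\mid h(X_i)]]=\var[\tilde Y_i]-\var[E[\tilde Y_i\mid h(X_i)]]$ reduce this variance to \eqref{eq:limit}.

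I expect the main obstacle to be the convergence $\tfrac1n\sum_j(u_{2j-1}-u_{2j})^2\to 2E[\var[\tilde Y_i\mid h(X_i)]]$ in probability under only the second-moment Assumption~\ref{as:moments}: bounding the conditional variance of this average directly would require fourth moments of $\tilde Y_i$, so one must instead truncate $\tilde Y_i$ at a level $M$, prove the convergence for the truncated variables (where conditional independence across pairs given the $h$-values makes the variance bound elementary), and let $M\to\infty$, invoking condition (a) to rule out degeneracy. This truncation, together with the underlying lemma forcing $\tfrac1n\sum_j(h(X_{2j-1})-h(X_{2j}))^2\to 0$, is the technical heart of the argument and is exactly where the extension of \cite{bai2021inference} to pairing on a function $h$ of the covariates does its work.
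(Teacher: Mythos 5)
Your proof is correct, but it takes a genuinely different route from the paper's. The paper proves Theorem \ref{thm:limit} as the $\tau = \tfrac12$ case of Lemma \ref{lem:tau-limit}: it conditions on $(h^{(n)}, D^{(n)})$ and decomposes $\sqrt n(\hat\theta_n - \theta)$ into four terms --- the treated-side and control-side fluctuations of outcomes around their conditional means given $h(X_i)$ (to each of which it applies a conditional Lindeberg CLT, verified by a truncation-and-subsequence argument), plus the fluctuation of those conditional means, which is reduced to an i.i.d.\ sum via the closeness Lemma \ref{lem:close} and handled by the ordinary CLT; the pieces are then glued together with Lemma \ref{lem:cond-convd} and P\'olya's theorem. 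You instead condition on the full data $\mathcal F = \sigma(Y^{(n)}(1), Y^{(n)}(0), X^{(n)})$ and split the statistic into a pure randomization component (coin flips times within-pair differences of $Y_i(1)+Y_i(0)$) and a pure sampling component (the sample average of $Y_i(1)-Y_i(0)$ around $\theta$) --- a design-versus-sampling decomposition in the spirit of finite-population causal inference. This buys you two things: the Lindeberg verification for the randomization part becomes trivial, since the summands are bounded coin-flip perturbations with $\max_j C_j^2 = o_p(n)$, so all of the second-moment difficulty is isolated in the single quadratic law of large numbers $\tfrac{1}{4n}\sum_j C_j^2 \stackrel{P}{\to} \tfrac12 E[\var[Y_i(1)+Y_i(0) \mid h(X_i)]]$ (your truncation step, which parallels the argument the paper itself uses to prove Theorem \ref{thm:se}); and the characteristic-function factorization replaces the paper's subsequencing-plus-P\'olya machinery for joining the conditional and unconditional limits. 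What the paper's decomposition buys is, first, generality --- Lemma \ref{lem:tau-limit} covers treated fractions $\tau = l/k$, where your coin-flip representation would need reworking --- and, second, exact alignment with Assumption \ref{as:H}(a): the paper applies its CLTs to the $Y_i(1)$- and $Y_i(0)$-fluctuations separately, whose non-degeneracy is literally what condition (a) assumes. On that point your write-up has one small slip: condition (a) does \emph{not} rule out $E[\var[Y_i(1)+Y_i(0) \mid h(X_i)]] = 0$ (take $Y_i(0) = -Y_i(1) + f(h(X_i))$), so you cannot invoke it ``to rule out degeneracy'' of $\sigma_A^2$; instead handle that case directly --- if $\sigma_A^2 = 0$ the conditional variance of $A_n$ vanishes, $A_n \stackrel{P}{\to} 0$ by conditional Chebyshev, and your characteristic-function argument goes through unchanged with $e^{-t^2\sigma_A^2/2} = 1$. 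This is a one-line patch, not a gap. Your closing variance arithmetic, $\tfrac12 E[\var[Y_i(1)+Y_i(0)\mid h(X_i)]] + \tfrac12\var[Y_i(1)-Y_i(0)] = \var[Y_i(1)] + \var[Y_i(0)] - \tfrac12\var[E[Y_i(1)+Y_i(0)\mid h(X_i)]]$, is correct and matches \eqref{eq:limit}.
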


\begin{remark} \rm \label{remark:loss}
In Section \ref{sec:tau} of the supplement, I show the minimum of $\varsigma_h^2$ over $h \in \mathbf H$ occurs when $h = g$. The law of iterated expectations implies
\[ \varsigma_h^2 = \var[Y_i(1)] + \var[Y_i(0)] - \frac{1}{2} \var[g(X_i)] + \frac{1}{2} E[\var[g(X_i) | h(X_i)]]~, \]
so the increase in the limiting variance when pairing according to $h$ instead of $g$ is proportional to $E[\var[g(X_i) | h(X_i)]]$, the average conditional variance of $g(X_i)$ given $h(X_i)$. Therefore, among all functions $h \in \mathbf H$, choosing an $h$ that minimizes $E[\var[g(X_i) | h(X_i)]]$ is optimal. Intuitively, the optimal $h$ explains the largest proportion of the variation in $Y(1)$ and $Y(0)$.
\end{remark}

\begin{remark} \rm \label{remark:compare}
Theorem \ref{thm:limit} immediately leads to three insights on the comparison of different treatment assignment schemes:
\begin{enumerate}[(a)]
\item Stratifications with a small number of large strata can be characterized by a function $h$ mapping from the support of $X_i$ into $\{1, \dots, S\}$, such that unit $i$ is in stratum $s$ if and only if $h(X_i) = s$. \cite{bugni2018inference} show the limiting variance of $\hat \theta_n$ under such a stratification equals $\varsigma_h^2$. Therefore, $\varsigma_h^2 > \varsigma_g^2$ unless $g(X_i) = E[g(X_i) | h(X_i)]$ with probability one, which means $g(X_i)$ is constant within each stratum.
\item The stratification $\{\{1, \dots, 2n\}\}$ with all units in one stratum can be written as $\lambda^{h_c}(X^{(n)})$, where $h_c$ is a constant function. For any $h$ that satisfies Assumption \ref{as:H}, $\varsigma_{h_c}^2 > \varsigma_h^2$ unless $E[g(X_i) | h(X_i)]$ is constant with probability one. As a result, in terms of the limiting variance of $\hat \theta_n$, any stratification is weakly better than not stratifying at all.
\item It follows from straightforward calculation that for any $h \in \mathbf H$, $\varsigma_h^2$ is weakly less than and typically strictly less than the limiting variance of $\hat \theta_n$ when treatment status is determined by i.i.d.\ coin flips.
\end{enumerate}
Theorem \ref{thm:tabord} in the supplement studies a procedure that ``breaks up'' a stratification with a small number of large strata. I further allow the treated fractions to vary across strata. I show the limiting variance of $\hat \theta_n$ is weakly smaller if I implement small-strata designs similar to the ones described in Remark \ref{remark:tau} separately within each stratum.
\end{remark}

Next, I consider inference for the ATE when units are paired according to $h \in \mathbf H$. For any prespecified $\theta_0 \in \mathbf R$, I am interested in testing
\begin{equation} \label{eq:H0}
H_0: \theta = \theta_0 \text{ versus } H_1: \theta \neq \theta_0
\end{equation}
at level $\alpha \in (0, 1)$. To do so, it suffices to provide a consistent estimator for the limiting variance $\varsigma_h^2$ in \eqref{eq:limit}. To describe such an estimator, for $d \in \{0, 1\}$, define the variance estimator among units with $D = d$ as
\[ \hat \sigma_n^2(d) = \frac{1}{n} \sum_{1 \leq i \leq 2n: D_i = d} (Y_i - \hat \mu_n(d))^2~. \]
In addition, define
\begin{equation} \label{eq:correction}
\hat \rho_n = \frac{2}{n} \sum_{1 \leq j \leq \lfloor\frac{n}{2}\rfloor} (Y_{\pi^h(4j - 3)} + Y_{\pi^h(4j - 2)}) (Y_{\pi^h(4j - 1)} + Y_{\pi^h(4j)})
\end{equation}
and
\begin{equation} \label{eq:se}
\hat \varsigma_{h, n}^2 = \hat \sigma_n^2(1) + \hat \sigma_n^2(0) - \frac{1}{2} \hat \rho_n + \frac{1}{2} (\hat \mu_n(1) + \hat \mu_n(0))^2~.
\end{equation}
The calculation in \ref{sec:nonneg} of the supplement shows $\hat \varsigma_{h, n}^2$ is nonnegative. The correction term $\hat \rho_n$ is constructed by averaging the product of the sum of the outcomes of adjacent \emph{pairs of pairs}, as in \cite{bai2021inference}.

The following theorem shows the variance estimator in \eqref{eq:se} is consistent for the limiting variance in \eqref{eq:limit}.

\begin{theorem} \label{thm:se}
Suppose the treatment assignment scheme satisfies Assumption \ref{as:half}, the distribution of the data satisfies Assumption \ref{as:moments}, and $h$ satisfies Assumption \ref{as:H}. Then, when units are paired according to $h$, as $n \to \infty$, $\hat \varsigma_{h, n}^2$ defined in \eqref{eq:se} satisfies
\[ \hat \varsigma_{h, n}^2 \stackrel{P}{\to} \varsigma_h^2~. \]
\end{theorem}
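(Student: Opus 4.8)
The plan is to show that each stochastic ingredient of $\hat\varsigma_{h,n}^2$ in \eqref{eq:se} converges to its population counterpart and then to assemble the limits through an algebraic identity. Write $\tilde g(X_i) = E[Y_i(1) + Y_i(0) \mid h(X_i)]$. The three targets are $\hat\sigma_n^2(d) \stackrel{P}{\to} \var[Y_i(d)]$ for $d \in \{0,1\}$, the correction term $\hat\rho_n \stackrel{P}{\to} E[\tilde g(X_i)^2]$, and $\hat\mu_n(1) + \hat\mu_n(0) \stackrel{P}{\to} E[Y_i(1)] + E[Y_i(0)]$. Granting these, the conclusion is a short computation: since $E[\tilde g(X_i)] = E[Y_i(1)] + E[Y_i(0)]$ by the law of iterated expectations, one has $E[(\tilde g(X_i) - E[\tilde g(X_i)])^2] = E[\tilde g(X_i)^2] - (E[Y_i(1)] + E[Y_i(0)])^2$, so substituting the three limits into \eqref{eq:se} reproduces $\varsigma_h^2$ in \eqref{eq:limit} exactly, where the $+\frac12(\hat\mu_n(1)+\hat\mu_n(0))^2$ term supplies precisely the centering that turns $E[\tilde g^2]$ into a variance.

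The routine part is the convergence of $\hat\mu_n(d)$ and $\hat\sigma_n^2(d)$. Since each pair contains exactly one treated and one control unit, $\hat\mu_n(d)$ is an average over $n$ units of their realized potential outcomes $Y_i(d)$, and $\hat\sigma_n^2(d) = \frac{1}{n}\sum_{i: D_i = d} Y_i^2 - \hat\mu_n(d)^2$. Under Assumption \ref{as:half} the treated unit within each pair is drawn uniformly and independently across pairs, and under Assumption \ref{as:moments} the relevant second moments are finite, so a law of large numbers for matched-pair designs in the style of \cite{bai2021inference} gives $\hat\mu_n(d) \stackrel{P}{\to} E[Y_i(d)]$ and $\frac{1}{n}\sum_{i: D_i = d} Y_i^2 \stackrel{P}{\to} E[Y_i(d)^2]$, hence $\hat\sigma_n^2(d) \stackrel{P}{\to} \var[Y_i(d)]$.

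The heart of the argument, and the step I expect to be the main obstacle, is the limit of $\hat\rho_n$. Each summand in \eqref{eq:correction} multiplies the outcome sum of pair $2j-1$ by that of pair $2j$, and within each pair this sum equals one treated outcome plus one control outcome. First I would condition on $X^{(n)}$ and use that treatment is assigned independently across the two pairs (they are distinct strata), so the conditional expectation of the product factorizes into the product of the two conditional pair-sum expectations. Because the four units of a pair-of-pairs are adjacent in the $h$-ordering, the fact that adjacent order statistics of $h(X_i)$ become arbitrarily close, combined with the Lipschitz continuity of $E[Y_i^r(d) \mid h(X_i) = z]$ in condition (b) defining $\mathbf H$, drives each conditional pair-sum expectation toward $\tilde g$ evaluated at the common limiting $h$-value; hence each conditional product term converges toward $\tilde g(\cdot)^2$, and (noting $\tfrac{2}{n}\lfloor n/2\rfloor \to 1$) averaging over $j$ yields $E[\tilde g(X_i)^2]$.

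The delicate points are twofold: (i) controlling the approximation error from replacing the two distinct $h$-values within a pair by a single value, which is exactly where condition (b) and the closeness of adjacent order statistics enter; and (ii) establishing a law of large numbers for the pair-of-pairs averages even though the pairing is determined by the order statistics of $h$, which induces dependence across summands. I would dispatch both by following the technique of \cite{bai2021inference}: bound the within-pair-of-pairs contribution of the $h$-gaps via the Lipschitz condition and the moment condition (c) in $\mathbf H$, and separately show that the fluctuation of $\hat\rho_n$ around its conditional mean is $o_P(1)$ through a second-moment bound that relies on Assumption \ref{as:moments} and the near-independence of well-separated pairs-of-pairs. Combining this with the routine limits of the preceding paragraph and the identity of the first paragraph completes the proof.
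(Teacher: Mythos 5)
Your overall architecture coincides with the paper's proof: the paper likewise (i) gets $\hat\mu_n(d)$ and $\hat\sigma_n^2(d)$ from the matched-pair law of large numbers (Lemma 6.5 of \cite{bai2021inference}), (ii) computes $E[\hat\rho_n \mid h^{(n)}]$ by exploiting independence of treatment across the two pairs in each pair-of-pairs, so that the conditional mean of each summand equals $\tfrac14\bigl(g_h(h_{\pi^h(4j-3)})+g_h(h_{\pi^h(4j-2)})\bigr)\bigl(g_h(h_{\pi^h(4j-1)})+g_h(h_{\pi^h(4j)})\bigr)$, and then uses the Lipschitz condition together with the closeness of adjacent order statistics (Lemma \ref{lem:close}, which is where condition (c) of $\mathbf H$ enters) and the WLLN to conclude $E[\hat\rho_n\mid h^{(n)}] \stackrel{P}{\to} E[g_h^2(h(X_i))]$, and (iii) assembles the limit through exactly the algebraic identity you state. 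Up to that point your proposal is the paper's proof.

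The genuine gap is in your step (ii) for the fluctuation term $\hat\rho_n - E[\hat\rho_n \mid h^{(n)}]$, which you propose to kill with ``a second-moment bound that relies on Assumption \ref{as:moments}.'' Under Assumption \ref{as:moments} alone this does not go through naively: the summands of \eqref{eq:correction} are products $Y_iY_j$ across pairs, so the conditional variance of $\hat\rho_n$ is controlled by averages of \emph{products} of conditional second moments, of the form
\begin{equation*}
\frac{1}{n^2}\sum_{j} E[Y_{\pi^h(4j-3)}^2 \mid h^{(n)}]\, E[Y_{\pi^h(4j-1)}^2 \mid h^{(n)}] \;\lesssim\; \frac{1}{n^2}\sum_{1\le i \le 2n} \bigl(E[Y_i^2(d)\mid h(X_i)]\bigr)^2~,
\end{equation*}
and the natural LLN for the right-hand side requires $E\bigl[(E[Y_i^2(d)\mid h(X_i)])^2\bigr]<\infty$, a fourth-moment-type condition that is not implied by $E[Y_i^2(d)]<\infty$. (It can be rescued, e.g.\ by a Marcinkiewicz--Zygmund argument showing $n^{-2}\sum_i Z_i \to 0$ a.s.\ whenever $E[\sqrt{Z_i}]<\infty$, but that is an additional idea your sketch does not supply.) The paper avoids Chebyshev entirely: it establishes a conditional Lindeberg/uniform-integrability bound by truncating with indicators such as $I\{|Y_iY_j|>\lambda\}$ and dominating these by truncated second moments of each factor, and then runs a subsequence-plus-contradiction argument invoking the conditional WLLN of Lemma 6.3 in \cite{bai2021inference}, with boundedness of probabilities (hence uniform integrability) converting the conditional convergence into the unconditional statement. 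Without either the truncation argument or the Marcinkiewicz--Zygmund device, your step (ii) fails under the stated assumptions, and this is precisely the step the paper's proof spends most of its effort on.
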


\begin{remark} \rm \label{remark:se}
The correction term $\hat \rho_n$ in \eqref{eq:correction} is crucial for the consistency of $\hat \varsigma_{h, n}^2$ in \eqref{eq:se}. In commonly-used tests including the two-sample $t$-test \citep{riach2002field,gelman2006data,duflo2007using} and the ``matched pairs'' $t$-test \citep{moses2006matched,hsu2007paired,armitage2008statistical,imbens2015causal,athey2017econometrics}, the test statistics are studentized by variance estimators whose limits in probability are weakly greater than $\varsigma_h^2$, so these tests are asymptotically conservative in the sense that the limiting size is no greater than and typically strictly less than the nominal level. For instance, a 5\%-level test could have a size of 1\%. In fact, the limiting size of the ``matched pairs'' $t$-test is strictly less than the nominal level unless \eqref{eq:hom} holds. I refer interested readers to \cite{bai2021inference} for details.
\end{remark}

\begin{remark} \rm
Let $\tilde h_m$ be a function of the pilot data such that $\tilde h_m \in \mathbf H$ with probability one. Then, the proof of Theorems \ref{thm:limit}--\ref{thm:se} implies the conclusions therein hold for $h = \tilde h_m$ conditional on the pilot data with probability one. Because probabilities are bounded between 0 and 1 and hence are uniformly integrable, the same conclusions hold unconditionally too. In particular, the variance estimator in \eqref{eq:se} is valid even when the sample size of the pilot experiment is small and fixed.
\end{remark}

\subsection{Pairing on Random Functions}
The discussion in the last subsection applies to settings where units are paired according to a fixed function $h \in \mathbf H$ or a random function $\tilde h_m$ such that $\tilde h_m \in \mathbf H$ with probability one. Such settings are most relevant when the pilot sample size $m$ is small. Next, I consider settings where $\tilde h_m$ converges to a fixed function $h \in \mathbf H$ in a suitable sense as $m \to \infty$. Let $Q_X$ denote the marginal distribution of $X_i$.

\begin{assumption} \rm \label{as:l2}
$\tilde h_m$ is a random function depending on the auxiliary data that maps from the support of $X_i$ into $\mathbf R$, and satisfies
\[ \int |\tilde h_m(x) - h(x)|^2 Q_X(d x) \stackrel{P}{\to} 0 \]
as $m \to \infty$.
\end{assumption}

\noindent Assumption \ref{as:l2} is commonly referred to as the $L^2$-consistency of the $\tilde h_m$ for $h$. When the dimension of $X_i$ is fixed and suitable smoothness conditions hold, $L^2$-consistency is satisfied by series and sieves estimators \citep{newey1997convergence,chen2007large} and kernel estimators \citep{li2007nonparametric}. In some high-dimensional settings, when the dimension of $X_i$ increases with $n$ at suitable rates, it is satisfied by the LASSO estimator \citep{buhlmann2011statistics,belloni2014inference}, regression trees and random forests \citep{gyorfi2002distribution-free,wager2015adaptive}, neural nets \citep{white1990connectionist,farrell2018deep}, and support vector machines \citep{steinwart2008support}. The results therein are either exactly as stated in Assumption \ref{as:l2} or one of the following:
\begin{enumerate}[\rm (a)]
\item $\sup\limits_x |\tilde h_m(x) - h(x)| \stackrel{P}{\to} 0$ as $m \to \infty$.
\item $E[|\tilde h_m(x) - h(x)|^2] \to 0$ as $m \to \infty$.
\end{enumerate}
It is straightforward to see (a) implies Assumption \ref{as:l2}. Furthermore, (b) implies Assumption \ref{as:l2} by Markov's inequality.

The next theorem shows that if $\tilde h_m$ is $L^2$-consistent for $h$, then as the sample sizes of both the pilot and main experiments increase, the limiting variance of $\hat \theta_n$ when units are paired according to $\tilde h_m$ is the same as that when units are paired according to $h$.

\begin{theorem} \label{thm:l2}
Suppose the treatment assignment scheme satisfies Assumption \ref{as:half}, the distribution of the data satisfies Assumption \ref{as:moments}, $h$ satisfies Assumption \ref{as:H}, and $\tilde h_m$ satisfies Assumption \ref{as:l2}. Then, when units are paired according to $\tilde h_m$, as $m, n \to \infty$,
\[ \sqrt n (\hat \theta_n - \theta) \stackrel{d}{\to} N(0, \varsigma_h^2) \]
and
\[ \hat \varsigma_{\tilde h_m, n}^2 \stackrel{P}{\to} \varsigma_h^2~. \]
\end{theorem}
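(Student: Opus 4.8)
The plan is to show that pairing on the random, $L^2$-consistent $\tilde h_m$ is asymptotically indistinguishable from pairing on its fixed limit $h$, so that the proof architecture behind Theorems \ref{thm:limit} and \ref{thm:se} goes through with the limiting variance $\varsigma_h^2$ rather than $\varsigma_{\tilde h_m}^2$. The starting observation is that, inspecting those proofs, the function used for pairing enters the limits only through one channel: the requirement that adjacent units in the ordering have nearly identical values of the relevant conditional-moment functions $z \mapsto E[Y_i^r(d) | h(X_i) = z]$ for $r = 1, 2$ and $d \in \{0,1\}$. Once this within-pair closeness holds, the negative-within-pair, zero-across-pair correlation structure of $D^{(n)}$ under Assumption \ref{as:half}, which is identical for every pairing, drives both the central limit theorem and the consistency of $\hat \varsigma_{\tilde h_m, n}^2$ exactly as before. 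I would therefore reduce the theorem to a single key estimate and then re-run the fixed-function arguments with $\pi^h$ replaced by $\pi^{\tilde h_m}$; re-deriving rather than invoking Theorem \ref{thm:limit} as a black box also sidesteps the need to verify $\tilde h_m \in \mathbf H$.

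The crux is the following estimate: for every function $\phi$ that is Lipschitz in $z = h(x)$,
\[
\frac{1}{n}\sum_{1 \le s \le n}\bigl(\phi(X_{\pi^{\tilde h_m}(2s-1)}) - \phi(X_{\pi^{\tilde h_m}(2s)})\bigr)^2 \stackrel{P}{\to} 0 \quad\text{as } m, n \to \infty~.
\]
Writing $i, j$ for the two units of pair $s$, the Lipschitz property (Assumption \ref{as:H}(b)) bounds $|\phi(X_i) - \phi(X_j)|$ by a constant times $|h(X_i) - \tilde h_m(X_i)| + |\tilde h_m(X_i) - \tilde h_m(X_j)| + |\tilde h_m(X_j) - h(X_j)|$. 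Squaring, using $(a+b+c)^2 \le 3(a^2+b^2+c^2)$, and summing over pairs splits the bound into three averages. The two approximation averages reduce to $\frac{1}{2n}\sum_{i} |\tilde h_m(X_i) - h(X_i)|^2$, which converges in probability to $\int |\tilde h_m(x) - h(x)|^2 Q_X(dx)$ by a law of large numbers over the main sample conditional on the pilot, and hence to zero by Assumption \ref{as:l2}. The remaining adjacency average $\frac{1}{n}\sum_s (\tilde h_m(X_{\pi^{\tilde h_m}(2s-1)}) - \tilde h_m(X_{\pi^{\tilde h_m}(2s)}))^2$ is the sum of squared consecutive spacings of the order statistics of $\{\tilde h_m(X_i)\}$; it vanishes by the same spacings argument used in the fixed-$h$ case, for which I need only a finite second moment of $\tilde h_m(X_i)$. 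That moment is finite eventually because $\int \tilde h_m^2(x) Q_X(dx) \le 2\int (\tilde h_m(x) - h(x))^2 Q_X(dx) + 2 E[h^2(X_i)]$, and both terms are finite by Assumptions \ref{as:l2} and \ref{as:H}(c). Applying the estimate to $\phi(x) = E[Y_i(1) + Y_i(0) | h(X_i) = h(x)]$ produces the matching term in \eqref{eq:limit}, and applying it to the remaining first- and second-moment functions delivers the other pieces of $\varsigma_h^2$ and of the probability limit of $\hat \varsigma_{\tilde h_m, n}^2$ in \eqref{eq:se}.

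The main obstacle I anticipate is handling the joint limit $m, n \to \infty$ cleanly, since $\tilde h_m$ is random and the approximation average couples the two indices. I would discharge this with a subsequence-plus-conditioning argument: by Assumption \ref{as:l2}, along any subsequence there is a further subsequence along which $\int |\tilde h_m(x) - h(x)|^2 Q_X(dx) \to 0$ almost surely, so that, conditional on almost every realization of the pilot data, $\tilde h_m$ becomes a deterministic sequence of functions converging to $h$ in $L^2(Q_X)$. Conditional on the pilot the main-experiment data are i.i.d., the key estimate becomes a deterministic-function statement, and the fixed-function machinery yields the conditional conclusions. To pass from conditional to unconditional convergence in distribution I would use characteristic functions together with bounded convergence: the conditional characteristic function of $\sqrt n(\hat \theta_n - \theta)$ converges to $\exp(-t^2 \varsigma_h^2 / 2)$, and since characteristic functions are bounded by one, the unconditional one converges to the same limit; the probability limit of $\hat \varsigma_{\tilde h_m, n}^2$ follows by the same passage. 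The point requiring the most care is the uniform control of the empirical-to-population gap $|\frac{1}{2n}\sum_i |\tilde h_m(X_i) - h(X_i)|^2 - \int |\tilde h_m(x) - h(x)|^2 Q_X(dx)|$ across the random $\tilde h_m$, which I expect to control through a moment bound on this difference conditional on the pilot, and it is here that the second-moment conditions enter most forcefully.
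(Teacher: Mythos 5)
Your proposal is correct and follows essentially the same route as the paper: your key estimate is exactly the paper's Lemma \ref{lem:hat} (within-pair closeness of the $h$-values under the $\tilde h_m$-ordering, proved by the same triangle-inequality split into approximation terms controlled by a conditional Markov/moment bound and a spacings/maximum term), and it is then fed into the fixed-function CLT and variance-estimator machinery exactly as in the paper's proof, which likewise re-derives rather than invokes Theorem \ref{thm:limit} so that $\tilde h_m \in \mathbf H$ need not be verified, and which uses the same subsequence-plus-conditioning device to pass from conditional to unconditional limits. The only cosmetic differences are that you phrase the closeness estimate directly for Lipschitz functions of $h(x)$ rather than for $h$ itself, and you use characteristic functions with bounded convergence where the paper uses P\'olya's theorem together with dominated convergence of conditional probabilities.
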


\begin{remark} \rm
Note the assumptions in Theorems \ref{thm:limit}--\ref{thm:se} and those in \ref{thm:l2} are non-nested and differ in whether the sample size of the pilot experiment stays fixed or goes to infinity in the asymptotic framework. Theorems \ref{thm:limit} and \ref{thm:se} do not require $\tilde h_m$ to be consistent for any fixed function and allow $m$ to be fixed asymptotically, but require $\tilde h_m \in \mathbf H$ with probability one. On the other hand, Theorem \ref{thm:l2} does not require $\tilde h_m \in \mathbf H$ but requires $m \to \infty$ and $\tilde h_m$ to be $L^2$-consistent for $h$. 
\end{remark}

\subsection{Pairing on Multiple Covariates} \label{sec:asymptotic-mult}
We briefly comment on inference when units are paired according to multiple covariates. For the settings in \eqref{eq:maha} and \eqref{eq:pen-metric}, the variance estimators are slightly more complicated than that in \eqref{eq:se} because the distances in \eqref{eq:maha} and \eqref{eq:pen-metric} cannot be written as distances between two scalars, but the correction term is similar in spirit to \eqref{eq:correction}. I defer the discussion to Section \ref{sec:pen-detail} of the supplement. In addition, note combining data from both the pilot and main experiments for estimation and inference is possible. I defer the discussion to Section \ref{sec:pooled} of the supplement.

When units are paired using multiple covariates, the simulation evidence in Section \ref{sec:sims} shows that when the sample size is not large enough relative to the number of covariates, the size of the test is often strictly smaller than the nominal level. The reason is that the asymptotic results rely on the assumption that units are ``close,'' in the sense that a suitable normalization of the sum of distances between the covariates within each pair is close to zero. When the sample size is not large enough relative to the number of covariates, the procedures here suffer from the curse of dimensionality, so the units paired together are not close enough in terms of their covariates, and hence, the asymptotic results do not approximate the finite-sample distribution of $\hat \theta_n$ very well. The problem is mitigated by matching units into sets of four instead of pairs. Specifically, I first pair the units and then pair the pairs using the midpoints of all pairs, as in Section 4 of \cite{bai2021inference}. In Section \ref{sec:four} of the supplement, I propose a valid test for \eqref{eq:H0} when units are matched into sets of four. Simulation evidence in Section \ref{sec:sims} shows the size of my proposed test is close to the nominal level in finite sample.

\section{Simulation} \label{sec:sims}
In this section, I examine the performance of the practical procedures in Section \ref{sec:practical} and the inference methods in Section \ref{sec:asymptotic} via a simulation study calibrated to a systematically selected set of 10 RCTs from recent issues of the \textit{American Economic Journal: Applied Economics}. I focus on settings with small or no pilots, because they are the most common settings in practice. I searched the 11 issues from October 2018 to April 2021 and collected 28 papers running RCTs. I exclude 11 papers for which the treatment is assigned at the cluster level instead of the unit level. I further exclude four papers with a network/spillover structure. I also exclude one paper for which the sample size is too small (less than 20). Finally, I exclude two papers for which the data is confidential. I end up with 10 papers, which are listed in Table \ref{table:papers}. For each paper, I list whether the baseline outcome is available, the original randomization method, and the number of additional covariates besides the baseline outcome in the main regression specification of the paper. The full details of the data are available in Section \ref{sec:sims-supp} of the supplement.

\begin{table}[ht]
\caption{Papers selected for the simulation study}
\begin{adjustbox}{max width=0.8\linewidth,center}
\begin{tabular}{llll}
\hline\hline
Paper & Baseline & Original Stratification & \# of Covariates\\ 
\hline
1.\ \cite{herskowitz2021gambling} & $\times$ & none & 4 \\
\addlinespace
2.\ \cite{lee2021poverty} & \checkmark & re-randomization & 4 \\
\addlinespace
3.\ \cite{abel2020value} & \checkmark & gender & 7\\
\addlinespace
4.\ \cite{gerber2020one} & \checkmark & state & 11\\
\addlinespace
5.\ \cite{deserranno2019leader} & \checkmark & none & 1\\
\addlinespace
6.\ \cite{barrera-osorio2019medium-} & $\times$ & baseline grade, gender & 16\\
\addlinespace
7.\ \cite{himmler2019soft} & $\times$ & GPA (4 strata)& 8\\
\addlinespace
8.\ \cite{abel2019bridging} & \checkmark & none & 10\\
\addlinespace
9.\ \cite{de_mel2019labor} & \checkmark & region, sector & 7\\
\addlinespace
10.\ \cite{lafortune2018role} & \checkmark & none & 8\\
\hline\hline
\end{tabular}
\end{adjustbox}
\begin{tablenotes}
\item For each paper, I list whether the baseline outcome is available, the original stratification method, and the number of covariates besides the baseline outcome in the main regression specification of the paper. \cite{lee2021poverty} assign treatment status by re-randomization, and the other papers use stratified randomization (possibly with only one stratum).
\end{tablenotes}
\label{table:papers}
\end{table}

For each paper, I denote the sample size by $2n$. I use the original sample except for \cite{barrera-osorio2019medium-}, where the original data contains 15,759 observations and 16 covariates, so one replication in the simulation takes almost six hours. For \cite{barrera-osorio2019medium-} only, I take half of the observations as the population to reduce the computational time for one replication to about an hour, which is about the same as that using the next largest dataset. I begin by imputing the unobserved potential outcomes. For the $i$th unit, I denote the original data by $(Y_i^\ast, D_i^\ast, X_{1i}^\ast, X_{2i}^\ast)$, where $Y_i^\ast$ denotes its observed outcome, $D_i^\ast$ denotes its treatment status, $X_{1i}^\ast$ denotes its baseline outcome if available, and $X_{2i}^\ast$ denotes the other covariates in the main regression specification of the paper. Let $Y_i^\ast(1), Y_i^\ast(0)$ denote the potential outcomes for the $i$th unit. For the $i$th unit, $Y_i^\ast(D_i)$ is observed, and I construct $Y_i^\ast(1 - D_i)$ according to the following models:
\begin{itemize}
	\item[] Model 1: $Y_i^\ast(1) = Y_i^\ast(0)$.
	\item[] Model 2: $Y_i^\ast(1 - D_i) = Y_{j(i)}^\ast$, where the $j(i)$-th unit is the closest unit to the $i$th unit in terms of the Mahalanobis distance of $(X_1^\ast, X_2^\ast)$ among units with $D_j \neq D_i$.
	\item[] Model 3: $Y_i^\ast(1 - D_i) = Y_{j(i)}^\ast$, where the $j(i)$-th unit is the closest unit to the $i$th unit in terms of the baseline outcome $X_1^\ast$, if available, among units with $D_j \neq D_i$.
\end{itemize}
In Model 1, treatment effects are homogeneous, so \eqref{eq:hom} holds. In Models 2 and 3, treatment effects are heterogeneous. Note the baseline outcome predicts the potential outcomes better in Model 3 than in Model 2.

For each replication, I simulate new data $((Y_i(1), Y_i(0), X_{i1}, X_{2i}): 1 \leq i \leq 2n)$ by drawing $2n$ units from the empirical distribution of $((Y_i^\ast(1), Y_i^\ast(0), X_{1i}^\ast, X_{2i}^\ast): 1 \leq i \leq 2n)$, so each unit in the original data is drawn with equal probability, with replacement.

For each paper, I implement several stratifications. Note the covariates may have been selected ex-post by authors on the basis of predictive power, while ideally I would like to include only the covariates specified in the pre-analysis plans. Unfortunately, only one paper includes such information in the AEA RCT Registry, and the pre-registered covariates are the same as those in the regression analysis. I stratify on the baseline outcome whenever it is available. To stratify based on data from pilot experiments, I reached out to the authors of all 10 papers to request pilot data. 9 out of 10 replied, 8 of whom said they did not run a pilot, and the last said they ran a pilot, but the data was lost. Therefore, I simulate pilot data by drawing with replacement from the empirical distribution at a sample size of $\lfloor 0.2 \cdot (2n) \rfloor$. To study the setting with a small and fixed pilot, I fix the pilot data throughout all replications. I also consider several stratifications with matched sets of four, as in \cite{athey2017econometrics}. Specifically, strata are constructed by first pairing the units and then pairing the pairs according to their midpoints, as in Section 4 of \cite{bai2021inference}. The complete list of stratifications are:
\begin{enumerate}[(a)]
\item MP X: Matched pairs to minimize the sum of the squared Mahalanobis distances in \eqref{eq:maha} of all covariates $X$.
\item MS X: Matched sets of four to minimize the sum of the squared Mahalanobis distances of $X$.
\item MP base: Matched pairs according to the baseline outcome, if available.
\item MS base: Matched sets of four according to the baseline outcome, if available.
\item MP X2: Matched pairs to minimize the sum of the squared Mahalanobis distances of $X_2$, namely, all covariates in the main regression specification except the baseline outcome.
\item MP pilot: Matched pairs according to $\tilde g_m$ from the pilot, where $\tilde g_m$ is given by the OLS.
\item MP pen: The penalized matched pairs given by minimizing the sum of the distances in \eqref{eq:pen-metric} of all covariates.
\item Origin: Stratification used in the original paper, if not one of (a) through (g).
\item None: No stratification, meaning all units are in one stratum and exactly half are treated.
\item[(i')] None-reg: No stratification with the estimator given by the OLS estimator of the coefficient on $D$ in the linear regression of $Y$ on a constant, $D$, and $X$.
\end{enumerate}

The original stratifications are listed in Table \ref{table:papers}. I do not consider re-randomization in \cite{lee2021poverty}, because the exact implementation is unclear from the original paper, and inference under re-randomization is complicated. See also Remark \ref{remark:rerand} for a comparison between re-randomization and matched-pair designs. Section \ref{sec:sims-supp} of the supplement contains the results for several additional stratifications. Although it is interesting to investigate the performance of regression adjustment with the stratifications in Origin, inference with regression adjustment under stratified randomization is still an open question.

I consider the following inference methods:
\begin{enumerate}
	\item Matched pairs: (1) (adj) the adjusted $t$-test with the variance estimator in \eqref{eq:se}; (2) (MPt) the test with the variance estimator in Theorem 10.1 of \cite{imbens2015causal}, which is equivalent to the ``matched pairs'' $t$-test in \cite{bai2021inference}.
	\item Matched sets of four: (adj4) the adjusted $t$-test with the variance estimator in \eqref{eq:four-se} in Section \ref{sec:sims-supp} of the supplement;
	\item Original: the test in (23) of \cite{bugni2018inference}, which is asymptotically exact under stratified randomization.
	\item No stratification: without regression adjustment, the two-sample $t$-test with the variance estimator given by $\hat \sigma_n^2(1) + \hat \sigma_n^2(0)$; with regression adjustment, White's heteroskedasticity-robust standard error.
\end{enumerate}
For matched sets of four, \cite{athey2017econometrics} propose a test in a sampling framework different from ours. I show in Section \ref{sec:four} of the supplement that because of the differences in sampling frameworks, the test in \cite{athey2017econometrics} does not control size in my setting unless the conditional ATEs are homogeneous. Therefore, I defer these simulation results to Section \ref{sec:sims-supp} of the supplement.

For each paper, each model, and each stratification, across $1,000$ replications, I calculate three metrics of performance: (1) the MSE of estimating $\theta$ using $\hat \theta_n$, reflecting the precision of the estimator; (2) the average rejection probability of testing \eqref{eq:H0} for $\theta_0 = \theta$, reflecting the size of the test; and (3) the average standard error, which directly determines the length of the confidence interval for the ATE.

The main results of the simulation study are summarized in Table \ref{table:sims-main}. I only report the summary statistics across all papers and models and defer the raw numbers to Section \ref{sec:sims-supp} of the supplement. In particular, for each stratification, I report the average and $[\min, \max]$ across all papers and models of
\begin{enumerate}
	\item the ratio between the MSE under the particular stratification and the MSE under no stratification,
	\item the size of the test, and
	\item the ratio between the average standard error under the particular stratification and the average standard error under no stratification.
\end{enumerate}
Rows are labeled according to the stratifications.

\begin{table}[ht!]
\caption{Summary statistics for MSEs, size, and standard errors for each stratification across all papers and models}
\begin{adjustbox}{max width=\linewidth,center}
\begin{tabular}{ccccccc}
\hline\hline
& Stratification & MSE (ratio vs.\ None) & \multicolumn{2}{c}{size (\%)} & \multicolumn{2}{c}{s.e. (ratio vs.\ None)} \\ 
\cmidrule(lr){4-5} \cmidrule(lr){6-7} 
& & & adj/adj4 & MPt & adj/adj4 & MPt \\
\midrule
(a) &  MP X     & 0.549 & 2.267 & 3.533 & 0.870 & 0.810 \\ 
& & [0.304, 0.830] & [0.300, 4.800] & [1.700, 6.400] & [0.720, 0.968] & [0.530, 1.092] \\ 
\addlinespace
(b) &  MS X     & 0.695 & 5.148 & - & 0.828 & - \\
& & [0.464, 0.927] & [3.600, 6.600] & - & [0.663, 0.946] & - \\          
\addlinespace
(c) &  MP base  & 0.762 & 4.771 & 4.781 & 0.886 & 0.885 \\
& & [0.404, 1.030] & [3.200, 5.800] & [2.800, 6.200] & [0.629, 0.989] & [0.633, 1.003] \\
\addlinespace
(d) &  MS base & 0.792 & 5.257 & - & 0.882 & -    \\
& & [0.404, 0.982] & [4.400, 6.300] & - & [0.629, 0.991] & - \\         
\addlinespace
(e) &  MP X2    & 0.685 & 3.126 & 3.852 & 0.919 & 0.874 \\
& & [0.362, 0.923] & [0.500, 6.900] & [1.300, 6.300] & [0.840, 0.979] & [0.608, 1.091] \\
\addlinespace
(f) &  MP pilot & 0.666 & 3.578 & 4.107 & 0.879 & 0.855 \\
& & [0.387, 0.873] & [2.100, 5.500] & [2.700, 5.600] & [0.653, 0.969] & [0.605, 1.063] \\
\addlinespace
(g) &  MP pen   & 0.542 & 2.296 & 3.330 & 0.862 & 0.806 \\
& & [0.280, 0.826] & [0.400, 4.800] & [1.300, 5.600] & [0.625, 0.965] & [0.506, 1.093] \\
\addlinespace
(h) &  Origin   & 1.007 & 5.292 &  - & 0.980 &  -    \\
& & [0.918, 1.114] & [3.700, 7.300] & - & [0.950, 0.999] & - \\      
\addlinespace
(i) &  None     & 1.000 & 5.089 & - & 1.000 & -   \\
& (benchmark) & [1.000, 1.000] & [3.600, 6.900] & - & [1.000, 1.000] & - \\
\addlinespace
(i') & None-reg & 0.948 & 4.900 & - & 0.980 & - \\ 
& & [0.775, 1.012] & [3.200, 6.700] & - & [0.880, 1.034] & - \\ 
\hline\hline
\end{tabular}
\end{adjustbox}
\label{table:sims-main}
\begin{tablenotes}
\item For each stratification, I report summary statistics across all papers and models of (1) the ratio between the MSE under the particular stratification and the MSE under no stratification, (2) the size of testing \eqref{eq:H0} for $\theta_0 = \theta$ at the 5\% level, in percentage, and (3) the ratio between the average standard error under the particular stratification and the average standard error under no stratification. The tests used in this table are as follows: for matched-pair designs, the adjusted $t$-test with the variance estimator in \eqref{eq:se} (adj) and the test in \cite{imbens2015causal} (MPt); for matched sets of four, the adjusted $t$-test with the variance estimator in \eqref{eq:four-se} in Section \ref{sec:four} of the supplement (adj4); for the original stratifications, the test in (23) of \cite{bugni2018inference}; for no stratification, the two-sample $t$-test; for the regression-adjusted estimator, the $t$-test with White's heteroskedasticity-robust standard error. For each metric, I show the average and $[\min, \max]$ across all papers and models. Rows are labeled according to the stratifications. Columns are labeled according to the metrics. For size and standard errors, the second column corresponds to MPt for matched-pair designs and the first column corresponds to the other tests. The definitions of the stratifications can be found in the main text.
\end{tablenotes}
\end{table}

\subsection{Statistical Precision} \label{sec:sims-main}
In this subsection, I discuss major takeaways about statistical precision (specifically, MSE) from Table \ref{table:sims-main}. I focus on five questions that are particularly relevant to empirical practice.

First, how much statistical precision are researchers leaving on the table with their current stratification methods? To answer this question, I compare the MSEs under the stratifications used in the original paper (Origin) and the MSEs when pairing according to the baseline outcome (MP base). Relative to the original stratifications used in those 10 papers, if the researchers had just paired the units according to their baseline outcomes, the MSE would be 24\% smaller on average and 56\% smaller in some cases. In fact, in many models, the MSE under original stratification is almost the same as the MSE when not stratifying. As a result, if researchers had paired the units according to their baseline outcomes, they could have reached the same statistical precision with a much smaller sample size.

Second, how much statistical precision would researchers leave on the table by pairing units according to the baseline outcome rather than using more complicated feasible methods? Note MP X usually has the smallest MSEs across all methods. On average, the MSE under MP base is about 39\% larger than that under MP X, and 10\% larger than that under MS X. As a result, researchers indeed sacrifice some statistical precision by pairing units according to the baseline outcome along instead of pairing or matching into sets of four according to all covariates. Note, however, that MP base picks up about half of the difference between the MSEs under the best feasible method (MP X) and the status-quo methods (Origin).

Third, what is the value of collecting just the baseline outcome rather than other covariates? To answer this question, I compare the MSEs under MP base and MP X2. Note the MSE under MP base is on average only 11\% larger than and sometimes almost the same as that under MP X2. As a result, the statistical precision of pairing according to the baseline outcome alone is comparable to the statistical precision of pairing according to all other covariates. Note the number of other covariates is close to or larger than 10 in most cases, so the per-covariate return for collecting all of them is limited relative to collecting the baseline outcome alone.

Fourth, are matched sets of four better than matched pairs in terms of the MSE? To answer this question, I compare the MSEs of the MS methods and the MP methods. The MSE under MS base is 4\% larger than that under MP base, and the MSE under MS X is 27\% larger than that under MP X. Therefore, the statistical precision is higher with matched pairs. The difference is pronounced when I match according to multiple covariates but tiny when I match only according to the baseline outcome.

Fifth, does the best pairing method based on a small pilot dominate pairing according to $X$? In other words, what is the value of having a small pilot? First, note the MSE under MP pen is 19\% smaller than that under MP pilot, so the penalized matched-pair design indeed has better precision than the naïve plug-in procedure. Meanwhile, the MSE under MP pen is almost the same as that under MP X, and even in the most favorable case, it is only about 8\% smaller. Therefore, the return for a small pilot in terms of the MSE is negligible.

I also study the performance of regression-adjusted estimators in stratifications with one stratum (None-reg). With one stratum, the regression-adjusted estimator usually has slightly smaller MSEs than the difference-in-means estimator. In almost all cases, however, the MSE is larger than those under all methods with matched pairs or matched sets of four, regardless of whether all or only a subset of the covariates in the regression adjustment are used in the matching. Section \ref{sec:sims-supp} of the supplement contains the results for the regression-adjusted estimator in \cite{lin2013agnostic}, which additionally includes the interactions of treatment status and covariates. The results are qualitatively similar to those for None-reg. Therefore, most of the gains in precision from matching according to the covariates cannot be retrieved by controlling for the same covariates via ex-post regression adjustment.


\subsection{Inference Methods} \label{sec:sims-inf}
Next, I discuss the properties of the inference methods. I start with the size of the tests. For matched pairs, note both the adjusted $t$-test and the ``matched pairs'' $t$-test control size well across all papers and models. When I pair units according to the baseline outcome (MP base), the size of the adjusted $t$-test is almost always close to 5\%. The size of the ``matched pairs'' $t$-test is also close to 5\%. The underrejection phenomenon for the ``matched pairs'' $t$-test in Remark \ref{remark:se} is very mild, reflecting the treatment effects heterogeneity is not very large. Several relatively noticeable cases include: for Model 2 of paper 5, the size of the ``matched pairs'' $t$-test is 3.9\% but the size of the adjusted $t$-test is 5.4\%; for Model 3 of paper 5, the size of the ``matched pairs'' $t$-test is 2.8\% but the size of the adjusted $t$-test is 3.2\%.

When I pair units according to multiple covariates (MP X, MP X2, and MP pen), the ``matched pairs'' $t$-test is still conservative except in Model 1, for the same reason as mentioned in Remark \ref{remark:se}. At the same time, the adjusted $t$-test also becomes conservative---its size is often smaller than 5\%. The reason is that the asymptotic results in this paper rely on the assumption that units are ``close,'' in the sense that a suitable normalization of the sum of the distances between the covariates within each pair is close to zero. When pairing according to multiple covariates, however, the procedures suffer from the curse of dimensionality, so the units paired together are not close enough in terms of their covariates. Therefore, the asymptotic results do not approximate the finite-sample distribution of $\hat \theta_n$ very well, and my variance estimator does not approximate the actual variance of $\hat \theta_n$ very well.

When matching according to multiple covariates, the conservativeness of the tests is somewhat alleviated by matching units into sets of four instead of pairs. The size of the test under MS X is close to 5\% even when the size of the test under MP X is much smaller than 5\%. On the other hand, such a difference is virtually nonexistent when I match only according to the baseline outcome---the size of the test under MP base and that under MS base are both close to 5\%. Our current asymptotic framework cannot explain the difference in size because the variance estimators for both MP and MS methods are consistent for the limiting variance. The exact reason for the difference is an interesting topic for future work.

I now turn to the standard errors. The findings are mostly similar to those for the MSEs, though with some important exceptions. The standard error under MP base is 10\% smaller on average and 34\% smaller in some cases than that under Origin. At the same time, although the MSE under MP X is smaller than that under MS X, the standard error of MP X is larger than that under MS X. In fact, the standard error under MP X is about the same as that under MP base, and the standard error under MP X2 is often larger than that under MP base. Therefore, although pairing according to multiple covariates is desirable for the MSE, it is often not the best choice for inference, because the standard error is too large and the size of the test could be strictly smaller than the nominal level. By matching units into sets of four instead of pairs according to the same set of covariates, researchers could lower the standard error and bring the size close to the nominal level. Note, however, that the MSE will increase, as discussed in Section \ref{sec:sims-main}.

I emphasize the validity of my tests relies on the assumptions on the sampling framework. In this paper, I assume units are drawn from a superpopulation, and the potential outcomes and the covariates are random. \cite{de_chaisemartin2021at}, on the other hand, study a finite-population setting in which the potential outcomes and the covariates are fixed. Such a setting is particularly relevant if we have a convenience sample instead of a random sample drawn from a large population. \cite{de_chaisemartin2021at} show in these settings that if the number of pairs is small, then the tests in my paper and \cite{bai2021inference} may not control size, and the ``matched pairs'' $t$-test in \cite{imbens2015causal} could become preferable.

\subsection{Multiple Outcomes} \label{sec:sims-mult}
Finally, I consider settings with multiple outcomes. I take the example of \cite{abel2019bridging}, where a primary outcome, a secondary outcome, and the baseline outcomes of both are available. The paper studies job-searching behaviors. The primary outcome is the search hours and the secondary outcome is the number of applications sent. I study the estimation of the ATE of the secondary outcome. The missing potential outcomes are imputed as in Model 1, assuming the treatment effect is zero for everyone, and Model 3, using the nearest neighbor in terms of the baseline value of the secondary outcome. I consider the following stratifications:
\begin{enumerate}
	\item[] MP 2: Matched pairs according to the baseline value of the secondary outcome.
	\item[] MS 2: Matched sets of four according to the baseline value of the secondary outcome.
	\item[] MP 1: Matched pairs according to the baseline value of the primary outcome.
	\item[] MS 1: Matched sets of four according to the baseline value of the primary outcome.
	\item[] MP 1+2: Matched pairs to minimize the sum of the squared Mahalanobis distances in \eqref{eq:maha} of the baseline values of both outcomes.
	\item[] MS 1+2: Matched sets of four to minimize the sum of the squared Mahalanobis distances of the baseline values of both outcomes.
	\item[] None: No stratification, meaning all units are in one stratum and exactly half are treated.
\end{enumerate}

\begin{table}[ht!]
\caption{MSEs, size, and standard errors for estimating the ATE of the secondary outcome in \cite{abel2019bridging}}
\begin{adjustbox}{max width=\linewidth,center}
\begin{tabular}{cccccccc}
\hline \hline
& \multicolumn{3}{c}{Model 1} & \multicolumn{3}{c}{Model 3} \\
& \multicolumn{3}{c}{$\theta = 0$} & \multicolumn{3}{c}{$\theta = 0.4449$} \\
\cmidrule(lr){2-4} \cmidrule(lr){5-7}
& MSE & size & s.e. & MSE & size & s.e. \\
& (ratio vs.\ None) & (\%) & (ratio vs.\ None) & (ratio vs.\ None) & (\%) & (ratio vs.\ None) \\
\midrule
MP 2 & 0.760 & 5.9 &0.835& 0.645 & 4.7&0.799 \\
\addlinespace
MS 2 & 0.756 & 6.0 &0.835& 0.689 & 5.8&0.799 \\
\addlinespace
MP 1 & 0.988 & 4.3 &0.980& 1.010 & 4.4&0.986 \\
\addlinespace
MS 1 & 1.117 & 6.6 &0.980& 1.070 & 5.7&0.987 \\
\addlinespace
MP 1+2 & 0.558 &4.2 &0.769 & 0.568&4.0 &0.783 \\
\addlinespace
MS 1+2 & 0.615&4.9 &0.760 & 0.598& 4.3&0.777 \\
\addlinespace
None & 1.000 & 4.5 &1.000& 1.000 & 4.7&1.000 \\
\hline\hline
\end{tabular}
\end{adjustbox}
\begin{tablenotes}
\item For each stratification, I report (1) the MSE, (2) the size of testing \eqref{eq:H0} for $\theta_0 = \theta$ at the 5\% level, in percentage, and (3) the average standard error. The parameter of interest is the ATE of the secondary outcome. The tests used in this table are as follows: for matched-pair designs, the adjusted $t$-test with the variance estimator in \eqref{eq:se} (adj); for matched sets of four, the adjusted $t$-test with the variance estimator in \eqref{eq:four-se} of the supplement (adj4). Rows are labeled according to stratifications. Columns are labeled according to the models and metrics. I display the value of $\theta$ for each model. The definitions of the stratifications can be found in the main text.
\end{tablenotes}
\label{table:multi}
\end{table}

In light of the results in Section \ref{sec:sims-main}, I only consider the adjusted $t$-tests. For each stratification, I calculate the MSE, the size of testing \eqref{eq:H0} with $\theta_0 = \theta$, and the average standard error. The results are displayed in Table \ref{table:multi}. Rows are labeled according to the stratifications. As expected, because the secondary outcome is of interest, for both models, stratifying on the baseline value of the secondary outcome produces smaller MSEs than stratifying on the baseline value of the primary outcome. For both models, the MSEs when stratifying on the baseline value of the primary outcome are close to the MSEs with one stratum, reflecting the baseline value of the primary outcome is a poor predictor of the secondary outcome. The smallest MSE is attained by pairing according to the baseline values of both outcomes, and the second smallest MSE is attained by matching units into sets of four according to both baseline outcomes. In all models, the size of the test is close to the nominal level. The test under MP 1+2 slightly underrejects for the same reason as in Section \ref{sec:sims-inf}, although the problem is mild because I only match on two variables. The standard errors are ranked in the same way as the MSEs except for that of MP 1+2. In both models, MS 1+2 produces the smallest standard errors across all methods.

\section{Discussion and Recommendations for Empirical Practice} \label{sec:conclusion}
Based on the theoretical results, in settings with large pilots, I recommend researchers to pair units according to the estimated index function from nonparametric regressions. If the conditional ATEs are homogeneous and researchers have access to a large observational dataset from the same population as that of the main experiment, then I recommend pairing according to predicted values from nonparametric regressions of the outcome on the covariates in the observational dataset. In what follows, I focus on settings with small or no pilots because these are the most common settings in practice.

A simple approach that researchers can take, assuming there is only one primary outcome of interest and its baseline value is available, is to pair units according to the baseline outcome. Indeed, if the baseline outcome is the only available covariate and the index function is monotonic in it, then my theoretical results show pairing units according to the baseline outcome is optimal at any sample size. The simulation results in Section \ref{sec:sims} also show pairing according to the baseline outcome improves upon the status-quo methods in terms of both the MSE and the standard error of the difference-in-means estimator. Further, this approach has the advantage of simplicity.

When multiple covariates are available, an attractive alternative is to match units into pairs or sets of four according to the baseline outcome and other covariates. Unless the number of covariates is very small, I recommend matched sets of four over pairs because the standard error is usually smaller with matched sets of four. In my simulation study, when I use the baseline outcome together with all the covariates that the authors control for in their regressions, matching units into sets of four leads to smaller MSEs and standard errors than pairing on the baseline outcome alone. Note that the good performance of this design could be due to the fact that the authors selected the covariates with the best predictive power ex-post, something that is not feasible at the time of randomization. Nevertheless, forming sets of four according to the baseline outcome and other covariates is an attractive alternative to pairing according to the baseline outcome alone.

In my simulation study, when multiple outcomes are of interest, pairing on one of them may not improve the MSE of the other outcomes. In those settings, researchers could consider matching units into sets of four to minimize the sum of the squared Mahalanobis distances of the baseline values of all outcomes of interest and perhaps some additional covariates.

A further question is whether pilot experiments are worth running for the sole purpose of improving the precision for estimating the ATE. My simulation results only show minor gains in statistical precision when using pilot-based stratifications instead of matching directly on the covariates. Therefore, although pilot experiments are essential for other aspects of the design of the main experiment, they are not as helpful in improving the precision of the estimator for the ATE.

Another natural question is whether one can retrieve the gains in precision from matched pairs or sets of four units by controlling for the same set of covariates through ex-post regression adjustment. Our simulation results show the answer is negative, although with one stratum, regression adjustment slightly lowers the MSE and the standard error relative to the unadjusted difference-in-means estimator. I further note the difference-in-means is unbiased for the ATE in finite sample under all stratifications considered in this paper, while the regression-adjusted estimators are only consistent for the ATE asymptotically. A very interesting direction for future work is to combine regression adjustment with stratifications defined by matched pairs or matched sets of four units.

For inference, researchers can use the test with the variance estimator in \eqref{eq:se}. They can also use the test in Theorem 10.1 of \cite{imbens2015causal}, which is valid albeit sometimes conservative. Finally, I emphasize that my framework assumes units are drawn from a superpopulation and the potential outcomes and the covariates are random. If we have a convenience sample instead of a random sample drawn from a large population, and the sample size is small, then \cite{de_chaisemartin2021at} show the tests in this paper and \cite{bai2021inference} may not control size, and the test in \cite{imbens2015causal} could become preferable.

\clearpage
\bibliography{optdesign}

\newpage
\appendix
\newgeometry{hmargin=1.25in,vmargin=1.25in}
\renewcommand{\theequation}{S.\arabic{equation}}
\renewcommand{\thefigure}{S.\arabic{figure}}
\renewcommand{\thetable}{S.\arabic{table}}

\section{Proofs of Main Results} \label{sec:proof}
In the appendix, $Q$ denotes the distribution of $((Y_i(1), Y_i(0), X_i))'$. I denote the observed quantities by $W_i = (Y_i, X_i', D_i)'$ and the pilot data by $\tilde W^{(m)} = ((\tilde Y_j, \tilde X_j', \tilde D_j)': 1 \leq j \leq m)$. $\mathrm{dim}(X_i)$ denotes the dimension of $X_i$ and $\mathrm{supp}(X_i)$ denotes the support of $X_i$. I use $a \lesssim b$ to denote there exists $c \geq 0$ such that $a \leq c b$.

\subsection{Proof of Lemma \ref{lem:mixing}}
Let $\lambda = (\lambda_1, \ldots, \lambda_S)$ be a stratification and recall $n_s = |\lambda_s|$. Let $(d_1, \ldots, d_{2n})$ be a vector of values that $(D_1, \ldots, D_{2n})$ may take under $\lambda$, and for every $s$ let $(d_1^s, \ldots, d_{n_s}^s)$ denote treatment status of the units in stratum $s$. For every $s$, there are $(n_s / 2)!$ matched-pair designs in stratum $s$ that could lead to $(d_1^s, \ldots, d_{n_s}^s)$. For each of such designs, $(d_1^s, \ldots, d_{n_s}^s)$ is realized with probability $2^{-n_s / 2}$. Accordingly, if instead of implementing $\lambda$, I implement a matched-pair design in each stratum, uniformly across all matched-pair designs within each stratum, and independently across strata, the probability that $(d_1, \ldots, d_{2n})$ is realized is
\[ \prod_{1 \leq s \leq S} \frac{1}{\binom{n_s}{n_s / 2}(n_s / 2)! / 2^{n_s/2}} (n_s / 2)! \frac{1}{2^{n_s}} = \prod_{1 \leq s \leq S} \frac{1}{\binom{n_s}{n_s / 2}}~, \]
which is the probability that $(d_1, \ldots, d_{2n})$ is realized under $\lambda$. To see the number of matched-pair designs in a stratum with $n_s$ units is
\[ \binom{n_s}{n_s / 2}(n_s / 2)! / 2^{n_s/2}~, \]
consider the following thought experiment: First, choose $n_s / 2$ units and fix their positions; next, permute the rest $n_s / 2$ units and match them to the fixed positions, and note each permutation leads to a matched-pair design; finally, note I have counted each matched-pair design repeatedly, and precisely $2^{n_s / 2}$ times, because I could flip the positions of the two units within each pair. \qed

\subsection{Proof of Theorem \ref{thm:limit}}
Follows immediately from Lemma \ref{lem:tau-limit} with $\tau = \frac{1}{2}$. Note condition (c) on $h$ in Theorem \ref{thm:tau-min} is satisfied because of Lemma \ref{lem:close}. \qed

\subsection{Proof of Theorem \ref{thm:se}}
To begin with, note $\hat \mu_n(d) \stackrel{P}{\to} E[Y_i(d)]$ and $\hat \sigma_n^2(d) \stackrel{P}{\to} \var[Y_i(d)]$ for $d \in \{0, 1\}$, by Lemma 6.5 in \cite{bai2021inference}. Next, I show
\begin{equation} \label{eq:rhohat}
E[\hat \rho_n | h^{(n)}] \stackrel{P}{\to} E[(E[Y_i(1) + Y_i(0) | h(X_i)])^2]~.
\end{equation}
For convenience, I define $\mu_d(h_i) = E[Y_i(d) | h(X_i) = h_i]$ for $d \in \{0, 1\}$ and $g_h(h_i) = \mu_1(h_i) + \mu_0(h_i)$. To see this, note
\begin{align*}
& E[(Y_{\pi^h(4j - 3)} + Y_{\pi^h(4j - 2)}) (Y_{\pi^h(4j - 1)} + Y_{\pi^h(4j)}) | h^{(n)}] \\
& = \frac{1}{4} (\mu_1(h_{\pi^h(4j - 3)}) + \mu_0(h_{\pi^h(4j - 2)}))(\mu_1(h_{\pi^h(4j - 1)}) + \mu_0(h_{\pi^h(4j)})) \\
& + \frac{1}{4} (\mu_1(h_{\pi^h(4j - 3)}) + \mu_0(h_{\pi^h(4j - 2)}))(\mu_1(h_{\pi^h(4j)}) + \mu_0(h_{\pi^h(4j - 1)})) \\
& + \frac{1}{4} (\mu_1(h_{\pi^h(4j - 2)}) + \mu_0(h_{\pi^h(4j - 3)}))(\mu_1(h_{\pi^h(4j - 1)}) + \mu_0(h_{\pi^h(4j)})) \\
& + \frac{1}{4} (\mu_1(h_{\pi^h(4j - 2)}) + \mu_0(h_{\pi^h(4j - 3)}))(\mu_1(h_{\pi^h(4j)}) + \mu_0(h_{\pi^h(4j - 1)})) \\
& = \frac{1}{4} (g_h(h_{\pi^h(4j - 3)}) + g_h(h_{\pi^h(4j - 2)}))(g_h(h_{\pi^h(4j - 1)}) + g_h(h_{\pi^h(4j)})) \\
& = \frac{1}{4} \sum_{k \in \{2, 3\}, l \in \{0, 1\}} g_h^2(h_{\pi^h(4j - k)}) + g_h^2(h_{\pi^h(4j - l)}) - (g_h(h_{\pi^h(4j - k)}) - g_h(h_{\pi^h(4j - l)}))^2~.
\end{align*}
As a result,
\begin{align*}
E[\hat \rho_n | h^{(n)}] & = \sum_{1 \leq j \leq \lfloor\frac{n}{2}\rfloor} E[(Y_{\pi^h(4j - 3)} + Y_{\pi^h(4j - 2)}) (Y_{\pi^h(4j - 1)} + Y_{\pi^h(4j)}) | h^{(n)}] \\
& = \frac{1}{2n} \sum_{1 \leq i \leq 2n} g_h^2(h(X_i)) - \frac{1}{4n} \sum_{1 \leq j \leq \lfloor\frac{n}{2}\rfloor} \sum_{k \in \{2, 3\}, l \in \{0, 1\}} (g_h(h_{\pi^h(4j - k)}) - g_h(h_{\pi^h(4j - l)}))^2~.
\end{align*}
By the assumption that $E[h^2(X_i)] < \infty$, \eqref{eq:4} holds. \eqref{eq:rhohat} then follows from the assumption that $E[Y_i^r(d) | h(X_i) = z]$ is Lipschitz in $z$ for $r = 1, 2$ and $d = 0, 1$, \eqref{eq:4}, the fact that
\begin{multline} \nonumber
E[g_h^2(h(X_i))]  \lesssim E[E[Y_i(1) | h(X_i)]^2] + E[E[Y_i(0) | h(X_i)]^2] \\
\leq E[E[Y_i^2(1) | h(X_i)]] + E[E[Y_i^2(0) | h(X_i)]] = E[Y_i^2(1) + Y_i^2(0)] < \infty
\end{multline}
because of Assumption \ref{as:moments}, and the weak law of large numbers.

It remains to show $\hat \rho_n - E[\hat \rho_n | h^{(n)}] \stackrel{P}{\to} 0$. I will prove
\[ \frac{2}{n} \sum_{1 \leq j \leq \lfloor\frac{n}{2}\rfloor} (Y_{\pi^h(4j - 2)} Y_{\pi^h(4j)} - E[Y_{\pi^h(4j - 2)} Y_{\pi^h(4j)} | h^{(n)}]) \stackrel{P}{\to} 0~, \]
and the others follow similarly. I will repeatedly use the following elementary inequalities for any $a, b \in \mathbf R$ and $\lambda > 0$:
\begin{align*}
|a + b| I \{|a + b| > \lambda\} & \leq 2|a| I \{|a| > \lambda / 2\} + 2 |b| I \{|b| > \lambda / 2\} \\
|ab| I \{|ab| > \lambda\} & \leq |a|^2 I \{|a| > \sqrt \lambda\} + |b|^2 I \{|b| > \sqrt \lambda\}~.
\end{align*}
To begin with,
\[ E[Y_{\pi^h(4j - 2)} Y_{\pi^h(4j)} | h^{(n)}] = \frac{1}{2} \mu_1(h_{\pi^h(4j - 2)}) \mu_0(h_{\pi^h(4j)}) + \frac{1}{2} \mu_1(h_{\pi^h(4j)}) \mu_0(h_{\pi^h(4j - 2)}) \]
Next, note
\begin{align}
\nonumber & \frac{2}{n} \sum_{1 \leq j \leq \lfloor\frac{n}{2}\rfloor} E[|Y_{\pi^h(4j - 2)} Y_{\pi^h(4j)} - E[Y_{\pi^h(4j - 2)} Y_{\pi^h(4j)} | h^{(n)}]| \\
\nonumber & \hspace{3em} I \{|Y_{\pi^h(4j - 2)} Y_{\pi^h(4j)} - E[Y_{\pi^h(4j - 2)} Y_{\pi^h(4j)} | h^{(n)}]| > \lambda \} | h^{(n)}] \\
\nonumber & \lesssim \frac{2}{n} \sum_{1 \leq j \leq \lfloor\frac{n}{2}\rfloor} E[|Y_{\pi^h(4j - 2)} Y_{\pi^h(4j)}| I \{|Y_{\pi^h(4j - 2)} Y_{\pi^h(4j)}| > \sqrt{\lambda / 2} \} | h^{(n)}] \\
\nonumber & \hspace{3em} + E[|E[Y_{\pi^h(4j - 2)} Y_{\pi^h(4j)} | h^{(n)}]| I \{|E[Y_{\pi^h(4j - 2)} Y_{\pi^h(4j)} | h^{(n)}]| > \sqrt{\lambda / 2}\} | h^{(n)}] \\
\nonumber & \lesssim \frac{2}{n} \sum_{1 \leq j \leq \lfloor\frac{n}{2}\rfloor} E[Y_{\pi^h(4j - 2)}^2 I \{|Y_{\pi^h(4j - 2)}| > \sqrt{\lambda / 2}\} | h^{(n)}] + E[Y_{\pi^h(4j)}^2 I \{|Y_{\pi^h(4j)}| > \sqrt{\lambda / 2}\} | h^{(n)}] \\
\nonumber & \hspace{3em} + |\mu_1(h_{\pi^h(4j - 2)}) \mu_0(h_{\pi^h(4j)})| I \{\mu_1(h_{\pi^h(4j - 2)}) \mu_0(h_{\pi^h(4j)}) > \lambda / 2\}\\
\nonumber & \hspace{3em} + |\mu_1(h_{\pi^h(4j)}) \mu_0(h_{\pi^h(4j - 2)})| I \{\mu_1(h_{\pi^h(4j)}) \mu_0(h_{\pi^h(4j - 2)}) > \lambda / 2\} \\
\nonumber & \lesssim \frac{2}{n} \sum_{1 \leq j \leq \lfloor\frac{n}{2}\rfloor} E[Y_{\pi^h(4j - 2)}^2(1) I \{|Y_{\pi^h(4j - 2)}(1)| > \sqrt{\lambda / 2}\} | h^{(n)}] \\
\nonumber & \hspace{3em} + E[Y_{\pi^h(4j - 2)}^2(0) I \{|Y_{\pi^h(4j - 2)}(0)| > \sqrt{\lambda / 2}\} | h^{(n)}] \\
\nonumber & \hspace{3em} + E[Y_{\pi^h(4j)}^2(1) I \{|Y_{\pi^h(4j)}(1)| > \sqrt{\lambda / 2}\} | h^{(n)}] + E[Y_{\pi^h(4j)}^2(0) I \{|Y_{\pi^h(4j)}(0)| > \sqrt{\lambda / 2}\} | h^{(n)}] \\
\nonumber & \hspace{3em} + \mu_1^2(h_{\pi^h(4j - 2)}) I \{|\mu_1(h_{\pi^h(4j - 2)})| > \sqrt{\lambda / 2}\} + \mu_0^2(h_{\pi^h(4j)}) I \{|\mu_0(h_{\pi^h(4j)})| > \sqrt{\lambda / 2}\} \\
\nonumber & \hspace{3em} + \mu_1^2(h_{\pi^h(4j)}) I \{|\mu_1(h_{\pi^h(4j)})| > \sqrt{\lambda / 2}\} + \mu_0^2(h_{\pi^h(4j - 2)}) I \{|\mu_0(h_{\pi^h(4j - 2)})| > \sqrt{\lambda / 2}\} \\
\nonumber & \lesssim \frac{1}{2n} \sum_{1 \leq i \leq 2n} E[Y_i^2(1) I \{|Y_i(1) > \sqrt{\lambda / 2}|\} | h(X_i)] + E[Y_i^2(0) I \{|Y_i(1) > \sqrt{\lambda / 2}|\} | h(X_i)] \\
\nonumber & \hspace{3em} + E[Y_i^2(1) | h(X_i)] I \{E[Y_i^2(1) | h(X_i)] > \sqrt{\lambda / 2}\} + E[Y_i^2(0) | h(X_i)] I \{E[Y_i^2(0) | h(X_i)] > \sqrt{\lambda / 2}\} \\
\nonumber & \stackrel{P}{\to} E[Y_i^2(1) I \{|Y_i(1) > \sqrt{\lambda / 2}|\}] + E[Y_i^2(0) I \{|Y_i(1) > \sqrt{\lambda / 2}|\}] \\
\nonumber & \hspace{3em} + E[E[Y_i^2(1) | h(X_i)] I \{E[Y_i^2(1) | h(X_i)] > \sqrt{\lambda / 2}\}] \\
\label{eq:ui} & \hspace{3em} + E[E[Y_i^2(0) | h(X_i)] I \{E[Y_i^2(0) | h(X_i)] > \sqrt{\lambda / 2}\}]~,
\end{align}
where the last line follows from WLLN and the law of iterated expectation. Since by Assumption \ref{as:moments} I have $E[Y_i^2(d)] < \infty$ and hence $E[E[Y_i(d) | h(X_i)]^2] < E[Y_i^2(d)]$ by Jensen's inequality, the limit as $\lambda \to \infty$ of the last line is 0, by the dominated convergence theorem. I finish the proof by arguing by contradiction. Suppose 
\[ \hat \rho_n - E[\hat \rho_n | h^{(n)}] \]
does not converge in probability to $0$. There must then exist $\epsilon > 0$ and $\delta > 0$ and a subsequence, which for simplicity I again denote by $\{n\}$, such that
\begin{equation} \label{eq:rho-contra}
P\{|\hat \rho_n - E[\hat \rho_n | h^{(n)}]| > \epsilon\} \to \delta
\end{equation}
along this subsequence. But because of \eqref{eq:ui}, there exists a further subsequence along which the condition in Lemma 6.3 of \cite{bai2021inference} holds with probability one for $h^{(n)}$, but then along this subsequence $\hat \rho_n - E[\hat \rho_n | h^{(n)}] \stackrel{P}{\to} 0$ conditional on $h^{(n)}$ with probability one for $h^{(n)}$, i.e., for any $\epsilon > 0$, with probability one for $h^{(n)}$,
\[ P\{|\hat \rho_n - E[\hat \rho_n | h^{(n)}]| > \epsilon | h^{(n)}\} \to 0~. \]
Since probabilities are bounded and hence uniformly integrable,
\[ P\{|\hat \rho_n - E[\hat \rho_n | h^{(n)}]| > \epsilon\} \to 0 \]
along the chosen subsequence, which implies a contradiction to \eqref{eq:rho-contra}. \qed

\subsection{Proof of Theorem \ref{thm:l2}}
By repeating the arguments in the proof of Lemma \ref{lem:tau-limit}, I write
\[\sqrt n (\hat \theta_n- \theta(Q)) = A_n - B_n + C_n - D_n~, \]
where
\begin{align*}
A_n & = \frac{1}{\sqrt n} \sum_{1 \leq i \leq 2n} (Y_i(1) D_i - E[Y_i(1) D_i | h^{(n)}, D^{(n)}]) \\
B_n & = \frac{1}{\sqrt n} \sum_{1 \leq i \leq 2n} (Y_i(0) (1 - D_i) - E[Y_i(0) (1 - D_i) | h^{(n)}, D^{(n)}]) \\
C_n & = \frac{1}{\sqrt n} \sum_{1 \leq i \leq 2n} (E[(Y_i(1) + Y_i(0)) D_i | h^{(n)}, D^{(n)}] - D_i E[Y_i(1) + Y_i(0)]) \\
D_n & = \frac{1}{\sqrt n} \sum_{1 \leq i \leq 2n} (E[Y_i(0) | h^{(n)}, D^{(n)}] - E[Y_i(0)])~.
\end{align*}
Note by the assumption that $E[h^2(X_i)] < \infty$, Assumption \ref{as:l2}, and Lemma \ref{lem:hat}, \eqref{eq:hat-2} holds. Since $0 < E[\var[Y_i(d) | h(X_i)]]$ for $d \in \{0, 1\}$, $E[Y_i^r(d) | h(X_i) = z]$ is Lipschitz in $z$ for $r = 1, 2$ and $d = 0, 1$, and \eqref{eq:hat-2} holds, by repeating the arguments in the proof of Lemma \ref{lem:tau-limit} with $\tau = \frac{1}{2}$, the desired convergence in distribution holds. In fact, instead of requiring $E[Y_i(d) | h(X_i) = z]$ and $E[Y_i^2(d) | h(X_i) = z]$ to both be Lipschitz continuous, it suffices to require $\var[Y_i(d) | h(X_i) = z]$ to be Lipschitz continuous. 

Next, I show $\hat \varsigma_{\tilde g_m, n}^2 \stackrel{P}{\to} \varsigma_g^2$ as $m, n \to \infty$. Similar arguments as those used in Theorem \ref{thm:se} go through if \eqref{eq:hat-2} and \eqref{eq:hat-4} hold. Since \eqref{eq:hat-4} follows from Assumptions \ref{as:l2} by Lemma \ref{lem:hat}, the conclusion follows. \qed

\section{Auxiliary Lemmas}
\subsection{Auxiliary Lemmas for Main Results}
\begin{lemma} \label{lem:hardy-littlewood}
Suppose $m \geq 2$, and $x_1, \dots, x_{2m}$ are real number such that $x_1 \leq \dots \leq x_{2m}$. Then, for any $\pi \in \Pi_n$,
\[ \sum_{k = 1}^m x_{\pi(2k - 1)} x_{\pi(2k)} \leq \sum_{k = 1}^m x_{2k - 1} x_{2k}~. \]
\end{lemma}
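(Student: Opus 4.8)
The plan is to prove the equivalent maximization statement: since the right-hand side of the display is precisely the adjacent pairing, the lemma asserts that pairing the sorted values adjacently maximizes the sum of within-pair products $\sum_k x_{\pi(2k-1)} x_{\pi(2k)}$. I would establish this by an exchange argument combined with induction on $m$, with the base case $m=1$ trivial.

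First I would record the elementary two-pair inequality that drives everything. For any four values $x_a \le x_b \le x_c \le x_d$, one has
\[ x_a x_b + x_c x_d \ge x_a x_c + x_b x_d \quad\text{and}\quad x_a x_b + x_c x_d \ge x_a x_d + x_b x_c~, \]
because the two differences factor as $(x_b - x_c)(x_a - x_d)$ and $(x_b - x_d)(x_a - x_c)$ respectively, each a product of two nonpositive numbers. In words, among the three ways to split four sorted numbers into two pairs, grouping the two smallest together (and hence the two largest together) yields the largest sum of products.

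Next I would run the exchange step. Fix the pairing induced by an arbitrary $\pi$ and suppose the smallest element $x_1$ is paired with some $x_j$, $j \ge 3$, while $x_2$ is paired with some $x_l$ (necessarily $l \ge 3$ and $l \ne j$, since $x_j$ is already taken). Re-pairing so that $x_1$ is matched with $x_2$ and $x_j$ with $x_l$ changes the objective by
\[ (x_1 x_2 + x_j x_l) - (x_1 x_j + x_2 x_l) = (x_2 - x_j)(x_1 - x_l) \ge 0~, \]
because $x_2 \le x_j$ and $x_1 \le x_l$. Hence this modification weakly increases the sum of products, and afterwards $x_1$ and $x_2$ form a common pair (if they already did under $\pi$, no modification is needed).

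Finally I would close by induction. Having forced $x_1$ and $x_2$ into a common pair without decreasing the objective, the remaining $2(m-1)$ values $x_3 \le \cdots \le x_{2m}$ are again sorted, so by the induction hypothesis their adjacent pairing maximizes the sum of their within-pair products; adding the fixed contribution $x_1 x_2$ identifies the full adjacent pairing as the maximizer. Chaining the two inequalities gives $\sum_k x_{\pi(2k-1)} x_{\pi(2k)} \le \sum_k x_{2k-1} x_{2k}$ for every $\pi$, as claimed. I do not anticipate a genuine obstacle; the only point needing care is the bookkeeping in the exchange step, namely verifying $j,l \ge 3$ and $j \ne l$ so that four distinct elements are genuinely involved, after which the sign computation is immediate.
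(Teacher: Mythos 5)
Your proof is correct and takes essentially the same approach as the paper's: an exchange argument driven by the elementary four-element inequality $(x_b - x_j)(x_a - x_l) \geq 0$ showing that re-pairing toward the sorted adjacent matching weakly increases the sum of within-pair products. The only difference is organizational — you structure the exchanges by induction on $m$ (always forcing $x_1$ and $x_2$ into a common pair first), which makes rigorous the termination step that the paper handles with the briefer claim that a finite number of interchanges maps $\pi$ back to the identity.
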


\begin{proof}[\sc Proof of Lemma \ref{lem:hardy-littlewood}]
I start by considering $\pi$ which only permutes the indices $\{k_1, k_2, k_3, k_4\}$ and leaves the other entries intact. I need only consider the case where there exists $k_1 < k_2 < k_3 < k_4$ such that at least one of $\pi(k_1), \pi(k_2)$ is greater than at least one of $\pi(k_3), \pi(k_4)$ because the lemma trivially holds otherwise. Suppose without loss of generality that $\pi(k_2) < \pi(k_3) < \pi(k_4) < \pi(k_1)$, then it is easy to verify
\[ x_{\pi(k_1)} x_{\pi(k_2)} + x_{\pi(k_3)} x_{\pi(k_4)} \leq x_{\pi(k_2)} x_{\pi(k_3)} + x_{\pi(k_1)} x_{\pi(k_4)} \]
so that by interchanging two indices I decrease the sum weakly. To conclude the proof, note a finite number of those interchanges maps $\pi$ back to the identity operator, and the conclusion follows.
\end{proof}

\begin{lemma} \label{lem:cond-convd}
Let $X_n$, $Y_n$, $Z_n$ be random variables. Suppose $Y_n = g(Z_n) \stackrel{d}{\to} Y$ as $n \to \infty$, where $g: \mathbf R \to \mathbf R$ is measurable and $X_n \stackrel{d}{\to} X$ conditional on $Z_n$, with probability one for $Z_n$. Furthermore, suppose the distributions of both $X$ and $Y$ are continuous everywhere. Then
\[ (X_n, Y_n) \stackrel{d}{\to} (X, Y)~, \]
where $X \indep Y$.
\end{lemma}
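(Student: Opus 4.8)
The plan is to establish the joint convergence by showing that the joint distribution function of $(X_n, Y_n)$ converges pointwise to the product $F_X(x) F_Y(y)$, where $F_X$ and $F_Y$ denote the distribution functions of $X$ and $Y$. Because both limiting laws are continuous everywhere by hypothesis, the product $F_X(x) F_Y(y)$ is a continuous joint distribution function, so every $(x,y)$ is a continuity point of the candidate limit; hence pointwise convergence of the joint distribution functions at all $(x,y)$ is equivalent to $(X_n, Y_n) \stackrel{d}{\to} (X, Y)$, and the product form of the limit automatically yields $X \indep Y$. First I would fix an arbitrary pair $(x, y)$.

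The key step is to condition on $Z_n$. Since $Y_n = g(Z_n)$ with $g$ measurable, the event $\{Y_n \leq y\}$ is $\sigma(Z_n)$-measurable, so the tower property gives
\[ P\{X_n \leq x, Y_n \leq y\} = E\big[ I\{Y_n \leq y\}\, F_{X_n \mid Z_n}(x) \big]~, \]
where $F_{X_n \mid Z_n}(x) = P\{X_n \leq x \mid Z_n\}$. I would then write this as
\[ F_X(x)\, P\{Y_n \leq y\} + E\big[ I\{Y_n \leq y\}\, (F_{X_n \mid Z_n}(x) - F_X(x)) \big]~. \]
The first term converges to $F_X(x) F_Y(y)$ because $Y_n \stackrel{d}{\to} Y$ and $y$ is a continuity point of $F_Y$ (continuity being guaranteed everywhere). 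It remains to show the second term vanishes.

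For the remainder, I would bound its absolute value by $E\big[ |F_{X_n \mid Z_n}(x) - F_X(x)| \big]$. The hypothesis that $X_n \stackrel{d}{\to} X$ conditional on $Z_n$ with probability one for $Z_n$ implies, at the continuity point $x$ of $F_X$, that $F_{X_n \mid Z_n}(x) \to F_X(x)$ with probability one; since these quantities are distribution functions and hence uniformly bounded by $1$, the bounded convergence theorem gives $E\big[ |F_{X_n \mid Z_n}(x) - F_X(x)| \big] \to 0$. Combining the two pieces yields $P\{X_n \leq x, Y_n \leq y\} \to F_X(x) F_Y(y)$ for every $(x,y)$, completing the argument. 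The main obstacle is the interchange of limit and expectation in the remainder term, and this is precisely where both hypotheses enter: the almost-sure conditional weak convergence supplies the pointwise a.s.\ limit of the conditional distribution functions, the continuity of $F_X$ guarantees that every $x$ is a continuity point so that $F_{X_n \mid Z_n}(x) \to F_X(x)$ holds, and the uniform boundedness of distribution functions supplies the domination needed to pass the limit through the expectation.
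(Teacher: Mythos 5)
Your proof is correct and follows essentially the same route as the paper's: both condition on $Z_n$ to pull the $\sigma(Z_n)$-measurable indicator $I\{Y_n \leq y\}$ out of the conditional expectation, split the joint probability into a term handled by $Y_n \stackrel{d}{\to} Y$ with continuity of $F_Y$ and a remainder term $E[(P\{X_n \leq x \mid Z_n\} - F_X(x)) I\{Y_n \leq y\}]$, and kill the remainder via the almost-sure convergence of the conditional distribution functions together with the bounded (dominated) convergence theorem. The only difference is cosmetic—you bound the remainder by $E[|F_{X_n \mid Z_n}(x) - F_X(x)|]$ before passing to the limit, whereas the paper applies dominated convergence to the product directly.
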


\begin{proof}[\sc Proof of Lemma \ref{lem:cond-convd}]
Since $X$ and $Y$ both have continuous distribution functions and they are independent, I need only show for any $x, y \in \mathbf R$,
\[ P\{X_n \leq x, Y_n \leq y\} \to P\{X \leq x\} P\{Y \leq y\}~. \]
To this end, note
\begin{align}
\nonumber & P\{X_n \leq x, Y_n \leq y\} - P\{X \leq x\} P\{Y \leq y\} \\
\nonumber & = E[E[I \{X_n \leq x\} I \{Y_n \leq y\}|Z_n]] - P\{X \leq x\} P\{Y \leq y\} \\
\nonumber & = E[E[I \{X_n \leq x\} | Z_n] I \{Y_n \leq y\}] - P\{X \leq x\} P\{Y \leq y\} \\
\nonumber & = E[(E[I \{X_n \leq x\} | Z_n] - P\{X \leq x\}) I \{Y_n \leq y\}] + E[P\{X \leq x\} (I \{Y_n \leq y\} - P\{Y \leq y\})] \\
\nonumber & = E[(P\{X_n \leq x | Z_n\} - P\{X \leq x\}) I \{Y_n \leq y\}] + (P\{Y_n \leq y\} - P\{Y \leq y\}) P\{X \leq x\}~.
\end{align}
Since
\[ P\{X_n \leq x | Z_n\} - P\{X \leq x\} \to 0 \]
with probability one for $Z_n$, and hence the product inside the expectation converges to 0 with probability one as well, which in turn implies the expectation converges to 0 by the dominated convergence theorem because probabilities are bounded. The second term converges to 0 because of the definition of convergence in distribution and the fact that the distribution function of $Y$ is continuous everywhere.
\end{proof}

\begin{lemma} \label{lem:tau-limit}
Suppose the sample size is $kn$ for $k \in \mathbf Z$ and the treatment assignment scheme satisfies $\tau_s \equiv \tau = \frac{l}{k}$, where $l \in \mathbf Z$, $0 < l < k$, and they are mutually prime. Suppose $Q$ satisfies Assumption \ref{as:moments} and $h$ satisfies the assumptions in Theorem \ref{thm:tau-min}. Then, under $\lambda^{\tau, h}(X^{(n)})$ defined in \eqref{eq:tau-h}, as $n \to \infty$,
\[ \sqrt{kn} (\hat \theta_n - \theta(Q)) \stackrel{d}{\to} N(0, \varsigma_{\tau, h}^2)~, \]
where
\begin{equation} \label{eq:tau-limit}
\varsigma_{\tau, h}^2 = \frac{\var[Y_i(1)]}{\tau} + \frac{\var[Y_i(0)]}{1 - \tau} - \tau (1 - \tau) E \Big [ \Big ( E \Big [ \frac{Y_i(1)}{\tau} + \frac{Y_i(0)}{1 - \tau} \Big | h(X_i) \Big] - \Big ( \frac{E[Y_i(1)]}{\tau} + \frac{E[Y_i(0)]}{1 - \tau} \Big) \Big )^2 \Big ]~.
\end{equation}
\end{lemma}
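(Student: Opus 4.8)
The plan is to mimic the decomposition in the proof of Theorem \ref{thm:l2}, but with the general treated fraction $\tau = l/k$ and group size $k$, and then to locate the variance reduction in a single ``design'' term that the stratified structure annihilates. Because exactly $ln = \tau k n$ units are treated under $\lambda^{\tau,h}(X^{(n)})$, the estimator can be written in inverse-probability-weighted form $\hat\theta_n = \frac{1}{kn}\sum_{i}(\frac{D_i Y_i}{\tau} - \frac{(1-D_i)Y_i}{1-\tau})$, which is exactly unbiased for $\theta(Q)$. Writing $\mu_d(z) = E[Y_i(d)\mid h(X_i)=z]$ and $g^\tau_h(z) = \mu_1(z)/\tau + \mu_0(z)/(1-\tau)$, an elementary rearrangement gives
\[ \sqrt{kn}\,(\hat\theta_n - \theta(Q)) = \frac{1}{\sqrt{kn}} \sum_{1 \leq i \leq kn} (N_i + \Delta_i + S_i)~, \]
where $N_i = \frac{D_i(Y_i(1)-\mu_1(h_i))}{\tau} - \frac{(1-D_i)(Y_i(0)-\mu_0(h_i))}{1-\tau}$ is outcome noise, $\Delta_i = (D_i-\tau)\,g^\tau_h(h_i)$ is a design fluctuation, and $S_i = \mu_1(h_i) - \mu_0(h_i) - \theta(Q)$ is an i.i.d.\ signal. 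I would condition on $h^{(n)}$ throughout.

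First I would dispatch the $S$ and $N$ terms. The $S$ term is an average of i.i.d., mean-zero summands with finite variance (by Assumption \ref{as:moments}), so a Lindeberg central limit theorem delivers the limit $\var[\mu_1(h(X_i)) - \mu_0(h(X_i))]$. For the $N$ term, conditional on $(h^{(n)}, D^{(n)})$ the summands are independent and mean zero; a weak law of large numbers for its conditional variance (using the Lipschitz conditions on $E[Y_i^r(d)\mid h(X_i)=z]$ from Theorem \ref{thm:tau-min} and the closeness of adjacent units) gives $\frac{E[\var[Y_i(1)\mid h(X_i)]]}{\tau} + \frac{E[\var[Y_i(0)\mid h(X_i)]]}{1-\tau}$, after which a conditional Lindeberg argument yields asymptotic normality.

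The crux is the design term $\Delta_i$, and this is where I expect the main obstacle. Within each stratum $s$ of $k$ consecutive units in the $h$-ordering one has $\sum_{i\in s}(D_i-\tau)=0$, so $\sum_{i\in s}\Delta_i = \sum_{i\in s}(D_i-\tau)(g^\tau_h(h_i)-\bar g_s)$ with $\bar g_s$ the stratum mean. Using the without-replacement moments $\var[D_i]=\tau(1-\tau)$ and $\cov[D_i,D_j]=-\tau(1-\tau)/(k-1)$, the conditional variance of $\frac{1}{\sqrt{kn}}\sum_i\Delta_i$ given $h^{(n)}$ equals $\frac{\tau(1-\tau)}{(k-1)n}\sum_s\sum_{i\in s}(g^\tau_h(h_i)-\bar g_s)^2$. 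Since adjacent units in the $h$-ordering are close, the within-stratum dispersion is controlled by squared within-stratum gaps of $h$ (via Lipschitz continuity of $\mu_d$ and $E[h^2(X_i)]<\infty$), whose average over strata vanishes by Lemma \ref{lem:close}; hence this conditional variance converges to zero in probability and $\Delta$ is asymptotically negligible. This is exactly the step that eliminates the $\tau(1-\tau)\var[g^\tau_h(h(X_i))]$ contribution that would otherwise arise under i.i.d.\ assignment, and I would take care to verify that the closeness lemma applies under only the stated moment and Lipschitz hypotheses.

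Finally I would assemble the pieces. With $\Delta$ negligible, Slutsky reduces matters to $\frac{1}{\sqrt{kn}}\sum_i(N_i+S_i)$; here $S$ is a measurable function of $h^{(n)}$ converging by its own central limit theorem, while $N$ converges conditionally on $h^{(n)}$ to a Gaussian with a deterministic limiting variance, so Lemma \ref{lem:cond-convd} (taking $Z_n=h^{(n)}$) gives joint convergence to \emph{independent} normal limits and the two variances add. A direct variance decomposition of $Y_i(d)$ into $E[\var[Y_i(d)\mid h(X_i)]]+\var[\mu_d(h(X_i))]$ then shows this sum equals $\varsigma_{\tau,h}^2$ in \eqref{eq:tau-limit}. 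As in the proof of Theorem \ref{thm:se}, the passage from the conditional-on-$h^{(n)}$ statements to the unconditional conclusion would be made rigorous by a subsequence argument combined with the uniform integrability of probabilities.
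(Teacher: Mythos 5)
Your proposal is correct and follows essentially the same route as the paper's proof: your $N$-sum is exactly the paper's $A_n - B_n$, your $\Delta$-sum plus $S$-sum is exactly its $C_n - D_n$ (the paper kills the design fluctuation by showing $\var[C_n \mid h^{(n)}] \to 0$ via the same within-stratum without-replacement covariance computation), and the assembly via a conditional Lindeberg CLT, Lemma \ref{lem:cond-convd}, subsequence arguments with uniform integrability of probabilities, and the final variance algebra all match. The only slip is the citation at the crux step: for general $k$ the vanishing of the within-stratum dispersion $\frac{1}{n}\sum_s \sum_{i \in s}|h_i - \bar h_s^\tau|^2$ is condition (c) of Theorem \ref{thm:tau-min}, which is a \emph{hypothesis} of this lemma, not a consequence of Lemma \ref{lem:close} (that lemma delivers it only for the matched-pair case $k = 2$, as used in Theorem \ref{thm:limit}).
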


\begin{proof}[\sc Proof of Lemma \ref{lem:tau-limit}]
To begin with, note
\[ \sqrt{kn} (\hat \theta_n- \theta(Q)) = A_n - B_n + C_n - D_n~, \]
where
\[ \begin{split}
A_n & = \frac{1}{\sqrt{kn}} \sum_{1 \leq i \leq kn} \Big ( \frac{Y_i(1) D_i}{\tau} - E \Big [ \frac{Y_i(1) D_i}{\tau} \Big | h^{(n)}, D^{(n)} \Big ] \Big ) \\
B_n & = \frac{1}{\sqrt{kn}} \sum_{1 \leq i \leq kn} \Big ( \frac{Y_i(0) (1 - D_i)}{1 - \tau} - E \Big [ \frac{Y_i(0) (1 - D_i)}{1 - \tau} \Big | h^{(n)}, D^{(n)} \Big ] \Big ) \\
C_n & = \frac{1}{\sqrt{kn}} \sum_{1 \leq i \leq kn} \Big ( E \Big [ \frac{Y_i(1) D_i}{\tau} \Big | h^{(n)}, D^{(n)} \Big ] - E[Y_i(1)] \Big ) \\
D_n & = \frac{1}{\sqrt{kn}} \sum_{1 \leq i \leq kn} \Big ( E \Big [ \frac{Y_i(0) (1 - D_i)}{1 - \tau} \Big | h^{(n)}, D^{(n)} \Big ] - E[Y_i(0)] \Big )~.
\end{split} \]
Note conditional on $h^{(n)}$ and $D^{(n)}$, $A_n$ and $B_n$ are independent and $C_n$ and $D_n$ are constant.

I first study the limiting behavior of $A_n$. Conditional on $h^{(n)}$ and $D^{(n)}$, the terms in the sum are independent but not identically distributed. Therefore, I proceed to verify the Lindeberg condition holds in probability conditional on $h^{(n)}$ and $D^{(n)}$. To that end, define
\[ s_n^2 = s_n^2(h^{(n)}, D^{(n)}) = \sum_{1 \leq i \leq kn} \var \Big [ \frac{Y_i(1) D_i}{\tau} \Big | h^{(n)}, D^{(n)} \Big ] \]
and note
\begin{align}
\nonumber s_n^2 & = \sum_{1 \leq i \leq kn} \var \Big [ \frac{Y_i(1) D_i}{\tau} \Big | h^{(n)}, D^{(n)} \Big ] \\
\nonumber & = \frac{1}{\tau^2} \sum_{1 \leq i \leq kn} D_i \var[Y_i(1) | h^{(n)}] \\
\nonumber & = \frac{1}{\tau^2} \sum_{1 \leq i \leq kn: D_i = 1} \var[Y_i(1) | h(X_i)]~,
\end{align}
where the second equality follows from \eqref{eq:indep} and the third follows from the fact that units are i.i.d. It follows that
\begin{multline} \label{eq:sn-decomp}
\tau \frac{s_n^2}{kn} = \frac{1}{kn} \sum_{1 \leq i \leq kn} \var[Y_i(1) | h(X_i)] + \Big ( \frac{1 - \tau}{\tau kn}\sum_{1 \leq i \leq kn: D_i = 1} \var[Y_i(1) | h(X_i)] \\
- \frac{1}{kn}\sum_{1 \leq i \leq kn: D_i = 0} \var[Y_i(1) | h(X_i)] \Big )~.
\end{multline}
By Assumption \ref{as:moments},
\begin{equation} \label{sn:wlln}
\frac{1}{kn} \sum_{1 \leq i \leq kn} \var[Y_i(1) | h(X_i)] \stackrel{P}{\to} E[\var[Y_i(1) | h(X_i)]] < E[Y_i(1)] < \infty~.
\end{equation}
Meanwhile,
\begin{align}
\nonumber & \Big |  \frac{1 - \tau}{\tau kn}\sum_{1 \leq i \leq kn: D_i = 1} \var[Y_i(1) | h(X_i)] - \frac{1}{kn}\sum_{1 \leq i \leq kn: D_i = 0} \var[Y_i(1) | h(X_i)] \Big | \\
\nonumber & \lesssim \Big |  \frac{1 - \tau}{\tau kn}\sum_{1 \leq i \leq kn: D_i = 1} h_i - \frac{1}{kn}\sum_{1 \leq i \leq kn: D_i = 0} h_i \Big | \\
\nonumber & = \frac{1}{\tau k n} \Big | \sum_{1 \leq s \leq n} \sum_{(s - 1)k + 1 \leq j \leq sk: D_{\pi^{\tau, h}(j)} = 1} (h_{\pi^{\tau, h}(j)} - \bar h_s^{\tau}) \Big | \\
\nonumber & \leq \frac{1}{\tau k n} \sum_{1 \leq s \leq n} \sum_{(s - 1)k + 1 \leq j \leq sk: D_{\pi^{\tau, h}(j)} = 1} |h_{\pi^{\tau, h}(j)} - \bar h_s^{\tau}| \\
\nonumber & \lesssim \frac{1}{n} \sum_{1 \leq s \leq n} \sum_{(s - 1)k + 1 \leq j \leq sk}|h_{\pi^{\tau, h}(j)} - \bar h_s^{\tau}| \\
\label{eq:sn-close} & \leq k^{1/2} \Big (\frac{1}{n} \sum_{1 \leq s \leq n} \sum_{(s - 1)k + 1 \leq j \leq sk}|h_{\pi^{\tau, h}(j)} - \bar h_s^{\tau}|^2 \Big )^{1/2} \stackrel{P}{\to} 0~,
\end{align}
where the first inequality holds because $E[Y_i^r(d) | h(X_i) = z]$ is Lipschitz for $r = 1, 2$ and $d = 0, 1$, the second holds by assumption, the third holds by inspection, the second to last holds by the Cauchy-Schwarz inequality, the last holds by condition (c) in Theorem \ref{thm:tau-min}, and the equality holds by inspection. Combining \eqref{eq:sn-decomp}, \eqref{sn:wlln}, and \eqref{eq:sn-close}, I have
\begin{equation} \label{eq:sn-plim}
\frac{s_n^2}{kn} \stackrel{P}{\to} \frac{E[\var[Y_i(1) | h(X_i)]]}{\tau} > 0~,
\end{equation}
where the inequality holds by assumption.

I now argue the Lindeberg condition holds in probability conditional on $h^{(n)}$ and $D^{(n)}$, i.e., for any $\epsilon > 0$,
\begin{multline}
E_n = \frac{1}{s_n^2 \tau^2} \sum_{1 \leq i \leq kn} E[|Y_i(1) D_i - E[Y_i(1)D_i | h^{(n)}, D^{(n)}]|^2 \\
I \{|Y_i(1) D_i - E[Y_i(1) D_i | h^{(n)}, D^{(n)}]| > \epsilon \tau s_n\} | h^{(n)}, D^{(n)}] \stackrel{P}{\to} 0~.
\end{multline}
To this end, first note for any $M > 0$, 
\begin{equation} \label{eq:epsilon-sn}
P\{\epsilon \tau s_n > M\} \to 1
\end{equation}
because of \eqref{eq:sn-plim}. Next, note
\[ E[Y_i(1) D_i | h^{(n)}, D^{(n)}] = E[Y_i(1) | h(X_i)] D_i \]
because of \eqref{eq:indep}. As a result, for any $M > 0$
\begin{align}
\nonumber E_n & = \frac{1}{s_n^2 \tau^2} \sum_{1 \leq i \leq kn: D_i = 1} E[|Y_i(1) - E[Y_i(1) | h^{(n)}, D^{(n)}]|^2 I \{|Y_i(1) - E[Y_i(1 | h^{(n)}, D^{(n)}]| > \epsilon \tau s_n\} | h^{(n)}, D^{(n)}] \\
\nonumber & \leq \frac{1}{s_n^2 \tau^2} \sum_{1 \leq i \leq kn} E[|Y_i(1)- E[Y_i(1) | h^{(n)}, D^{(n)}]|^2 I \{|Y_i(1) - E[Y_i(1)| h^{(n)}, D^{(n)}]| > \epsilon \tau s_n\} | h^{(n)}, D^{(n)}] \\
\nonumber & \leq \frac{1}{s_n^2 \tau^2} \sum_{1 \leq i \leq kn} E[|Y_i(1) - E[Y_i(1) | h(X_i)] |^2 I \{|Y_i(1) - E[Y_i(1) | h(X_i)] > M\} | h^{(n)}, D^{(n)}] + o_p(1) \\
\label{eq:En-dom} & = \frac{kn}{s_n^2 \tau^2} \frac{1}{kn} \sum_{1 \leq i \leq kn} E[|Y_i(1) - E[Y_i(1) | h(X_i)] |^2 I \{|Y_i(1) - E[Y_i(1) | h(X_i)]| > M\} | h^{(n)}, D^{(n)}] + o_p(1) \\
\label{eq:En-dom-plim} & \stackrel{P}{\to} (E[\var[Y_i(1) | h(X_i)]])^{-1} E[|Y_i(1) - E[Y_i(1) | h(X_i)] |^2 I \{|Y_i(1) - E[Y_i(1) | h(X_i)]| > M\}]~,
\end{align}
where the first inequality holds by inspection, the second holds because of \eqref{eq:epsilon-sn} and the equality follows because \eqref{eq:indep} and $Q_n = Q^{kn}$, and the convergence in probability follows from \eqref{eq:sn-plim} and the fact that Assumption \ref{as:moments} implies
\begin{multline} \nonumber
E[|Y_i(1) - E[Y_i(1) | h(X_i)] |^2 I \{|Y_i(1) - E[Y_i(1) | h(X_i)]| > M\}] \\ 
\leq E[|Y_i(1) - E[Y_i(1) | h(X_i)] |^2] = E[\var[Y_i(1) | h(X_i)]] \leq E[Y_i^2(1)] < \infty~.
\end{multline}
In addition, by the dominated convergence theorem, 
\[ \lim_{M \to \infty} E[|Y_i(1) - E[Y_i(1) | h(X_i)] |^2 I \{|Y_i(1) - E[Y_i(1) | h(X_i)]| > M\}] = 0~. \]
To show $E_n \stackrel{P}{\to} 0$, fix any subsequence $\{n(j)\}$, and I argue there is a further subsequence $\{n(j(k(l)))\}$ along which $E_{n(j(k(l)))}$ converges to 0 almost surely. Indeed, for the subsequence $\{n(j)\}$, for any fixed $M$, $E_{n(j)}$ is bounded by \eqref{eq:En-dom}, which I define as $U_{n(j)}(M)$. I know from above that $U_{n(j)}(M) \stackrel{P}{\to} U(M)$, where $U(M)$ is defined as \eqref{eq:En-dom-plim}. Hence, there exists a further subsequence $\{n(j(k))\}$ along which $U_{n(j(k))}(M) \to U(M)$ almost surely. I then choose a sequence $\{M_{n(j(k))}\}_{n \geq 1}$ such that $M_{n(j(k))} \to \infty$. By the dominated convergence theorem, $\lim_{n \to \infty} U(M_{n(j(k))}) = 0$. By a diagonalization argument, I could construct a further subsequence $\{n(j(k(l)))\}$ along which $U_{n(j(k(l)))}(M_{n(j(k(l)))}) \to 0$. Along this subsequence, because $E_n \leq U_n(M_n)$ for each $n$, the almost sure limit of $E_n$ must be zero because it is non-negative.

I now argue
\[ \sup_{t \in \mathbf R} | P\{A_n \leq t | h^{(n)}, D^{(n)}\} - \Phi ( t / \sqrt{E[\var[Y_i(1) | h(X_i)]] / \tau} ) | \stackrel{P}{\to} 0~. \]
Fix any subsequence. Since $E_n \stackrel{P}{\to} 0$, there exists a further subsequence along which $E_n \to 0$ with probability one for $h^{(n)}, D^{(n)}$. Because of the Lindeberg condition and \eqref{eq:sn-plim}, it follows that with probability one for $h^{(n)}, D^{(n)}$, $A_n \stackrel{d}{\to} N(0, E[\var[Y_i(1) | h(X_i)]] / \tau)$ conditional on $h^{(n)}, D^{(n)}$. But then the left-hand side of the preceding display must converge almost surely to 0 by Pólya's theorem. Since for any subsequence there exists a further subsequence along which it converges to 0 almost surely, it must converge to 0 in probability.

A similar argument establishes
\[ \sup_{t \in \mathbf R}| P\{B_n \leq t | h^{(n)}, D^{(n)}\} - \Phi(t / \sqrt{E[\var[Y_i(0) | h(X_i)]] / (1 - \tau)})| \stackrel{P}{\to} 0~. \]
Since $A_n$ and $B_n$ are independent conditional on $h^{(n)}$ and $D^{(n)}$, it follows by a similar subsequencing argument that
\begin{equation} \label{eq:An-Bn}
\sup_{t \in \mathbf R}| P\{A_n - B_n \leq t | h^{(n)}, D^{(n)}\} - \Phi( t / \sqrt{E[\var[Y_i(1) | h(X_i)] / \tau + E[\var[Y_i(0) | h(X_i)]] / (1 - \tau)})| \stackrel{P}{\to} 0~.
\end{equation}

To study $C_n$, note by \eqref{eq:indep},
\[ C_n = \frac{1}{\sqrt{kn}} \sum_{1 \leq i \leq kn} \Big (E \Big [ \frac{Y_i(1)}{\tau} \Big | h(X_i) \Big ] D_i - E[Y_i(1)] \Big )~. \]
So I have
\[ E[C_n | h^{(n)}] = \frac{1}{\sqrt{kn}} \sum_{1 \leq i \leq kn} (E[Y_i(1) | h(X_i)] - E[Y_i(1)])~. \]
Furthermore, by the assumptions that $E[Y_i^r(d) | h(X_i) = z]$ is Lipschitz for $r = 1, 2$ and $d = 0, 1$ and $\displaystyle \frac{1}{n} \displaystyle \sum_{1 \leq s \leq n} \sum_{(s - 1)k + 1 \leq j \leq sk}|h_{\pi^{\tau, h}(j)} - \bar h_s^{\tau}|^2 \stackrel{P}{\to} 0$, I have
\[ \var[C_n | h^{(n)}] \propto \frac{1}{kn} \sum_{1 \leq s \leq n} (h_{\pi^{\tau, h}(i)} - \bar h_\tau^s)^2 \stackrel{P}{\to} 0~, \]
where the first relation can be established by repeating the arguments used in the last step of establishing Theorem \ref{thm:tau-oracle}. It therefore follows by Chebyshev's inequality that for any $\epsilon > 0$,
\[ P\{|C_n - E[C_n | h^{(n)}]| > \epsilon | h^{(n)}\} \stackrel{P}{\to} 0~, \]
and because probabilities are bounded and hence uniformly integrable,
\[ P\{|C_n - E[C_n | h^{(n)}]| > \epsilon \} \stackrel{P}{\to} 0~, \]
and hence
\[ C_n = \frac{1}{\sqrt{kn}} \sum_{1 \leq i \leq kn} (E[Y_i(1) | h(X_i)] - E[Y_i(1)]) + o_p(1)~. \]
A similar proof shows
\[ D_n = \frac{1}{\sqrt{kn}} \sum_{1 \leq i \leq kn} (E[Y_i(0) | h(X_i)] - E[Y_i(0)]) + o_p(1)~. \]
and therefore
\begin{align} \label{eq:Cn-Dn}
\nonumber C_n - D_n & = \frac{1}{\sqrt {kn}} \sum_{1 \leq i \leq kn} ( E[Y_i(1) | h(X_i)] - E[Y_i(1)]- (E[Y_i(0) | h(X_i)] - E[Y_i(0)])) + o_p(1) \\
\nonumber & \stackrel{d}{\to} N \Big ( 0, E \Big [ ( E[Y_i(1) | h(X_i)] - E[Y_i(1)]- (E[Y_i(0) | h(X_i)] - E[Y_i(0)]) )^2 \Big ] \Big )~.
\end{align}

I now show by contradiction that
\[ \sup_{t \in \mathbf R} | P\{\sqrt n(\hat \theta_n- \theta(Q)) \leq t\} - \Phi(t / \varsigma_h) | \to 0~. \]
Suppose not, then there must exist a subsequence along which the left-hand side of the above display converges to some $\delta > 0$. Along this subsequence, I could find a further subsequence along which the left-hand side of \eqref{eq:An-Bn} converges to 0 with probability one for $h^{(n)}$ and $D^{(n)}$, i.e.,
\[ A_n - B_n \stackrel{d}{\to} N \Big ( 0, \frac{E[\var[Y_i(1) | h(X_i)]}{\tau} + \frac{E[\var[Y_i(0) | h(X_i)]]}{1 - \tau} \Big ) \]
with probability one for $h^{(n)}$ and $D^{(n)}$. Since $C_n - D_n$ is constant for each $h^{(n)}$ and $D^{(n)}$, Lemma \ref{lem:cond-convd} establishes
\begin{multline} \nonumber
A_n - B_n + C_n - D_n \stackrel{d}{\to} N \Big ( 0, \frac{E[\var[Y_i(1) | h(X_i)]}{\tau} + \frac{E[\var[Y_i(0) | h(X_i)]]}{1 - \tau} + \\
E \Big [ (E[Y_i(1) | h(X_i)] - E[Y_i(1)]- (E[Y_i(0) | h(X_i)] - E[Y_i(0)]))^2 \Big ] \Big )~,
\end{multline}
which, by P\'{o}lya's Theorem, implies a contradiction.

Finally, note
\begin{align*}
& \frac{E[\var[Y_i(1) | h(X_i)]}{\tau} + \frac{E[\var[Y_i(0) | h(X_i)]]}{1 - \tau} \\
& \hspace{3em} + E \Big [( E[Y_i(1) | h(X_i)] - E[Y_i(1)]- (E[Y_i(0) | h(X_i)] - E[Y_i(0)]))^2 \Big ] \\
& = \frac{\var[Y_i(1)]}{\tau} + \frac{\var[Y_i(0)]}{1 - \tau} - \frac{\var[E[Y_i(1) | h(X_i)]]}{\tau} - \frac{\var[E[Y_i(0) | h(X_i)]]}{1 - \tau} + \\
& \hspace{3em} E \Big [ (E[Y_i(1) | h(X_i)] - E[Y_i(1)] -( E[Y_i(0) | h(X_i)] - E[Y_i(0)]))^2 \Big ] \\
& = \frac{\var[Y_i(1)]}{\tau} + \frac{\var[Y_i(0)]}{1 - \tau} - \frac{1 - \tau}{\tau} E[(E[Y_i(1) | h(X_i)] - E[Y_i(1)])^2] \\
& \hspace{3em} - \frac{\tau}{1 - \tau} E[(E[Y_i(0) | h(X_i)] - E[Y_i(0)])^2] \\
& \hspace{3em} - 2 E \Big [( E[Y_i(1) | h(X_i)] - E[Y_i(1)])(E[Y_i(0) | h(X_i)] - E[Y_i(0)]) \Big ] \\
& = \frac{\var[Y_i(1)]}{\tau} + \frac{\var[Y_i(0)]}{1 - \tau} - \tau (1 - \tau) E \Big [ \Big ( E \Big [ \frac{Y_i(1)}{\tau} + \frac{Y_i(0)}{1 - \tau} \Big | h(X_i) \Big] - \Big ( \frac{E[Y_i(1)]}{\tau} + \frac{E[Y_i(0)]}{1 - \tau} \Big) \Big )^2 \Big ]~,
\end{align*}
and the result follows.
\end{proof}

\begin{lemma} \label{lem:max}
Suppose $U_i$, $1 \leq i \leq n$ are i.i.d.\ random variables where $E|U_i|^r < \infty$. Then
\[ n^{-1/r} \max_{1 \leq i \leq n} |U_i| \stackrel{P}{\to} 0~. \]
\end{lemma}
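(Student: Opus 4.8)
The plan is to verify convergence in probability straight from the definition. Fix $\epsilon > 0$; it suffices to show $P\{n^{-1/r}\max_{1\le i\le n}|U_i| > \epsilon\}\to 0$. Rewriting the event as $\{\max_{1\le i\le n}|U_i| > \epsilon n^{1/r}\}$ and applying the union bound together with the i.i.d.\ assumption, I would bound
\[ P\Big\{\max_{1\le i\le n}|U_i| > \epsilon n^{1/r}\Big\} \le \sum_{1\le i\le n} P\{|U_i| > \epsilon n^{1/r}\} = n\,P\{|U_1| > \epsilon n^{1/r}\}~. \]
Everything then reduces to controlling the single tail probability $P\{|U_1| > \epsilon n^{1/r}\}$ at a rate faster than $1/n$.

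The key step is the standard tail estimate that a finite $r$-th moment forces $t\,P\{|U_1|^r > t\}\to 0$ as $t\to\infty$. To see this I would observe that on the event $\{|U_1|^r > t\}$ one has $t \le |U_1|^r$, so $t\,P\{|U_1|^r > t\} \le E[|U_1|^r\, I\{|U_1|^r > t\}]$, and the right-hand side tends to zero by the dominated convergence theorem since $|U_1|^r\,I\{|U_1|^r > t\}\to 0$ almost surely and is dominated by the integrable $|U_1|^r$. Applying this with $t=\epsilon^r n$ and using $\{|U_1| > \epsilon n^{1/r}\} = \{|U_1|^r > \epsilon^r n\}$ gives
\[ n\,P\{|U_1| > \epsilon n^{1/r}\} = \epsilon^{-r}\,(\epsilon^r n)\,P\{|U_1|^r > \epsilon^r n\} \longrightarrow 0 \]
as $n\to\infty$, because $\epsilon^r n\to\infty$ for fixed $\epsilon$. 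Combining with the union bound above yields $P\{n^{-1/r}\max_{1\le i\le n}|U_i| > \epsilon\}\to 0$, which is the claimed $\stackrel{P}{\to}$ statement.

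I do not anticipate a genuine obstacle here; the only substantive content is the tail estimate $t\,P\{|U_1|^r > t\}\to 0$, and the one point requiring mild care is to carry the $\epsilon^r$ factor correctly through the substitution $t = \epsilon^r n$ so that the vanishing tail is evaluated along $t\to\infty$ rather than at a fixed level. A stylistic alternative would be to invoke the Borel--Cantelli argument to get the stronger almost-sure statement $n^{-1/r}\max_{1\le i\le n}|U_i|\to 0$, but since only convergence in probability is asserted, the union-bound route above is the most direct and self-contained.
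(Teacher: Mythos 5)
Your proposal is correct and follows essentially the same route as the paper's own proof: rewrite the event in terms of $|U_i|^r$, apply the union bound, then use the truncated Markov bound $P\{|U_1|^r > t\} \leq t^{-1} E[|U_1|^r\, I\{|U_1|^r > t\}]$ together with the dominated convergence theorem. The only difference is presentational---you isolate the tail estimate $t\,P\{|U_1|^r > t\} \to 0$ as a separate step and substitute $t = \epsilon^r n$, whereas the paper chains the same inequalities in a single display.
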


\begin{proof}[\sc Proof of Lemma \ref{lem:max}]
Note for all $\epsilon > 0$,
\begin{multline} \nonumber
P \Big \{ n^{-1/r} \max_{1 \leq i \leq n} |U_i| > \epsilon \Big \} = P \Big \{ \max_{1 \leq i \leq n} |U_i|^r > n \epsilon^r \Big \} \\
\leq n P \{|U_i|^r > n \epsilon^r\} \leq \frac{n}{n \epsilon^r} E[|U_i|^r I \{|U_i|^r > n \epsilon^r\}] = \frac{1}{\epsilon^r} E[|U_i|^r I \{|U_i|^r > n \epsilon^r\}] \to 0~,
\end{multline}
where the convergence follows because of the dominated convergence theorem and that $E|U_i|^r < \infty$.
\end{proof}

\begin{lemma} \label{lem:close}
Suppose $E[h^2(X_i)] < \infty$. Then, as $n \to \infty$,
\begin{equation} \label{eq:2}
\frac{1}{n} \sum_{1 \leq s \leq n} |h_{\pi^h(2s - 1)} - h_{\pi^h(2s)}|^2 \stackrel{P}{\to} 0~,
\end{equation}
and
\begin{equation} \label{eq:4}
\frac{1}{n} \sum_{1 \leq j \leq \lfloor\frac{n}{2}\rfloor} |h_{\pi^h(4j - k)} - h_{\pi^h(4j - l)}|^2 \stackrel{P}{\to} 0 \text{ for } k \in \{2, 3\} \text{ and } l \in \{0, 1\}~.
\end{equation}

\end{lemma}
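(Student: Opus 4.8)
The plan is to reduce both displays to a single scalar quantity, the sample range of $h(X_i)$, and to control that range using Lemma \ref{lem:max}. Write $Z_i = h(X_i)$, so that $(Z_i : 1 \le i \le 2n)$ is i.i.d.\ with $E[Z_i^2] = E[h^2(X_i)] < \infty$, and let $Z_{(1)} \le \cdots \le Z_{(2n)}$ denote the order statistics, i.e.\ $Z_{(i)} = h_{\pi^h(i)}$ by the definition of $\pi^h$. Define the range $R_n = Z_{(2n)} - Z_{(1)}$. Since $R_n \le 2 \max_{1 \le i \le 2n} |Z_i|$, applying Lemma \ref{lem:max} with $r = 2$ gives $n^{-1/2} \max_{1 \le i \le 2n} |Z_i| \stackrel{P}{\to} 0$, hence $R_n = o_P(n^{1/2})$ and therefore $R_n^2 / n \stackrel{P}{\to} 0$. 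This tail bound is the only probabilistic input; everything else is a deterministic rearrangement argument.

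The elementary fact I will use repeatedly is that for nonnegative reals $a_1, \ldots, a_N$ one has $\sum_i a_i^2 \le (\sum_i a_i)^2$, because the cross terms in the expansion of the right-hand side are nonnegative. For \eqref{eq:2}, observe the within-pair gaps $Z_{(2s)} - Z_{(2s-1)} \ge 0$ are disjoint sub-spacings of the sorted sample (the pairs $\{1,2\}, \{3,4\}, \dots$ index disjoint blocks), so their sum telescopes into at most the total spread: $\sum_{s=1}^n (Z_{(2s)} - Z_{(2s-1)}) \le Z_{(2n)} - Z_{(1)} = R_n$. Combining this with the elementary inequality gives $\sum_{s=1}^n (Z_{(2s)} - Z_{(2s-1)})^2 \le R_n^2$, and dividing by $n$ yields \eqref{eq:2}.

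For \eqref{eq:4}, fix $k \in \{2,3\}$ and $l \in \{0,1\}$. Since $k > l$, the indices $4j - k$ and $4j - l$ both lie in the block $\{4j-3, \ldots, 4j\}$ with $4j - k \le 4j - l$, so monotonicity of the order statistics gives $0 \le Z_{(4j-l)} - Z_{(4j-k)} \le Z_{(4j)} - Z_{(4j-3)}$, whence $(Z_{(4j-k)} - Z_{(4j-l)})^2 \le (Z_{(4j)} - Z_{(4j-3)})^2$. The block gaps $Z_{(4j)} - Z_{(4j-3)} \ge 0$ are again disjoint sub-spacings, since consecutive blocks are index-disjoint and $Z_{(4j)} \le Z_{(4j+1)}$, so $\sum_{j=1}^{\lfloor n/2 \rfloor} (Z_{(4j)} - Z_{(4j-3)}) \le R_n$. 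The same two inequalities then give $\sum_{j=1}^{\lfloor n/2 \rfloor} (Z_{(4j-k)} - Z_{(4j-l)})^2 \le R_n^2$, and dividing by $n$ yields \eqref{eq:4}.

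I do not anticipate a genuine obstacle: once one sees that squared spacings are dominated by the square of the range via $\sum_i a_i^2 \le (\sum_i a_i)^2$, the statement reduces to the mild tail bound $R_n = o_P(n^{1/2})$, which is exactly Lemma \ref{lem:max} with $r = 2$ under $E[h^2(X_i)] < \infty$. The only points requiring a little care are verifying that the relevant gaps are nonnegative and index-disjoint (so their sum is bounded by the full range) and confirming that the largest index $4 \lfloor n/2 \rfloor$ appearing in \eqref{eq:4} does not exceed $2n$.
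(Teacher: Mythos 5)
Your proof is correct and takes essentially the same route as the paper's: both arguments bound the sum of squared within-pair (or within-block) gaps by the squared sample range of $h(X_i)$ and then kill the range via Lemma \ref{lem:max} under $E[h^2(X_i)] < \infty$. The only difference is presentational --- you make explicit the disjoint-spacings telescoping and the inequality $\sum_i a_i^2 \leq (\sum_i a_i)^2$ that the paper compresses into ``follows from the definition of $\pi^h$'' and ``by inspection.''
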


\begin{proof}[\sc Proof of Lemma \ref{lem:close}]
Note
\[ \sum_{1 \leq s \leq n} |h_{\pi^h(2s - 1)} - h_{\pi^h(2s)}|^2 \leq |h_{\pi^h(2n)} - h_{\pi^h(1)}|^2 \leq 4 \max_{1 \leq i \leq 2n} h^2(X_i)~, \]
where the first inequality follows from the definition of $\pi^h$ and the second inequality follows by inspection, and therefore it follows from Lemma \ref{lem:max} that
\[ \frac{1}{n} \sum_{1 \leq s \leq n} |h_{\pi^h(2s - 1)} - h_{\pi^h(2s)}|^2 \leq \frac{4}{n} \max_{1 \leq i \leq 2n} h^2(X_i) \stackrel{P}{\to} 0~. \]
\eqref{eq:2} thus holds. To see Assumption \ref{eq:4} holds, note
\[ \frac{1}{n} \sum_{1 \leq j \leq \lfloor\frac{n}{2}\rfloor} |h_{\pi^h(4j - k)} - h_{\pi^h(4j - l)}|^2 \lesssim \frac{1}{n} |h_{\pi^h(2n)} - h_{\pi^h(1)}|^2~, \]
and the result follows similarly.
\end{proof}

\begin{lemma} \label{lem:hat}
Suppose $E[h^2(X_i)] < \infty$ and $\tilde h_m$ satisfies Assumption \ref{as:l2}. Then, as $m, n \to \infty$,
\begin{equation} \label{eq:hat-2}
\frac{1}{n} \sum_{1 \leq s \leq n} |h_{\pi^{\tilde h_m}(2s - 1)} - h_{\pi^{\tilde h_m}(2s)}|^2  \stackrel{P}{\to} 0~,
\end{equation}
and
\begin{equation} \label{eq:hat-4}
\frac{1}{n} \sum_{1 \leq j \leq \lfloor\frac{n}{2}\rfloor} |h_{\pi^{\tilde h_m}(4j - k)} - h_{\pi^{\tilde h_m}(4j - l)}|^2 \stackrel{P}{\to} 0 \text{ for } k \in \{2, 3\} \text{ and } l \in \{0, 1\}~.
\end{equation}
\end{lemma}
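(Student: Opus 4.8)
The plan is to reduce both \eqref{eq:hat-2} and \eqref{eq:hat-4} to the already-established Lemma \ref{lem:close} together with a single $L^2$-approximation estimate, using the triangle inequality to trade the true function $h$ for the estimated function $\tilde h_m$ that actually determines the ordering $\pi = \pi^{\tilde h_m}$. Concretely, for any two indices $a, b$ I would write
\[ h(X_{\pi(a)}) - h(X_{\pi(b)}) = \bigl(h(X_{\pi(a)}) - \tilde h_m(X_{\pi(a)})\bigr) + \bigl(\tilde h_m(X_{\pi(a)}) - \tilde h_m(X_{\pi(b)})\bigr) + \bigl(\tilde h_m(X_{\pi(b)}) - h(X_{\pi(b)})\bigr) \]
and bound the square by $3$ times the sum of the three squared terms. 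Summing over the within-pair indices (or the within-block indices for the sets of four) and dividing by $n$ splits the target into an \emph{endpoint} piece, coming from the two outer differences, and a \emph{middle} piece, coming from the increments of $\tilde h_m$ along its own sorted order.

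For the endpoint piece, since $\pi$ is a permutation each main-sample index is used at most once, so the outer terms sum to at most $\frac{C}{n}\sum_{1 \leq i \leq 2n} (\tilde h_m(X_i) - h(X_i))^2$ for an absolute constant $C$. For the middle piece, because $\pi$ sorts the values $\tilde h_m(X_i)$ monotonically, the inequality used in the proof of Lemma \ref{lem:close} applies verbatim to $\tilde h_m$ and bounds $\sum_{1 \leq s \leq n} (\tilde h_m(X_{\pi(2s-1)}) - \tilde h_m(X_{\pi(2s)}))^2$ (and, via \eqref{eq:4}, its four-block analogue) by $(\max_i \tilde h_m(X_i) - \min_i \tilde h_m(X_i))^2 \leq 4 \max_i \tilde h_m^2(X_i)$. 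Writing $\tilde h_m^2(X_i) \leq 2(\tilde h_m(X_i) - h(X_i))^2 + 2 h^2(X_i)$ reduces $\frac{1}{n}\max_i \tilde h_m^2(X_i)$ to $\frac{2}{n}\sum_{1 \leq i \leq 2n}(\tilde h_m(X_i) - h(X_i))^2$ plus $\frac{2}{n}\max_i h^2(X_i)$; the latter vanishes in probability by Lemma \ref{lem:max} with $r = 2$, since $E[h^2(X_i)] < \infty$. Thus both pieces are controlled once I establish that $\frac{1}{n}\sum_{1 \leq i \leq 2n}(\tilde h_m(X_i) - h(X_i))^2 \stackrel{P}{\to} 0$.

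That last display is the crux of the argument and the only place where Assumption \ref{as:l2} enters; the difficulty is the joint limit $m, n \to \infty$ with a \emph{random} $\tilde h_m$. I would handle it by conditioning on the pilot sample $\tilde W^{(m)}$: given the pilot, $\tilde h_m$ is a fixed measurable function and the main-sample covariates are i.i.d.\ draws from $Q_X$ independent of the pilot, so the weak law of large numbers gives $\frac{1}{2n}\sum_{1 \leq i \leq 2n}(\tilde h_m(X_i) - h(X_i))^2 \to \int (\tilde h_m(x) - h(x))^2 Q_X(dx) =: R_m$ in probability conditional on the pilot, while Assumption \ref{as:l2} gives $R_m \stackrel{P}{\to} 0$. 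I would splice these two limits together with the same subsequencing device used in the proofs of Theorem \ref{thm:se} and Lemma \ref{lem:tau-limit}: along any subsequence, extract a further subsequence on which $R_m \to 0$ almost surely, apply the conditional weak law on that subsequence, and then pass from conditional to unconditional convergence using that probabilities are bounded and hence uniformly integrable. Finally, \eqref{eq:hat-4} follows from exactly the same decomposition, the only change being that the middle piece is bounded through the four-consecutive-index telescoping already recorded in \eqref{eq:4} of Lemma \ref{lem:close}.
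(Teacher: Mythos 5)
Your proposal follows the paper's own proof in every structural step. The paper's argument is exactly your decomposition: it bounds the within-pair $h$-differences by the within-pair $\tilde h_m$-differences plus the empirical approximation error (the paper's display \eqref{eq:h-diff}, via $|a+b|^2 \lesssim a^2 + b^2$, using that $\pi^{\tilde h_m}$ is a permutation so each index appears once); it bounds the $\tilde h_m$-increments by the sorted-order telescoping argument, $\frac{1}{n}\sum_s |\tilde h_{\pi^{\tilde h_m}(2s-1)} - \tilde h_{\pi^{\tilde h_m}(2s)}|^2 \lesssim \frac{1}{n}\max_i |\tilde h_i|^2 \lesssim \frac{1}{n}\max_i |h_i|^2 + \frac{1}{n}\sum_i |\tilde h_i - h_i|^2$ (the paper's \eqref{eq:hhat-diff}); it kills $\frac{1}{n}\max_i h_i^2$ with Lemma \ref{lem:max}; and it reduces everything to the empirical $L^2$ statement $\frac{1}{n}\sum_{1 \le i \le 2n}|\tilde h_i - h_i|^2 \stackrel{P}{\to} 0$ (the paper's \eqref{eq:emp-l2}), with \eqref{eq:hat-4} handled by the same reasoning on blocks of four.

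The one place you should tighten is the crux step, where your justification is looser than the paper's. You invoke the conditional WLLN for a \emph{fixed} pilot (a limit as $n \to \infty$ with $m$ held fixed) and then apply it along a diagonal subsequence $(m_k, n_k)$. As literally stated this does not follow: in an unrestricted joint limit, nothing guarantees $n_k$ grows fast enough for the WLLN at stage $m_k$ to have taken effect, since the distribution of the summands changes with $m_k$ and the plain WLLN carries no rate. What rescues the argument — and is what the paper actually does — is that the summands $(\tilde h_m(X_i) - h(X_i))^2$ are nonnegative, so conditional on the pilot, Markov's inequality gives
\[ P \Big \{ \frac{1}{2n} \sum_{1 \leq i \leq 2n} |\tilde h_i - h_i|^2 > \epsilon \Big | \tilde W^{(m)} \Big \} \leq \frac{1}{\epsilon} \int |\tilde h_m(x) - h(x)|^2\,Q_X(\dd x) = \frac{R_m}{\epsilon} \]
\emph{for every} $n$ simultaneously. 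This uniformity in $n$ is what makes the joint limit trivial: since $R_m \stackrel{P}{\to} 0$ by Assumption \ref{as:l2}, an $\epsilon$-$\delta$ argument (or your subsequencing plus bounded convergence) finishes the proof, and concentration of the sample average around $R_m$ is never needed — only that the conditional mean vanishes. So your architecture and conclusion are correct; just replace ``apply the conditional weak law'' by ``apply the conditional Markov inequality.''
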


\begin{proof}[\sc Proof of Lemma \ref{lem:hat}]
I only prove the first conclusion as the second can be shown by similar arguments. I first show Assumption \ref{as:l2} implies
\begin{equation} \label{eq:emp-l2}
\frac{1}{n} \sum_{1 \leq i \leq 2n} |\tilde h_i - h_i|^2 \stackrel{P}{\to} 0~.
\end{equation}
Suppose Assumption \ref{as:l2} holds. For any $\epsilon > 0$, $\delta > 0$, there exists $M > 0$ such that for $m > M$,
\begin{equation} \label{eq:inner}
P \Big \{ \int_{\mathrm{supp}(X_i)} |\tilde h_m(x) - h(x)|^2\,Q_X(d x) > \frac{\epsilon \delta}{2} \Big \} \leq \frac{\delta}{2}~.
\end{equation}
By Chebyshev's inequality again, if
\[ \int_{\mathrm{supp}(X_i)} |\tilde h_m(x) - h(x)|^2\,Q_X(d x) \leq \frac{\epsilon \delta}{2}~, \]
then by the independence of $\tilde W^{(m)}$ and $W^{(n)}$,
\begin{multline} \label{eq:outer}
P \Big \{ \frac{1}{2n} \sum_{1 \leq i \leq 2n} |\tilde h_i - h_i|^2 > \epsilon \Big | \tilde W^{(m)} \Big \} \leq \frac{1}{\epsilon} E \Big [ \frac{1}{2n} \sum_{1 \leq i \leq 2n} |\tilde g_i - g_i|^2 \Big | \tilde W^{(m)} \Big ] \\
= \frac{1}{\epsilon} \int_{\mathrm{supp}(X_i)} |\tilde h_m(x) - h(x)|^2\,Q_X(d x) \leq \frac{\delta}{2}~.
\end{multline}
Then,
\begin{align*}
P \Big \{ \frac{1}{2n} \sum_{1 \leq i \leq 2n} |\tilde h_i - h_i|^2 > \epsilon \Big \} & \leq P \Big \{ \frac{1}{2n} \sum_{1 \leq i \leq 2n} |\tilde h_i - h_i|^2 > \epsilon \Big | \tilde W^{(m)} \Big \} \\
& \hspace{3em} \times P \Big \{ \int_{\mathrm{supp}(X_i)} |\tilde h_m(x) - h(x)|^2\,Q_X(d x) \leq \frac{\epsilon \delta}{2} \Big \} \\
& \hspace{3em} + P \Big \{ \int_{\mathrm{supp}(X_i)} |\tilde h_m(x) - h(x)|^2\,Q_X(d x) > \frac{\epsilon \delta}{2} \Big \} \\
& \leq \frac{\delta}{2} \Big (1 - \frac{\delta}{2} \Big ) + \frac{\delta}{2} \leq \delta~,
\end{align*}
where the first inequality follows by definition, and the second inequality follows from \eqref{eq:inner} and \eqref{eq:outer}.

Next, note because $|a + b|^2 \leq 2(a^2 + b^2)$ for any $a, b \in \mathbf R$,
\begin{equation} \label{eq:h-diff}
\frac{1}{n} \sum_{1 \leq s \leq n} |h_{\pi^{\tilde h_m}(2s - 1)} - h_{\pi^{\tilde h_m}(2s)}|^2 \lesssim \frac{1}{n} \sum_{1 \leq s \leq n} |\tilde h_{\pi^{\tilde h_m}(2s - 1)} - \tilde h_{\pi^{\tilde h_m}(2s)}|^2 + \frac{1}{n} \sum_{1 \leq i \leq 2n} |\tilde h_i - h_i|^2~.
\end{equation}
Next, note
\begin{multline} \label{eq:hhat-diff} 
\frac{1}{n} \sum_{1 \leq s \leq n} |\tilde h_{\pi^{\tilde h_m}(2s - 1)} - \tilde h_{\pi^{\tilde h_m}(2s)}|^2 \lesssim \frac{1}{n} \max_{1 \leq i \leq 2n} |\tilde h_i|^2 \\
\lesssim \frac{1}{n} \max_{1 \leq i \leq 2n} |h_i|^2 + \frac{1}{n} \max_{1 \leq i \leq 2n} |\tilde h_i - h_i|^2 \lesssim \frac{1}{n} \max_{1 \leq i \leq 2n} |h_i|^2 + \frac{1}{n} \sum_{1 \leq i \leq 2n} |\tilde h_i - h_i|^2~.
\end{multline}
The conclusion then follows from  \eqref{eq:emp-l2}, \eqref{eq:h-diff}, \eqref{eq:hhat-diff}, the assumption that $E[h^2(X_i)] < \infty$ and an application of Lemma \ref{lem:max}.
\end{proof}

\subsection{Sufficient Conditions for Lipschitz Continuity} \label{sec:geometry}
Let $f$ denote the density function of $X$. Recall $C^{(r)}$ is the class of functions which are $r$th continuously differentiable. I impose the following assumption on $h$ and $f$. 

\begin{assumption} \label{as:geometry}
The function $h$ and density function $f$ satisfy the following conditions.
\begin{enumerate}[\rm (a)]
\item $h \in C^{(2)}$.
\item $\frac{\partial h(x)}{\partial x_p} \neq 0$ Lebesgue a.e.
\item $f \in C^{(2)}$.
\end{enumerate}
\end{assumption}

\begin{lemma}[Theorem 24.4 of \cite{munkres1991analysis}] \label{lem:manifold}
Let $O$ be open in $\mathbf R^p$ and $f: O \to \mathbf R$ be of class $C^{(r)}$ for $r \geq 1$. Let $M$ be the set of points $x$ for which $f(x) = 0$ and $N$ be the set of points $x$ for which $f(x) \geq 0$. Suppose $M$ is non-empty and $Df(x)$ has rank 1 at each point of $M$. Then $N$ is a $p$-manifold in $\mathbf R^p$ and $\partial N = M$.
\end{lemma}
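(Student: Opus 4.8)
The plan is to prove the statement locally, straightening $f$ near each point of $N$ and distinguishing interior points from points of $M$. Recall that a subset $N \subseteq \mathbf R^p$ is a $p$-manifold with boundary of class $C^{(r)}$ if every point of $N$ has a neighborhood (in $N$) that is $C^{(r)}$-diffeomorphic either to an open subset of $\mathbf R^p$ or to a relatively open subset of the half-space $\mathbf H^p = \{y \in \mathbf R^p: y_p \geq 0\}$ meeting the hyperplane $\{y_p = 0\}$; the manifold boundary $\partial N$ is then the set of points sent to $\{y_p = 0\}$ under such a chart. The goal is to exhibit exactly one of these two kinds of chart at each point of $N$ and to read off $\partial N$.

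First I would dispose of the interior points. If $x_0 \in N$ with $f(x_0) > 0$, then by continuity of $f$ there is an open ball $U \ni x_0$ on which $f > 0$, whence $U \subseteq N$. The identity map on $U$ is then a full-space chart, so $x_0$ is an interior manifold point and contributes nothing to $\partial N$. Next I would treat a point $x_0 \in M$. Since $Df(x_0)$ has rank one, $\nabla f(x_0) \neq 0$, so after relabeling coordinates I may assume $\partial f(x_0)/\partial x_p \neq 0$. Define $\phi(x) = (x_1, \dots, x_{p-1}, f(x))$ on a neighborhood of $x_0$. Its Jacobian is block lower-triangular with an $(p-1)\times(p-1)$ identity block, a zero upper-right block, and $(p,p)$-entry $\partial f/\partial x_p$, so $\det D\phi(x_0) = \partial f(x_0)/\partial x_p \neq 0$. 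By the inverse function theorem, $\phi$ restricts to a $C^{(r)}$-diffeomorphism of some open $U \ni x_0$ onto an open $V \ni \phi(x_0)$.

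The payoff of this particular $\phi$ is that its last coordinate is $f$ itself. Consequently $\phi(N \cap U) = V \cap \{y_p \geq 0\}$ and $\phi(M \cap U) = V \cap \{y_p = 0\}$, with $\phi(x_0)$ lying on $\{y_p = 0\}$. Thus $\phi$ is a boundary chart: $x_0$ is a boundary manifold point, and locally $N$ looks like the half-space with $M$ as its boundary hyperplane. Combining the two cases, every point of $N$ satisfies either $f > 0$ (an interior chart applies) or $f = 0$ (the straightening chart applies), so $N$ is a $C^{(r)}$ $p$-manifold with boundary. The boundary consists precisely of the points hit at $\{y_p = 0\}$ by the half-space charts, which by the set equalities above is exactly $M$; hence $\partial N = M$.

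The only step requiring genuine care is the half-space correspondence, and I expect that to be the main (though modest) obstacle: one must verify that the region $f \geq 0$ transports to $y_p \geq 0$ rather than its complement, and that $M$ lands on the bounding hyperplane rather than the interior. Both hold automatically because $f$ is taken as the last coordinate of $\phi$, so the sublevel condition $f \geq 0$ translates verbatim into $y_p \geq 0$ irrespective of the sign of $\partial f/\partial x_p$; no separate argument about the orientation of the normal is needed. Beyond this bookkeeping there is no deeper difficulty, since the rank-one hypothesis on $Df$ along $M$ is exactly the regular-value condition that licenses the implicit-function-theorem straightening at every boundary point.
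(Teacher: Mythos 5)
Your proof is correct: the paper does not actually prove this lemma---it is imported verbatim as Theorem 24.4 of Munkres (1991), \emph{Analysis on Manifolds}---and your straightening argument (identity charts at points with $f>0$, plus the inverse-function-theorem chart $\phi(x) = (x_1,\dots,x_{p-1},f(x))$ at points of $M$) is precisely the standard textbook proof of that theorem. The only step you leave implicit is the invariance-of-boundary fact (a point admitting a full-space chart cannot also be carried to $\{y_p = 0\}$ by a half-space chart, so the two cases really do yield $\partial N = M$); this is standard and is established before Theorem 24.4 in Munkres's own development, so it is a fair thing to take for granted.
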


\begin{lemma} \label{lem:p-1-manifold}
Suppose Assumption \ref{as:geometry}(a)--(b) hold. Then $M = \{x: h(x) = z\}$ is a $(p - 1)$-manifold in $\mathbf R^p$.
\end{lemma}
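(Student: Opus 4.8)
The plan is to obtain the result directly from Lemma~\ref{lem:manifold} (Theorem 24.4 of \cite{munkres1991analysis}), together with the standard fact that the boundary of a $p$-manifold in $\mathbf R^p$ is a $(p-1)$-manifold. I would set $f(x) = h(x) - z$ on $O = \mathbf R^p$, so that $M = \{x : f(x) = 0\}$ is exactly the level set in the statement and $N = \{x : f(x) \geq 0\} = \{x : h(x) \geq z\}$ is the associated superlevel set. If Lemma~\ref{lem:manifold} applies, it yields that $N$ is a $p$-manifold with $\partial N = M$, whence $M = \partial N$ is a $(p-1)$-manifold, which is precisely the claim.

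First I would check the hypotheses of Lemma~\ref{lem:manifold}. Smoothness is immediate from Assumption~\ref{as:geometry}(a): $h \in C^{(2)}$, so $f = h - z \in C^{(2)} \subseteq C^{(1)}$, which covers the requirement $f \in C^{(r)}$ with $r = 1$. The nonemptiness hypothesis $M \neq \varnothing$ amounts to $z$ lying in the range of $h$; if it fails, $M$ is empty and the conclusion holds vacuously, so I would assume $M \neq \varnothing$. The decisive hypothesis is that $Df(x) = \nabla h(x)$ has rank $1$ at every point of $M$, i.e.\ $\nabla h(x) \neq 0$ there. This is where I would invoke Assumption~\ref{as:geometry}(b): whenever $\partial h(x)/\partial x_p \neq 0$, the gradient $\nabla h(x)$ has a nonzero $p$th component and is therefore nonzero, so $\mathrm{rank}\,Df(x) = 1$.

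The main obstacle is precisely the gap between what Lemma~\ref{lem:manifold} requires and what Assumption~\ref{as:geometry}(b) supplies: the lemma needs the rank condition at \emph{every} point of $M$, whereas (b) only guarantees $\partial h/\partial x_p \neq 0$ Lebesgue-almost everywhere on $\mathbf R^p$. Since $M$ is itself Lebesgue-null, an ``almost everywhere on $\mathbf R^p$'' statement conveys no automatic information on $M$, so the delicate task is to rule out points of $M$ at which $\partial h/\partial x_p$ vanishes. The route I would pursue exploits continuity: by Assumption~\ref{as:geometry}(a) the map $\partial h/\partial x_p$ is continuous, so the exceptional set $E = \{x : \partial h(x)/\partial x_p = 0\}$ is closed and, by (b), Lebesgue-null, hence nowhere dense. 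On the open set $O \setminus E$ the implicit function theorem represents $M \cap (O \setminus E)$ locally as a $C^{(2)}$ graph $x_p = \varphi(x_1, \dots, x_{p-1})$, yielding genuine $(p-1)$-manifold charts there; the remaining work is to show that $M$ does not meet $E$ in a way that destroys the manifold structure, equivalently that the level $z$ is attained only where $\nabla h \neq 0$. Establishing this is the technical heart of the argument and the step I expect to be hardest; once it is in place, the invocation of Lemma~\ref{lem:manifold} and the boundary fact are routine.
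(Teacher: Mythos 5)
Your proposal stops exactly where the proof has to be delivered. You correctly observe that Assumption \ref{as:geometry}(b) only gives $D_p h \neq 0$ Lebesgue-a.e., that $M$ is itself Lebesgue-null, and that Lemma \ref{lem:manifold} needs $\mathrm{rank}\,Df = 1$ at \emph{every} point of $M$ --- and then you defer precisely this step as ``the technical heart,'' with no argument supplied. That gap cannot be closed from (a)--(b) alone: take $p = 2$, $h(x) = x_1^2 + x_2^2$, $z = 0$. Then $h \in C^{(2)}$ and $\partial h / \partial x_2 = 2x_2 \neq 0$ off a null set, so (a)--(b) hold, yet $M = \{0\}$ is a single point and not a $1$-manifold in $\mathbf R^2$. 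So no continuity or nowhere-density argument about the exceptional set $E = \{x : D_p h(x) = 0\}$ can rescue the plan; the pointwise non-vanishing of $D_p h$ on $M$ (equivalently, that $z$ is a regular value in the relevant sense) is an additional hypothesis, not a consequence of the a.e.\ condition. The paper itself effectively operates under this stronger reading: its proof of Lemma \ref{lem:p-1-manifold} simply applies the implicit function theorem (Theorem 9.2 of \cite{munkres1991analysis}) at each $x \in M$, treating $D_p h(x) \neq 0$ as available there, and the downstream results (e.g., Lemma \ref{lem:partial-z}) explicitly restrict to regular values $z$.

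Beyond the gap, your skeleton also differs from the paper's route, in a way worth noting. The paper never passes through Lemma \ref{lem:manifold} and the boundary of the superlevel set $N = \{h \geq z\}$: instead it constructs, at each $x \in M$, an explicit coordinate patch $\alpha(u; z) = (u, k(u, z))$ by solving $h(u, k(u, z')) = z'$ with the implicit function theorem, and concludes via Theorem 5-2 of \cite{spivak1965calculus}. This is not cosmetic: the explicit graphs $k_j$ and patches $\alpha_j(u) = (u, k_j(u))$ are exactly what the supplement reuses to define the integral $\int_M g\,\dd V$ and to compute $\partial k_j(u,z)/\partial z = 1/D_p h$ in Lemma \ref{lem:partial-z}. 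Your more abstract ``boundary of a $p$-manifold is a $(p-1)$-manifold'' argument, even if the rank condition were secured, would establish the manifold structure but would not produce the charts that the subsequent calculations depend on.
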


\begin{proof}[\sc Proof of Lemma \ref{lem:p-1-manifold}]
For each $x \in M$, I aim at providing a coordinate patch on $M$ about $x$. Indeed, by Assumption \ref{as:geometry}(a)--(b) and Theorem 9.2 (implicit function theorem) of \cite{munkres1991analysis}, there exists an open set $U$ containing $u = (x_1, \dots, x_{p - 1})$, an open ball $B(z)$ containing $z$ and an open set $O$ in $\mathbf R$ containing $x_p$, and a function $k: U \times B(z) \to \mathbf R^p$ of class $C^{(2)}$ such that $h(u, k(u, z')) = z'$ for all $u \in U$, $z' \in B(z)$ and $x \in O$. Moreover, $k(U \times B(z)) = O$. Define the coordinate patch $\alpha(u; z) = (u, k(u, z))$. The conclusion follows by Theorem 5-2 of \cite{spivak1965calculus}.
\end{proof}

Note $M = \{x: h(x) = z\}$ is a $(p - 1)$-manifold by Lemmas \ref{lem:manifold} and \ref{lem:p-1-manifold}. In what follows, I will need the definition of the integral of a function $g$ over the manifold $M$. In order to do so, note there exists a coordinate patch as $\{\alpha_j: U_j \subseteq \mathbf R^{p - 1} \to V_j \subseteq M, j \in \mathcal J\}$, where $\alpha_j(u) = \alpha_j(u, z)$, and each $\alpha_j(u) = (u, k_j(u))$ for some function $k_j: U \to \mathbf R$ which is of class $C^2$, as shown in the proof of Lemma \ref{lem:p-1-manifold}, and $\alpha_j(U_j) = V_j$. Next, there exists a partition of unity $\{\phi_i: i \in \mathcal I\}$ dominated by the $\{V_j: j \in \mathcal J\}$. Moreover, both $\mathcal I$ and $\mathcal J$ can be chosen to be countable, according to Section 25 of \cite{munkres1991analysis}. The integral of a scalar function $g$ over the manifold is written as
\[ \int_M g\,\dd V = \sum_{j \in \mathcal J} \sum_{i \in \mathcal I} \int_{U_j} [(g \phi_i) \circ \alpha_j] V(D \alpha_j)~, \]
where $V(A) = \sqrt{\mathrm{det}(A'A)}$ is the volume. I have
\[ D \alpha_j = \Big [ I_{p - 1} \quad \frac{\partial k_j(u, z)}{\partial u} \Big ]~, \]
so that
\[ V(D \alpha_j) = \sqrt{1 + \frac{\partial k_j(u, z)}{\partial u'}\frac{\partial k_j(u, z)}{\partial u}} = \frac{\|\nabla h(u, k_j(u, z))\|}{|D_p h(u, k_j(u, z))|}~, \]
where $D_p = \frac{\partial}{\partial x_p}$, by the implicit function theorem and matrix determinant lemma. Note on one hand, for each $j \in \mathcal J$, only a finite number of $\phi_i$ is positive, and on the other hand, $\{\phi_i: i \in \mathcal I\}$ is dominated by the coordinate patch, which means each $\phi_i$ is supported on a compact set inside a single $V_j$. As a result, the order of the above double sum can be interchanged.

By p.345 of \cite{bogachev2007measure}, the conditional expectation of a function $g$ on the manifold $M$ is defined as
\[ E[g(X) | M] = \lim_{t \to 0} \frac{E[g(X) I \{z \leq h(X) \leq z + t\}]}{P \{z \leq h(X) \leq z + t\}}~. \]

\begin{lemma} \label{lem:ce-manifold}
Suppose Assumption \ref{as:geometry}(a)--(c) hold. Then
\begin{equation} \label{eq:ce-M}
E[g(X) | M] = \frac{\displaystyle \int_M \frac{f g}{\|\nabla h\|}\,\dd V}{\displaystyle \int_M \frac{f}{\|\nabla h\|}\,\dd V}~.
\end{equation}
\end{lemma}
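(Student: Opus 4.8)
The plan is to evaluate the defining limit for $E[g(X)\mid M]$ by means of the coarea formula, which converts the ``slab'' integrals over $\{z \le h \le z+t\}$ into iterated integrals over the level sets $h^{-1}(s)$ with the weight $1/\|\nabla h\|$. First I would write the numerator and denominator of the Bogachev ratio in terms of the density $f$,
\[ E[g(X) I\{z \le h(X) \le z+t\}] = \int_{\{z \le h \le z+t\}} g(x) f(x)\, \dd x, \qquad P\{z \le h(X) \le z+t\} = \int_{\{z \le h \le z+t\}} f(x)\, \dd x. \]
Since $h \in C^{(2)}$ is locally Lipschitz and, by Assumption \ref{as:geometry}(b), $\partial h/\partial x_p \neq 0$ (hence $\|\nabla h\| \neq 0$) Lebesgue almost everywhere, the coarea formula applies with test function $\psi = fg/\|\nabla h\|$, giving
\[ \int_{\{z \le h \le z+t\}} f g\, \dd x = \int_z^{z+t} \Big( \int_{h^{-1}(s)} \frac{fg}{\|\nabla h\|}\, \dd V \Big)\, \dd s, \]
and likewise with $g \equiv 1$ for the denominator. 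Here $\dd V$ is the $(p-1)$-dimensional Hausdorff (surface) measure on the level set, and the identity $V(D\alpha_j) = \|\nabla h\|/|D_p h|$ already computed in the text confirms that the intrinsic volume element matches the coarea weight, so that in the coordinate patches from Lemma \ref{lem:p-1-manifold} the integrand reduces to $(fg)\circ\alpha_j / |D_p h \circ \alpha_j|$.

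Having rewritten both quantities, I would set
\[ N(s) = \int_{h^{-1}(s)} \frac{fg}{\|\nabla h\|}\, \dd V, \qquad D(s) = \int_{h^{-1}(s)} \frac{f}{\|\nabla h\|}\, \dd V, \]
so that the defining ratio becomes
\[ E[g(X)\mid M] = \lim_{t \to 0} \frac{\int_z^{z+t} N(s)\, \dd s}{\int_z^{z+t} D(s)\, \dd s} = \lim_{t \to 0} \frac{t^{-1}\int_z^{z+t} N(s)\, \dd s}{t^{-1}\int_z^{z+t} D(s)\, \dd s}. \]
By the Lebesgue differentiation theorem the averaged numerator and denominator converge to $N(z)$ and $D(z)$ at every Lebesgue point of $N$ and $D$; the $C^{(2)}$ smoothness of $f$ and $h$, together with the implicit-function parametrization $\alpha_j(u,s) = (u, k_j(u,s))$ from Lemma \ref{lem:p-1-manifold}, makes $s \mapsto N(s)$ and $s \mapsto D(s)$ continuous near $s = z$, so these limits hold at $z$ itself. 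Since $f$ is a density positive near $M$ and $\|\nabla h\|$ is finite, $D(z) > 0$, and taking the ratio delivers exactly \eqref{eq:ce-M}.

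The main obstacle will be the rigorous justification of the two analytic ingredients rather than the algebra. First, one must check the hypotheses of the coarea formula, in particular that the critical set $\{\nabla h = 0\}$ is negligible for the surface integrals; this follows from Assumption \ref{as:geometry}(b), which guarantees $\nabla h \neq 0$ almost everywhere. Second, and more delicately, one must establish the continuity (or at least the Lebesgue-point property at $z$) of $N(s)$ and $D(s)$, which is what licenses passing from the averaged limit to the pointwise value at $z$. For this I would pass to the local coordinate patches, writing $N(s) = \sum_{i,j} \int_{U_j} \big[(\phi_i\, fg)\circ \alpha_j \,/\, |D_p h \circ \alpha_j|\big]\, \dd u$ against a partition of unity $\{\phi_i\}$ with the sum locally finite; because every factor is continuous in $(u,s)$ and $D_p h$ is bounded away from zero on the relevant compacta, the dominated convergence theorem yields continuity of $N$ and $D$ in $s$. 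This regularity step is the crux, as it is the only place where the $C^{(2)}$ assumptions on $h$ and $f$ are genuinely used beyond mere measurability.
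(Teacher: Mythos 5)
Your proposal is correct and follows essentially the same route as the paper: both rewrite $E[g(X) \mid M]$ as the ratio of the $t$-averaged slab integrals $t^{-1}E[g(X) I\{z \leq h(X) \leq z+t\}]$ and $t^{-1}P\{z \leq h(X) \leq z+t\}$, and then identify each limit with the level-set integral weighted by $1/\|\nabla h\|$. The only difference is that the paper obtains that identification by citing Lemma A.1 of \cite{chernozhukov2018sorted} (applied once with density $f$ and once with the ``density'' $fg$), whereas you prove it from scratch via the coarea formula together with a Lebesgue-differentiation/continuity argument in the coordinate patches --- a self-contained substitute for the citation that rests on the same ingredients (the a.e.\ nonvanishing of $D_p h$ from Assumption \ref{as:geometry}(b) and the volume identity $V(D\alpha_j) = \|\nabla h\|/|D_p h|$ already computed in the text).
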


For a continuously differentiable function $h: \mathbf R^p \to \mathbf R$, $x \in \mathbf R^p$ is a critical point of $h$ if $\nabla h(x) = 0$, where $\nabla h(x)$ is the gradient of $h$ at $x$; otherwise $x$ is a regular point of $h$. A value $z$ is a critical value of $h$ if the set $\{x: h(x) = z\}$ contains at least one critical point; otherwise $z$ is a regular value of $h$.

\begin{proof}[\sc Proof of Lemma \ref{lem:ce-manifold}]
By L'Hospital's rule,
\[ E[g(X) | M] = \frac{\displaystyle \lim_{t \to 0} \frac{E[g(X) I \{z \leq h(X) \leq z + t\}]}{t}}{\displaystyle \lim_{t \to 0} \frac{P \{z \leq h(X) \leq z + t\}}{t}}~, \]
and the lemma follows from Lemma A.1 of \cite{chernozhukov2018sorted}. In particular, the denominator equals the one in \eqref{eq:ce-M} directly by the same lemma, while for the numerator I merely need to redefine the `density' function as $f g$ and the same proof goes through.
\end{proof}

\begin{lemma} \label{lem:partial-z}
Suppose Assumption \ref{as:geometry}(a)--(b) hold. Let $M = \{x: h(x) = z\}$, where $z$ is a regular value of $h$ on $\mathbf R^p$. Then for any $g \in C^{(2)}$, 
\begin{equation} \label{eq:partial-z}
\frac{\partial}{\partial z} \int_M g\,\dd V = \int_M \frac{D_p g}{D_p h}\,\dd V + \int_M g \frac{1}{\|\nabla h\|^2} \sum_{1 \leq i \leq p} \frac{D_i h D_{ip} h}{D_p h}\,\dd V - \int_M g \frac{D_{pp} h}{D_p^2 h}\,\dd V~.
\end{equation}
\end{lemma}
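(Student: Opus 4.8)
The plan is to reduce everything to a computation inside the coordinate patches $\alpha_j(u,z)=(u,k_j(u,z))$ constructed in the proof of Lemma~\ref{lem:p-1-manifold}, where $k_j$ solves $h(u,k_j(u,z))=z$ for the last coordinate and exists because $D_p h\neq 0$ on $M$. Writing $J=\|\nabla h\|/|D_p h|$ for the Jacobian factor $V(D\alpha_j)$, the definition of the surface integral reads
\[
\int_{M_z} g\,\dd V=\sum_i\int_{U_{j(i)}}\bigl(g\,\phi_i\,J\bigr)(\alpha_{j(i)}(u,z))\,\dd u~,
\]
where $\{\phi_i\}$ is a partition of unity with $\sum_i\phi_i\equiv 1$, each $\phi_i$ supported in a single chart $V_{j(i)}$. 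Since $z$ is a regular value, $D_p h\neq 0$ on $M_z$, and by continuity of $\nabla h$ (here $h\in C^{(2)}$) also on an ambient neighborhood of $M_z$. I would fix the chart domains $U_j$ and an ambient partition of unity $\{\phi_i\}$, both independent of $z$ and valid for all levels $z'$ in a neighborhood of $z$, so that the only $z$-dependence of each $\alpha_{j(i)}$ is through its last coordinate $k_j$.

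The key identity comes from differentiating $h(u,k_j(u,z))=z$ in $z$, which gives $D_p h\cdot\partial_z k_j=1$, i.e.\ $\partial_z k_j=1/D_p h$. Consequently, for any smooth $F$ on $\mathbf R^p$, $\partial_z\bigl[F(\alpha_j)\bigr]=(D_p F)(\alpha_j)/D_p h$. Differentiating under the integral sign on each chart (justified by the $C^{(2)}$ bounds on compacta and the local finiteness of $\{\phi_i\}$) and then converting each chart integral back into a surface integral by multiplying and dividing by $J$, I obtain
\[
\frac{\partial}{\partial z}\int_{M_z} g\,\dd V=\sum_i\int_{M_z}\frac{D_p(g\,\phi_i\,J)}{J\,D_p h}\,\dd V=\int_{M_z}\frac{D_p(gJ)}{J\,D_p h}\,\dd V~,
\]
where the last equality uses linearity of $D_p$ together with $\sum_i\phi_i\equiv 1$, so that $\sum_i D_p(g\phi_i J)=D_p\!\bigl(gJ\sum_i\phi_i\bigr)=D_p(gJ)$.

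It then remains to simplify the integrand $D_p(gJ)/(J\,D_p h)$. Working on a chart where $D_p h$ has constant sign (WLOG positive), so that $J\,D_p h=\|\nabla h\|$, and using $D_p\|\nabla h\|=\bigl(\sum_{1\le i\le p}D_i h\,D_{ip}h\bigr)/\|\nabla h\|$, the product rule applied to $gJ=g\,\|\nabla h\|/D_p h$ splits this integrand exactly into the three pieces $D_p g/D_p h$, $\;g\,\|\nabla h\|^{-2}\sum_{1\le i\le p}D_i h\,D_{ip}h/D_p h$, and $-\,g\,D_{pp}h/D_p^2 h$, which are precisely the three asserted terms (and one checks the construction is insensitive to the sign of $D_p h$).

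The differential calculus above is routine once the identity $\partial_z k_j=1/D_p h$ is in hand, so the main obstacle is the analytic bookkeeping that legitimizes it: I must verify that the chart domains $U_j$ and the partition of unity $\{\phi_i\}$ can be chosen independently of $z$ on a neighborhood of the regular value---so that each $\alpha_j$ moves only through its last coordinate and the identity $\sum_i\phi_i\equiv 1$ persists across nearby levels---and that $g$ (and the density-type integrands to which the lemma is later applied) decays sufficiently, or meets $M_z$ in compacta, for the dominated-convergence justification of differentiation under the integral to go through uniformly for $z'$ near $z$.
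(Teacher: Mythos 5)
Your proposal is correct and follows essentially the same route as the paper: both differentiate the chart-based definition of the surface integral, use the implicit function theorem identity $\partial_z k_j = 1/D_p h$, pass the derivative inside the integral, and exploit $\sum_i \phi_i \equiv 1$ (hence $\sum_i D_p \phi_i = 0$) to discard the partition-of-unity derivatives. The only difference is organizational—you fold the Jacobian factor $J = \|\nabla h\|/|D_p h|$ into a single ambient integrand $gJ$ and expand $D_p(gJ)/(J D_p h)$ at the end, whereas the paper applies the product rule to $[(g\phi_i)\circ\alpha_j]\,V(D\alpha_j)$ directly and sums the two resulting pieces separately, arriving at the same three terms.
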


\begin{proof}[\sc Proof of Lemma \ref{lem:partial-z}]
To begin with, note
\begin{align}
\nonumber \frac{\partial}{\partial z} & \int_{U_j} [(g \phi_i) \circ \alpha_j] V(D \alpha_j) \\
\nonumber & = \int_{U_j} D_p (g \phi_i) \frac{\partial k_j(u, z)}{\partial z} \frac{\|\nabla h\|}{|D_p h|} \\
\label{eq:partial-z-j} & \hspace{1cm} + \int_{U_j} g \phi_i \frac{|D_p h|}{\|\nabla h\|} \frac{\partial k_j(u, z)}{\partial z} \frac{1}{D_p^4 h} \Big ( D_p^2 h \sum_{1 \leq i \leq p} D_i h D_{ip} h - D_p h D_{pp} h \sum_{1 \leq i \leq p} D_i^2 h  \Big )~,
\end{align}
where $D_{ij} h = \partial_i \partial_j h$ for any function $h \in C^{(2)}$. I have suppressed the arguments of $h$, being $(u, k_j(u, z))$. Note it is legitimate to pass differentiation inside the integral by the dominated convergence theorem. By the Implicit Function Theorem again,
\begin{equation} \label{eq:partial-k}
\frac{\partial k_j(u, z)}{\partial z} = \frac{1}{D_p h(u, k_j(u, z))}~.
\end{equation}
By Theorem 7.17 of \cite{rudin1976principles}, I know $\frac{\partial}{\partial z} \int_M g(x)\,\dd V$ is the sum over $i \in \mathcal I, j \in \mathcal J$ of the two terms in \eqref{eq:partial-z-j}. Using \eqref{eq:partial-k}, the sum of the first term is
\begin{align}
\nonumber & \sum_{j \in \mathcal J} \sum_{i \in \mathcal I} \int_{U_j} (\phi_i D_p g + g D_p \phi_i) \frac{1}{D_p h} \frac{\|\nabla h\|}{|D_p h|} \\
\nonumber & = \sum_j \int_{U_j} \frac{D_p g}{D_p h} V(D \alpha_j) \\
\label{eq:partial-z-1} & = \int_M \frac{D_p g}{D_p h}\,\dd V~,
\end{align}
because $\sum_{i \in \mathcal I} \phi_i = 1$ and hence $\sum_{i \in \mathcal I} D_p \phi_i = D_p \sum_{i \in \mathcal I} \phi_i = 0$. Again, the interchange of differentiation and sum is allowed because the sum is actually over a finite number of terms, by definition of a partition of unity. The sum of the second term is
\begin{align}
\nonumber & \sum_{j \in \mathcal J} \sum_{i \in \mathcal I} \int_{U_j} g \phi_i \frac{|D_p h|}{\|\nabla h\|} \frac{1}{D_p^4 h} \sum_{1 \leq i \leq p} (D_i h D_p h D_{ip} h - D_i^2 h D_{pp} h) \\
\nonumber & = \sum_{j \in \mathcal J} \int_{U_j} g \frac{D_p^2 h}{\|\nabla h\|^2} \frac{1}{D_p^4 h} \sum_{1 \leq i < p} (D_i h D_p h D_{ip} h - D_i^2 h D_{pp} h) V(D \alpha) \\
\nonumber & = \int_M  g \frac{1}{\|\nabla h\|^2 D_p^2h} \sum_{1 \leq i \leq p} (D_i h D_p h D_{ip} h - D_i^2 h D_{pp} h)\,\dd V \\
\label{eq:partial-z-2} & =  \int_M g \frac{1}{\|\nabla h\|^2 } \sum_{1 \leq i \leq p} \frac{D_i h D_{ip} h}{D_p h}\,\dd V - \int_M g \frac{D_{pp} h}{D_p^2 h}\,\dd V~.
\end{align} 
\eqref{eq:partial-z} now follows from \eqref{eq:partial-z-1} and \eqref{eq:partial-z-2}.
\end{proof}

\begin{theorem} \label{thm:ce-manifold-partial}
Suppose Assumption \ref{as:geometry} holds. If $z$ is a regular value of $h$, then
\begin{equation} \label{eq:ce-manifold-partial}
\frac{\partial}{\partial z} E[g(X) | M] = \frac{\displaystyle \int_M \frac{D_p(f g / D_{p} h)}{\|\nabla h\|}\,\dd V \int_M \frac{f}{\|\nabla h\|}\,\dd V - \int_M \frac{D_p(f / D_{p} h)}{\|\nabla h\|}\,\dd V \int_M \frac{f g}{\|\nabla h\|}\,\dd V}{\displaystyle \Big [ \int_M \frac{f}{\|\nabla h\|}\,\dd V \Big ]^2}~.
\end{equation}
\end{theorem}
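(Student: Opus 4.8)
The plan is to view $E[g(X)\mid M]$ as a ratio of two surface integrals and differentiate it by the quotient rule, reducing the problem to differentiating each integral separately with Lemma \ref{lem:partial-z}. Writing $N(z) = \int_M \frac{fg}{\|\nabla h\|}\,\dd V$ and $D(z) = \int_M \frac{f}{\|\nabla h\|}\,\dd V$, Lemma \ref{lem:ce-manifold} gives $E[g(X)\mid M] = N(z)/D(z)$, so it suffices to prove
\[ N'(z) = \int_M \frac{D_p(fg/D_p h)}{\|\nabla h\|}\,\dd V \quad\text{and}\quad D'(z) = \int_M \frac{D_p(f/D_p h)}{\|\nabla h\|}\,\dd V~, \]
after which substituting into $(N'D - D'N)/D^2$ and factoring out the common denominator $D(z)^2 = \big[\int_M f/\|\nabla h\|\,\dd V\big]^2$ produces \eqref{eq:ce-manifold-partial} directly.

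First I would apply Lemma \ref{lem:partial-z} to the integrand $G = \psi/\|\nabla h\|$ for a generic weight $\psi$, specializing to $\psi = fg$ for $N$ and $\psi = f$ for $D$ (both inherit the needed regularity from Assumption \ref{as:geometry} because $z$ is a regular value, so $\nabla h \neq 0$ on $M$ and $\|\nabla h\|$ is smooth and bounded away from zero in a neighborhood; note only first derivatives of the argument enter the formula). This yields three surface integrals, and the crux is an algebraic simplification. Using $D_p \|\nabla h\| = \|\nabla h\|^{-1}\sum_{1\le i\le p} D_i h\, D_{ip} h$, the product and chain rules give
\[ D_p G = \frac{D_p \psi}{\|\nabla h\|} - \psi\,\frac{\sum_{1 \leq i \leq p} D_i h\, D_{ip} h}{\|\nabla h\|^3}~. \]
Dividing by $D_p h$ and adding the second integrand of Lemma \ref{lem:partial-z}, namely $G\,\|\nabla h\|^{-2}\sum_i D_i h\, D_{ip} h / D_p h = \psi \sum_i D_i h\, D_{ip} h/(\|\nabla h\|^3 D_p h)$, the gradient-norm terms cancel exactly, leaving $\frac{D_p \psi}{\|\nabla h\|\, D_p h}$. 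Combining with the third integrand $-G\, D_{pp} h/D_p^2 h = -\psi D_{pp} h/(\|\nabla h\|\, D_p^2 h)$ gives the integrand $\frac{1}{\|\nabla h\|}\big(\frac{D_p \psi}{D_p h} - \frac{\psi D_{pp} h}{D_p^2 h}\big)$, which is precisely $\frac{1}{\|\nabla h\|} D_p\big(\psi/D_p h\big)$ by the quotient rule. This establishes the claimed forms of $N'(z)$ and $D'(z)$.

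The main obstacle is exactly this cancellation step: one must track the three terms generated by Lemma \ref{lem:partial-z} and verify that the $\sum_i D_i h\, D_{ip} h$ contribution coming from differentiating $\|\nabla h\|^{-1}$ annihilates the corresponding intrinsic-curvature term, after which recognizing the remainder as $D_p(\psi/D_p h)/\|\nabla h\|$ is routine. A secondary point to record is the regularity bookkeeping: the regular-value hypothesis ensures $\nabla h \neq 0$ and, through the coordinate patches of Lemma \ref{lem:p-1-manifold}, $D_p h \neq 0$ on $M$, so every denominator above is nonvanishing and the passage of differentiation under the integral sign is justified exactly as in the proof of Lemma \ref{lem:partial-z}. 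With $N'(z)$ and $D'(z)$ in hand, assembling the quotient rule completes the proof.
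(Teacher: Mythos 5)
Your proposal is correct and follows essentially the same route as the paper's proof: apply Lemma \ref{lem:partial-z} to the integrand $\psi/\|\nabla h\|$ (the paper does this with $\psi = f$ and then notes the same argument works for $\psi = fg$), verify that the $\sum_{1 \leq i \leq p} D_i h\, D_{ip} h$ contribution from differentiating $\|\nabla h\|^{-1}$ cancels the second term of the lemma, recognize the remainder as $D_p(\psi/D_p h)/\|\nabla h\|$, and conclude with the quotient rule. Your only departure is the cosmetic unification via the generic weight $\psi$, which changes nothing of substance.
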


\begin{proof}[\sc Proof of Theorem \ref{thm:ce-manifold-partial}]
To begin with, replace $g$ in Lemma \ref{lem:partial-z} with $\frac{f}{\|\nabla h\|}$. I then have
\begin{align}
\nonumber & \frac{\partial}{\partial z} \int_M \frac{f}{\|\nabla h\|}\,\dd V \\
\nonumber & = \int_M \frac{\|\nabla h\| \displaystyle D_p f - \frac{f \sum_{1 \leq i \leq p} D_i h D_{ip} h}{\|\nabla h\|}}{\|\nabla h\|^2 \displaystyle D_p h}\,\dd V \\
\nonumber & \hspace{3em} + \int_M \frac{f}{\|\nabla h\|^3} \sum_{1 \leq i \leq p} \frac{D_i h D_{ip} h}{D_p h}\,\dd V - \int_M \frac{f D_{pp} h}{\|\nabla h\| D_p^2 h}\,\dd V \\
\nonumber & = \int_M \frac{D_pf D_p h - f D_{pp} h}{\|\nabla h\| D_p^2 h}\,\dd V \\
\label{eq:ce-numer-partial} & = \int_M \frac{D_p(f / D_{p} h)}{\|\nabla h\|}\,\dd V~. 
\end{align}
By the same arguments, 
\begin{equation} \label{eq:ce-denom-partial}
\frac{\partial}{\partial z} \int_M \frac{f g}{\|\nabla h\|}\,\dd V = \int_M \frac{D_p(f g / D_{p} h)}{\|\nabla h\|}\,\dd V~.
\end{equation}
\eqref{eq:ce-manifold-partial} now follows from \eqref{eq:ce-numer-partial} and \eqref{eq:ce-denom-partial} together with the quotient rule.
\end{proof}

In general, by the Law of Iterated Expectation
\[ E[Y_i^r(d) | h(X) = z] = E[E[Y_i^r(d) | X] | h(X) = z]~. \]
Suppose $h$ and the density function of $X$, $f(X)$ satisfy the smoothness conditions in Assumption \ref{as:geometry}, the derivative
\[ \frac{\partial}{\partial z} E[g(X) | h(X) = z] \]
is given in Theorem \ref{thm:ce-manifold-partial}, where $g(x) = E[Y_i^r(d) | X = x]$ for $r = 1, 2$ and $d = 0, 1$. In particular, it is equal to
\begin{align}
\nonumber & E \Big [\frac{D_p g}{D_p h} + \frac{g D_p f}{f D_p h} - \frac{g D_{pp} h}{D_p^2 h} \Big | h(X) = z \Big] - E \Big [ \frac{D_p f}{f D_p h} - \frac{D_{pp} h}{D_p^2 h} \Big | h(X) = z \Big ] E \Big [ g \Big | h(X) = z \Big ] \\
\label{eq:partial-z-cov} & = E \Big [ \frac{D_p g}{D_p h} \Big | h(X) = z \Big ] + \mathrm{Cov} \Big [ \frac{D_p f}{f D_p h} - \frac{D_{pp} h}{D_p^2 h}, g \Big | h(X) = z \Big ]~.
\end{align}

\begin{lemma} \label{lem:lip-sufficient}
Each of the following conditions imply the boundedness of \eqref{eq:partial-z-cov}.
\begin{enumerate}
	\item $h$ is linear, $\|D_p g\|_\infty < \infty$, $\|g\|_\infty < \infty$ and $\|D_p (\ln f)\|_\infty < \infty$.
	\item $h$ is linear, $\sup_{z \in \mathbf R} | E[D_p g | h(X) = z] | < \infty$, $\sup_{z \in \mathbf R} | E[g^2 | h(X) = z] | < \infty$ and $\sup_{z \in \mathbf R} | E[D_p^2(\ln f) | h(X) = z] | < \infty$.
	\item $h$ includes linear and interaction terms, $\Big \| \frac{D_p g}{D_p h} \Big \|_\infty < \infty$, $\|g\|_\infty < \infty$ and $\Big \|\frac{D_p (\ln f)}{D_p h} \Big \|_\infty < \infty$.
\end{enumerate}
\end{lemma}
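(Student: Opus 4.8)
The plan is to treat \eqref{eq:partial-z-cov} as the sum of the conditional-mean term $E[\frac{D_p g}{D_p h}\mid h(X)=z]$ and the conditional-covariance term $\mathrm{Cov}[\frac{D_p f}{f D_p h}-\frac{D_{pp}h}{D_p^2 h},\,g\mid h(X)=z]$, and to bound each \emph{uniformly} over $z$ under each of the three hypotheses; recall that ``boundedness'' of \eqref{eq:partial-z-cov} is exactly what delivers the Lipschitz continuity of $z\mapsto E[g(X)\mid h(X)=z]$ required by condition (b) of $\mathbf H$. The first observation, common to all three cases, is that whenever $h$ is linear (cases 1 and 2) or consists of linear and interaction terms (case 3), there is no pure $x_p^2$ term, so $D_{pp}h\equiv 0$ and the $\frac{D_{pp}h}{D_p^2 h}$ summand drops out; the covariance integrand then collapses to $\frac{D_p f}{f D_p h}=\frac{D_p(\ln f)}{D_p h}$. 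In cases 1 and 2, $D_p h$ is moreover a nonzero constant $a_p$, while in case 3 it is affine and nonvanishing on the regular level set by Assumption \ref{as:geometry}(b).

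Cases 1 and 3 then follow from the elementary inequality $|\mathrm{Cov}[A,B\mid h(X)=z]|\le\sqrt{\var[A\mid h(X)=z]\,\var[B\mid h(X)=z]}\le\|A\|_\infty\|B\|_\infty$ for bounded $A,B$. I would take $A=\frac{D_p(\ln f)}{D_p h}$ and $B=g$. Under case 1 one has $\|A\|_\infty\le\|D_p(\ln f)\|_\infty/|a_p|<\infty$ and $\|B\|_\infty=\|g\|_\infty<\infty$, while the mean term is dominated by $\|D_p g\|_\infty/|a_p|$; under case 3 the hypotheses $\|\frac{D_p(\ln f)}{D_p h}\|_\infty<\infty$, $\|g\|_\infty<\infty$, and $\|\frac{D_p g}{D_p h}\|_\infty<\infty$ bound the covariance and the mean term directly. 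All of these bounds are uniform in $z$, which is what we need.

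Case 2 is the crux and demands a second-moment argument rather than sup-norm control, since $D_p(\ln f)$ is no longer assumed bounded. Here $D_p h=a_p$, so the mean term equals $a_p^{-1}E[D_p g\mid h(X)=z]$ and is bounded by hypothesis, and for the covariance I would apply Cauchy--Schwarz, $|\mathrm{Cov}[D_p(\ln f),g\mid h(X)=z]|\le\sqrt{\var[D_p(\ln f)\mid h(X)=z]}\,\sqrt{\var[g\mid h(X)=z]}$, where $\var[g\mid h(X)=z]\le E[g^2\mid h(X)=z]$ is bounded by hypothesis. The remaining task is to bound the conditional Fisher-type information $\var[D_p(\ln f)\mid h(X)=z]$. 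The plan is to change to coordinates $(x_1,\dots,x_{p-1},t)$ with $t=h(x)=a'x$, under which the level sets are parallel hyperplanes and the transverse derivative satisfies $D_p=a_p\,\partial_t$, and then to integrate by parts in $t$. This should yield the identity $\var[D_p(\ln f)\mid h(X)=z]=-E[D_p^2(\ln f)\mid h(X)=z]+a_p^2\,\partial_z^2\ln f_T(z)$, where $f_T$ is the marginal density of $h(X)$: the first term is controlled by the hypothesis $\sup_z|E[D_p^2(\ln f)\mid h(X)=z]|<\infty$, and the one-dimensional log-density curvature $\partial_z^2\ln f_T$ is finite by the $C^{(2)}$ smoothness of $f$ in Assumption \ref{as:geometry}(c).

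The delicate points I expect to spend the most effort on are all inside case 2: justifying differentiation under the integral sign and the vanishing of the transverse boundary terms (this is where moment and tail decay of $f$ enter), checking that the squared-score and second-derivative contributions recombine into the clean identity above (a Gaussian computation, in which $\var[D_p(\ln f)\mid h(X)=z]$ comes out constant in $z$, is a useful sanity check), and confirming that $\partial_z^2\ln f_T$ is genuinely bounded rather than merely finite pointwise. By contrast, cases 1 and 3 should be routine once the reduction $D_{pp}h\equiv 0$ is recorded, so the entire weight of the argument rests on the conditional Fisher-information bound in case 2.
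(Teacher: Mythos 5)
Your cases 1 and 3 are correct and match what the paper intends: since $h$ has no $x_p^2$ term, $D_{pp}h \equiv 0$, the covariance integrand collapses to $D_p(\ln f)/D_p h$, and the bound $|\mathrm{Cov}[A,g \mid h(X)=z]| \leq \sqrt{\var[A \mid h(X)=z]\,\var[g\mid h(X)=z]} \leq \|A\|_\infty \|g\|_\infty$ together with the sup-norm hypotheses finishes the argument. This is exactly the level at which the paper treats the lemma (its proof is literally ``follows from inspection'').

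Case 2, however, contains a genuine gap, and it originates in a misreading of notation. Throughout this section of the paper, $D_{pp}h$ denotes the second derivative $\partial_p\partial_p h$ while $D_p^2 h$ denotes the \emph{square} $(D_p h)^2$ — this is forced, e.g., by the expression $\frac{D_{pp}h}{D_p^2 h}$ in Lemma \ref{lem:partial-z} and by the identity $D_p(f/D_p h) = \frac{D_p f\, D_p h - f D_{pp}h}{D_p^2 h}$ used in the proof of Theorem \ref{thm:ce-manifold-partial}. So the hypothesis $\sup_z |E[D_p^2(\ln f)\mid h(X)=z]| < \infty$ in condition 2 is a bound on the conditional second moment of the score, $E[(D_p \ln f)^2 \mid h(X)=z]$, not on the conditional mean of the second derivative $D_{pp}(\ln f)$. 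Under the intended reading, case 2 is as routine as your cases 1 and 3: with $D_p h = a_p$ constant, Cauchy--Schwarz gives $|\mathrm{Cov}[D_p \ln f,\, g \mid h(X)=z]| \leq (E[(D_p\ln f)^2 \mid h(X)=z])^{1/2} (E[g^2 \mid h(X)=z])^{1/2}$, and every factor is bounded by hypothesis — no integration by parts is needed. Under your reading, the Fisher-information identity you conjecture, $\var[D_p \ln f \mid h(X)=z] = -E[D_{pp}\ln f \mid h(X)=z] + a_p^2\,\partial_z^2 \ln f_T(z)$, is in fact correct (it can be derived cleanly by applying Lemma \ref{lem:partial-z} twice to the surface integrals $\int_M f\,\dd V$ and $\int_M D_p f\,\dd V$, so the boundary-term worries you raise are already handled by that lemma). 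But your final step fails: Assumption \ref{as:geometry}(c) gives only that $f \in C^{(2)}$, hence $\partial_z^2 \ln f_T$ is finite pointwise and locally bounded — it is \emph{not} uniformly bounded over $z \in \mathbf{R}$ (for tails like $f_T(z) \propto e^{-z^4}$ one has $\partial_z^2 \ln f_T(z) = -12z^2$). Since your hypotheses-as-read do not control this marginal curvature term, the uniform bound you need cannot be extracted, and the weight you correctly identify as resting on this step cannot be borne by the assumptions. The fix is simply to adopt the paper's notational convention, after which all three cases are one-line Cauchy--Schwarz arguments.
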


\begin{proof}[\sc Proof of Lemma \ref{lem:lip-sufficient}]
Follows from inspection.
\end{proof}

\section{Supplementary Theoretical Results} \label{sec:supplemental}
\subsection{Optimal Stratification for General Treated Fractions} \label{sec:tau}
The next theorem shows the infeasible optimal stratification has a similar structure to \eqref{eq:oracle} when $\tau \neq \frac{1}{2}$.
\begin{theorem} \label{thm:tau-oracle}
Assume $\tau = \frac{l}{k}$ where $l, k \in \mathbf N$, $0 < l < k$, and that the sample size is $kn$. Let $\pi^{\tau, g^\tau}$ be a permutation of $\{1, \dots, kn\}$ such that $g_{\pi^{\tau, g^\tau}(1)}^\tau \leq \dots \leq g_{\pi^{\tau, g^\tau}(kn)}^\tau$ for $g^\tau$ defined in \eqref{eq:tau-g}. Then, \eqref{eq:min-mse} is solved by
\begin{equation} \label{eq:tau-oracle}
\lambda^{\tau, g}(X^{(n)}) = \{\{\pi^{\tau, g^\tau}((s - 1)k + 1), \dots \pi^{\tau, g^\tau}(s k)\}: 1 \leq s \leq n\}~.
\end{equation}
\end{theorem}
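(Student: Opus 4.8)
The plan is to follow the same three-step strategy used to prove Theorem \ref{thm:oracle}, now adapted to the inverse-probability-weighted difference-in-means estimator
\[ \hat\theta_n = \frac{1}{kn}\sum_{1\le i\le kn}\Big(\frac{Y_i(1)D_i}{\tau}-\frac{Y_i(0)(1-D_i)}{1-\tau}\Big) \]
that is appropriate when $\tau\neq\tfrac12$ and is implicit in Lemma \ref{lem:tau-limit}. First I would record the analog of Lemma \ref{lem:post}: because Assumption \ref{as:half} gives $E_\lambda[D_i\mid X^{(n)}]=\tau$, the estimator is ex-ante unbiased for $\theta_n$ across all $\lambda$, and the same law-of-total-variance decomposition as in \eqref{eq:var} splits $\mse(\lambda\mid X^{(n)})$ into a term that is constant across stratifications plus the variance of the ex-post bias, $V(\lambda)=\var_\lambda[\bias_n^{\rm post}(\hat\theta_n\mid X^{(n)},D^{(n)})\mid X^{(n)}]$. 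Minimizing the MSE is thus equivalent to minimizing $V(\lambda)$.

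Next I would compute $V(\lambda)$ explicitly. Writing out the ex-post conditional expectation and discarding the terms that are constant given $X^{(n)}$, the only stochastic part is $\tfrac1{kn}\sum_i D_i\,g^\tau(X_i)$, so that, with $g^{\tau,(n)}=(g^\tau(X_1),\dots,g^\tau(X_{kn}))'$,
\[ V(\lambda)=\frac{1}{(kn)^2}\,(g^{\tau,(n)})'\var_\lambda[D^{(n)}\mid X^{(n)}]\,g^{\tau,(n)}~, \]
exactly as in the main proof but with the inverse-probability-weighted index $g^\tau$ of \eqref{eq:tau-g} in place of $g$. For a stratification into groups of size $k$ with $l$ treated chosen uniformly without replacement, $\var[D_i]=\tau(1-\tau)$, the within-stratum covariance is $-\tau(1-\tau)/(k-1)$, and cross-stratum covariances vanish. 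Substituting and using the identity $k\sum_{i\in\lambda_s}a_i^2-(\sum_{i\in\lambda_s}a_i)^2=\sum_{i<j\in\lambda_s}(a_i-a_j)^2$ gives
\[ V(\lambda)=\frac{\tau(1-\tau)}{(kn)^2(k-1)}\sum_{1\le s\le n}\ \sum_{i<j\in\lambda_s}\big(g^\tau(X_i)-g^\tau(X_j)\big)^2~, \]
so the problem reduces to partitioning the $kn$ values $g^\tau(X_i)$ into $n$ groups of size $k$ to minimize the total within-group sum of squared pairwise distances.

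The main obstacle is the rearrangement step, which generalizes Lemma \ref{lem:hardy-littlewood} from pairs ($k=2$) to groups of size $k$; I would argue by an exchange lemma paralleling that proof. Fix any two strata $A$ and $B$ with $|A|=|B|=k$ and hold the multiset $A\cup B$ fixed. Since
\[ \sum_{i<j\in A}(a_i-a_j)^2+\sum_{i<j\in B}(a_i-a_j)^2 = k\sum_{i\in A\cup B}a_i^2-\Big[\Big(\sum_{i\in A}a_i\Big)^2+\Big(\sum_{i\in B}a_i\Big)^2\Big] \]
and the first term depends only on $A\cup B$, minimizing the left-hand side is equivalent to maximizing $(\sum_{i\in A}a_i)^2+(\sum_{i\in B}a_i)^2$; because the total $\sum_{i\in A}a_i+\sum_{i\in B}a_i$ is fixed, this is in turn equivalent to maximizing $|\sum_{i\in A}a_i-\sum_{i\in B}a_i|$, which among all equal splits of $2k$ values is attained by placing the $k$ smallest in one group and the $k$ largest in the other. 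Hence whenever two strata are not value-separated, reallocating their $2k$ values into a low half and a high half of size $k$ weakly decreases $V(\lambda)$; finitely many such reallocations (as in the interchange argument of Lemma \ref{lem:hardy-littlewood}) produce a fully value-separated partition, which is necessarily the consecutive grouping of the sorted $g^\tau(X_i)$, namely $\lambda^{\tau,g}(X^{(n)})$ in \eqref{eq:tau-oracle}, with ties only creating equally optimal alternatives. This shows $\lambda^{\tau,g}(X^{(n)})$ minimizes $V$ among all groupings into strata of size $k$.

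Finally, to upgrade optimality from size-$k$ groupings to all stratifications in $\Lambda_n$, I would invoke the generalization of Lemma \ref{lem:mixing}. Under Assumption \ref{as:half} with $\tau=l/k$ and $\gcd(l,k)=1$, each $n_s\tau$ must be an integer, forcing every stratum size $n_s$ to be a multiple of $k$; and the same combinatorial counting argument as in the proof of Lemma \ref{lem:mixing} shows that assigning $n_s\tau$ treated units uniformly at random within a stratum of size $n_s=mk$ induces the same distribution of $D^{(n)}$ as a uniform mixture of groupings of that stratum into $m$ groups of size $k$, each with $l$ treated. Because $\mse(\lambda\mid X^{(n)})$ depends on $\lambda$ only through the distribution of $D^{(n)}$, it is linear in such mixtures, so for any $\lambda=\bigoplus_{1\le j\le J}\delta_j\lambda^j$ with the $\lambda^j$ groupings into size-$k$ strata,
\[ \mse(\lambda\mid X^{(n)})=\sum_{1\le j\le J}\delta_j\,\mse(\lambda^j\mid X^{(n)})\ge\min_{1\le j\le J}\mse(\lambda^j\mid X^{(n)})\ge\mse(\lambda^{\tau,g}(X^{(n)})\mid X^{(n)})~, \]
which completes the proof.
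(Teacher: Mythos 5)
Your proposal is correct and follows essentially the same route as the paper's proof: reduce the MSE to the variance of the ex-post bias, use the mixing (convex-combination) argument to restrict attention to strata of size $k$, compute that the objective is proportional to the within-stratum dispersion of $g^\tau$, and finish with a rearrangement/exchange argument generalizing Lemma \ref{lem:hardy-littlewood}. Your covariance-matrix computation (with within-stratum covariance $-\tau(1-\tau)/(k-1)$) and pairwise-difference identity are just an equivalent packaging of the paper's direct enumeration over the $\binom{k}{l}$ treatment subsets, and your observation that $\gcd(l,k)=1$ forces every stratum size to be a multiple of $k$ makes explicit a detail the paper leaves implicit.
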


\begin{proof}[\sc Proof of Theorem \ref{thm:tau-oracle}]
First, note
\[ \hat \theta_n = \frac{1}{kn} \sum_{1 \leq i \leq kn} \Big ( \frac{1}{\tau} Y_i(1) D_i - \frac{1}{1 - \tau} Y_i(0) (1 - D_i) \Big )~. \]
Next,
\[ \mse(\lambda | X^{(n)}) = \var_\lambda[\hat \theta_n | X^{(n)}]~, \]
so that I need only consider conditional variances of $\hat \theta$ given $X^{(n)}$ which can be decomposed as in \eqref{eq:var}. By repeating the arguments in the proof of Lemma \ref{lem:post}, for any $\lambda \in \Lambda_n$, the first term of the right-hand side of \eqref{eq:var} equals
\[ \frac{1}{k^2 n^2} \sum_{1 \leq i \leq kn} \Big ( \frac{\var[Y_i(1) | X_i]}{\tau} + \frac{\var[Y_i(0) | X_i]}{1 - \tau} \Big )~, \]
again identical across all $\lambda \in \Lambda_n$. Therefore, I need only consider
\[ \var_\lambda[E[\hat \theta_n | X^{(n)}, D^{(n)}] | X^{(n)}]~. \]
By repeating the arguments in the proof of Lemma \ref{lem:mixing}, a stratum of size $k l$ where $l > 1$ is a convex combination of stratifications with strata only of size $k$. In particular, let $\Lambda_n^k$ denote the set of all stratifications for which each stratum is of size $k$. Then, I have $\Lambda_n \in \mathrm{co}(\Lambda_n^k)$. I could therefore focus on the case where each stratum is of size $k$. For any stratification of the form $\lambda = \{\{\pi((s - 1)k + 1, \dots \pi(s k)\}: 1 \leq s \leq n\}$,
\[ \var_\lambda[E[\hat \theta_n | X^{(n)}, D^{(n)}] | X^{(n)}] \propto \sum_{1 \leq s \leq n} \sum_{(s - 1)k + 1 \leq j \leq sk} (g_{\pi(j)}^\tau - \bar g_s^\tau)^2~, \]
where $g_i^\tau$ is defined in \eqref{eq:tau-g} and
\[ \bar g_s^\tau = \frac{1}{k} \sum_{(s - 1)k + 1 \leq j \leq sk} g_{\pi(j)}^\tau~. \]
To see this, first note units are independent across strata, so that by repeating the arguments in the proof of Lemma \ref{lem:post},
\[ \var_\lambda[E[\hat \theta_n | X^{(n)}, D^{(n)}] | X^{(n)}] \propto \sum_{1 \leq s \leq n} \var_\lambda \Big [ \sum_{(s - 1)k + 1 \leq j \leq sk} g_{\pi(j)}^\tau D_{\pi(j)} \Big ]~. \]
Next,
\begin{align*}
& \var_\lambda \Big [ \sum_{(s - 1)k + 1 \leq j \leq sk} g_{\pi(j)}^\tau D_{\pi(j)} \Big ] \\
& = \frac{1}{\binom{k}{l}} \sum_{(s - 1)k + 1 \leq j_1 < \dots < j_l \leq sk} \Big ( \sum_{1 \leq \iota \leq l} g_{\pi(j_\iota)}^\tau - l \bar g_s^\tau \Big )^2 \\
& = \frac{l}{k} \sum_{(s - 1)k + 1 \leq j \leq sk} (g_{\pi(j)}^\tau - \bar g_s^\tau)^2 + \frac{1}{\binom{k}{l}} \sum_{(s - 1)k + 1 \leq j_1 < \dots < j_l \leq sk} \sum_{1 \leq \iota_1 \neq \iota_2 \leq l} (g_{\pi(j_{\iota_1})} - \bar g_s^\tau) (g_{\pi(j_{\iota_2})} - \bar g_s^\tau) \\
& = \frac{l}{k} \sum_{(s - 1)k + 1 \leq j \leq sk} (g_{\pi(j)}^\tau - \bar g_s^\tau)^2 + \frac{\binom{k - 2}{l - 2}}{\binom{k}{l}} \Big [ \Big ( \sum_{(s - 1)k + 1 \leq j \leq sk} g_{\pi(j)}^\tau - k \bar g_s^\tau \Big )^2 - \sum_{(s - 1)k + 1 \leq j \leq sk} (g_{\pi(j)}^\tau - \bar g_s^\tau)^2 \Big ] \\
& \propto \sum_{(s - 1)k + 1 \leq j \leq sk} (g_{\pi(j)}^\tau - \bar g_s^\tau)^2~,
\end{align*}
where the first equality holds by definition, the second holds by expanding the square, the third holds by accounting for cross product terms, and the fourth holds because the first term inside the square bracket on the fourth line is 0. The conclusion follows from similar arguments to those used in the proof of Lemma \ref{lem:hardy-littlewood}.
\end{proof}

\begin{remark} \rm
Researchers are sometimes faced with the the situation where both $l$ and $k$ in Theorem \ref{thm:tau-oracle} are large and they are mutually prime. For example, suppose there are 52 participants and 31 seats for treatment. In that case, because the treated fraction is close to $3 / 5$, our recommendation is to split the sample into 6 strata of size 10, 10, 10, 10, 10, 2, treat 6 of the 10 units in each of the first five strata, and 1 of the 2 units in the last stratum.
\end{remark}

The next theorem is the limiting counterpart to Theorem \ref{thm:tau-oracle}. It shows the asymptotic variance of $\hat \theta_n$ is minimized by choosing $h = g^{\tau}$ defined in \eqref{eq:tau-g}.

\begin{theorem} \label{thm:tau-min}
Suppose $\tau \in (0, 1)$. Let $h: \mathrm{supp}(X_i) \to \mathbf R$ be a measurable function, and $\pi^{\tau, h}$ be a permutation of $\{1, \dots, kn\}$ such that $h_{\pi^{\tau, h}(1)} \leq \dots \leq h_{\pi^{\tau, h}(kn)}$. Define
\begin{equation} \label{eq:tau-h}
\lambda^{\tau, h}(X^{(n)}) = \{\{\pi^{\tau, h}((s - 1)k + 1), \dots \pi^{\tau, h}(s k)\}: 1 \leq s \leq n\}~.
\end{equation}
Further define $\bar h_s^{\tau} = \frac{1}{k} \sum_{(s - 1)k + 1 \leq j \leq sk} h_{\pi^{\tau, h}(j)}$.
Suppose $h$ satisfies
\begin{enumerate}[\rm (a)] 
\item $0 < E[\var[Y_i(d) | h(X_i)]]$ for $d \in \{0, 1\}$.
\item $E[Y_i^r(d) | h(X_i) = z]$ is Lipschitz for $r = 1, 2$ and $d = 0, 1$.
\item $\displaystyle \frac{1}{n} \displaystyle \sum_{1 \leq s \leq n} \sum_{(s - 1)k + 1 \leq j \leq sk}|h_{\pi^{\tau, h}(j)} - \bar h_s^{\tau}|^2 \stackrel{P}{\to} 0$.
\end{enumerate}
Then,
\[ \varsigma_{\tau, g^\tau}^2 \leq \varsigma_{\tau, h}^2~, \]
for $\varsigma_{\tau, g^\tau}^2$ and $\varsigma_{\tau, h}^2$ defined in \eqref{eq:tau-limit} and $g^\tau$ defined in \eqref{eq:tau-g}. Moreover, the inequality is strict unless $E \Big [ \frac{Y_i(1)}{\tau} + \frac{Y_i(0)}{1 - \tau} \Big | h(X_i) \Big] = g^\tau(X_i)$ with probability one under $Q$.
\end{theorem}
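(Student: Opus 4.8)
The plan is to exploit the fact that only the final term of $\varsigma_{\tau,h}^2$ in \eqref{eq:tau-limit} depends on $h$, and to recognize that term as a constant multiple of a variance that is controlled by the law of total variance. First I would observe that the two leading terms $\frac{\var[Y_i(1)]}{\tau} + \frac{\var[Y_i(0)]}{1-\tau}$ are free of $h$, so comparing $\varsigma_{\tau,g^\tau}^2$ with $\varsigma_{\tau,h}^2$ reduces to comparing the subtracted terms, and minimizing $\varsigma_{\tau,h}^2$ over $h$ is equivalent to maximizing that subtracted term. Using the law of iterated expectations, since $h(X_i)$ is a function of $X_i$,
\[ E\Big[\frac{Y_i(1)}{\tau} + \frac{Y_i(0)}{1-\tau} \,\Big|\, h(X_i)\Big] = E[g^\tau(X_i) \mid h(X_i)], \]
and likewise $\frac{E[Y_i(1)]}{\tau} + \frac{E[Y_i(0)]}{1-\tau} = E[g^\tau(X_i)]$, so the subtracted term equals $\tau(1-\tau)\var[E[g^\tau(X_i)\mid h(X_i)]]$.

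The key step is then a single application of the law of total variance:
\[ \var[g^\tau(X_i)] = E[\var[g^\tau(X_i)\mid h(X_i)]] + \var[E[g^\tau(X_i)\mid h(X_i)]]. \]
Because $\var[g^\tau(X_i)]$ does not depend on $h$ and the average conditional-variance term on the right is nonnegative, this yields $\var[E[g^\tau(X_i)\mid h(X_i)]] \le \var[g^\tau(X_i)]$ for every admissible $h$. Taking $h = g^\tau$ makes the conditioning variable $g^\tau(X_i)$ itself, so $E[g^\tau(X_i)\mid g^\tau(X_i)] = g^\tau(X_i)$ and the bound is attained with equality; hence $h=g^\tau$ maximizes the subtracted term and therefore minimizes $\varsigma_{\tau,h}^2$, giving $\varsigma_{\tau,g^\tau}^2 \le \varsigma_{\tau,h}^2$.

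For the strictness claim I would read off the equality case directly from the decomposition: the inequality $\var[E[g^\tau(X_i)\mid h(X_i)]] \le \var[g^\tau(X_i)]$ is strict unless $E[\var[g^\tau(X_i)\mid h(X_i)]] = 0$, i.e.\ unless $g^\tau(X_i) = E[g^\tau(X_i)\mid h(X_i)]$ with probability one, which is exactly the stated condition $E\big[\frac{Y_i(1)}{\tau} + \frac{Y_i(0)}{1-\tau}\mid h(X_i)\big] = g^\tau(X_i)$ almost surely under $Q$. I would also note that $\var[g^\tau(X_i)]$ is finite under Assumption \ref{as:moments}: by Jensen's inequality $E[g^\tau(X_i)^2] \lesssim E[Y_i^2(1)] + E[Y_i^2(0)] < \infty$, so all quantities in the argument are well defined. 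There is essentially no hard step here: once the subtracted term is identified as $\tau(1-\tau)\var[E[g^\tau(X_i)\mid h(X_i)]]$, the result is immediate from the variance decomposition. The only point requiring care is the bookkeeping of the iterated-expectation rewrite and confirming that the stated equality condition matches the vanishing of the average conditional variance.
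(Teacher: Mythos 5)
Your proposal is correct and follows essentially the same route as the paper: the paper also reduces the problem to maximizing the subtracted term, expands the square $E[(g^\tau(X_i) - E[g^\tau(X_i)])^2]$ around $E[g^\tau(X_i)\mid h(X_i)]$, and shows the cross term vanishes by iterated expectations---which is exactly the law-of-total-variance decomposition you invoke directly, with the same equality condition $E[g^\tau(X_i)\mid h(X_i)] = g^\tau(X_i)$ a.s.\ characterizing strictness. Your added remark on finiteness of $\var[g^\tau(X_i)]$ via Jensen's inequality is a harmless extra check that the paper leaves implicit.
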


\begin{proof}[\sc Proof of Theorem \ref{thm:tau-min}]
By the definition of $\varsigma_{\tau, h}^2$ in \eqref{eq:tau-limit}, minimizing $\varsigma_{\tau, h}^2$ with respect to $h$ is equivalent to maximizing
\[ E \Big [ \Big ( E \Big [ \frac{Y_i(1)}{\tau} + \frac{Y_i(0)}{1 - \tau} \Big | h(X_i) \Big] - \Big ( \frac{E[Y_i(1)]}{\tau} + \frac{E[Y_i(0)]}{1 - \tau} \Big) \Big )^2 \Big ]~. \]
Next, note
\begin{align}
\nonumber & E \Big [ \Big ( g^\tau(X_i) - E \Big [ \frac{Y_i(1)}{\tau} + \frac{Y_i(0)}{1 - \tau} \Big | h(X_i) \Big] \Big ) \Big ( E \Big [ \frac{Y_i(1)}{\tau} + \frac{Y_i(0)}{1 - \tau} \Big | h(X_i) \Big] - \Big ( \frac{E[Y_i(1)]}{\tau} + \frac{E[Y_i(0)]}{1 - \tau} \Big) \Big ) \Big ] \\
\nonumber & = E \Big [ E \Big [ g^\tau(X_i) - E \Big [ \frac{Y_i(1)}{\tau} + \frac{Y_i(0)}{1 - \tau} \Big | h(X_i) \Big]  \Big | h(X_i) \Big ] \\
\nonumber & \hspace{3em} \Big ( E \Big [ \frac{Y_i(1)}{\tau} + \frac{Y_i(0)}{1 - \tau} \Big | h(X_i) \Big] - \Big ( \frac{E[Y_i(1)]}{\tau} + \frac{E[Y_i(0)]}{1 - \tau} \Big) \Big ) \Big ] \\
\label{eq:cross} & = 0~,
\end{align}
where the second equality holds because
\[ E[g^\tau(X_i) | h(X_i)] = E \Big [ \frac{Y_i(1)}{\tau} + \frac{Y_i(0)}{1 - \tau} \Big | h(X_i) \Big] \]
by the law of iterated expectation. Therefore,
\begin{align*}
& E \Big [ \Big ( g^\tau(X_i) - \Big ( \frac{E[Y_i(1)]}{\tau} + \frac{E[Y_i(0)]}{1 - \tau} \Big) \Big )^2 \Big ] \\
& = E \Big [ \Big ( g^\tau(X_i) - E \Big [ \frac{Y_i(1)}{\tau} + \frac{Y_i(0)}{1 - \tau} \Big | h(X_i) \Big] + E \Big [ \frac{Y_i(1)}{\tau} + \frac{Y_i(0)}{1 - \tau} \Big | h(X_i) \Big] - \Big ( \frac{E[Y_i(1)]}{\tau} + \frac{E[Y_i(0)]}{1 - \tau} \Big) \Big )^2 \Big ] \\
& = E \Big [ \Big ( g^\tau(X_i) - E \Big [ \frac{Y_i(1)}{\tau} + \frac{Y_i(0)}{1 - \tau} \Big | h(X_i) \Big] \Big )^2 \Big ] \\
& \hspace{3em} + E \Big [ \Big ( E \Big [ \frac{Y_i(1)}{\tau} + \frac{Y_i(0)}{1 - \tau} \Big | h(X_i) \Big] - \Big ( \frac{E[Y_i(1)]}{\tau} + \frac{E[Y_i(0)]}{1 - \tau} \Big) \Big )^2 \Big ]~, \\
& \geq E \Big [ \Big ( E \Big [ \frac{Y_i(1)}{\tau} + \frac{Y_i(0)}{1 - \tau} \Big | h(X_i) \Big] - \Big ( \frac{E[Y_i(1)]}{\tau} + \frac{E[Y_i(0)]}{1 - \tau} \Big) \Big )^2 \Big ]~.
\end{align*}
where the second equality follows from \eqref{eq:cross} and the last inequality is strict except unless $E \Big [ \frac{Y_i(1)}{\tau} + \frac{Y_i(0)}{1 - \tau} \Big | h(X_i) \Big] = g^\tau(X_i)$ with probability one under $Q$.
\end{proof}

\subsection{Unequal Treated Fractions Across Subpopulations} \label{sec:unequal}
In this section, I consider settings in which treated fractions are allowed to vary across subpopulations. Let $1 \leq r \leq R$ index the subpopulations, where $R \geq 1$ is an integer. I assume the subpopulations are determined by the covariates according to a function $f: \mathrm{supp}(X_i) \to \{1, \dots, R\}$. I replace Assumption \ref{as:half} by
\begin{assumption} \label{as:vary}
For $1 \leq r \leq R$, exactly $\tau_r$ fraction of the units of each stratum in the $r$th subpopulation are treated.
\end{assumption}
Moreover, I assume treatment status is assigned independently across subpopulations. Under Assumption \eqref{as:vary}, $\hat \theta_n$ is generally inconsistent for $\theta$. In such settings researchers often use the estimator from the fully saturated regression in \cite{bugni2019inference}. For $1 \leq r \leq R$, let $n_r$ denote the total number of observations in the $r$th subpopulation. For $1 \leq r \leq R$ and $d \in \{0, 1\}$, define
\[ \hat \mu_{n, r}(1) = \frac{1}{n_r \tau_r} \sum_{i: f(X_i) = r} Y_i D_i \]
and
\[ \hat \mu_{n, r}(0) = \frac{1}{n_r (1 - \tau_r)} \sum_{i: f(X_i) = r} Y_i (1 - D_i)~. \]
The estimator for the ATE from the fully saturated regression is
\begin{equation} \label{eq:sat}
\hat \theta_n^{\rm sat} = \sum_{1 \leq r \leq R} \frac{n_r}{n} (\hat \mu_{n, r}(1) - \hat \mu_{n, r}(0))~.
\end{equation}
Note $\hat \theta_n^{\rm sat}$ and $\hat \theta_n$ coincide whenever $\tau_r \equiv \tau \in (0, 1)$. See \cite{bugni2018inference}, \cite{tabord-meehan2020stratification}, and \cite{bugni2019inference} for more details. By repeating the arguments used in the proof of Theorem \ref{thm:oracle} and Theorem \ref{thm:tau-oracle}, I could find the stratification that minimizes the conditional MSE of $\hat \theta_n^{\rm sat}$. The solution is as follows: I first calculate the stratification defined in \eqref{eq:tau-oracle} with $\tau$, $g$, and $X^{(n)}$ defined separately for each subpopulation, and then take the union of those stratifications. Moreover, the next theorem enables us to derive feasible procedures when treated fractions are allowed to vary across subpopulations. In particular, it reveals any plug-in estimator that satisfies the regularity conditions in Theorem \ref{thm:tau-min} leads to a stratification under which the asymptotic variance of $\hat \theta_n^{\rm sat}$ is no greater than and typically strictly less than that under procedures with each subpopulation as a stratum.

\begin{theorem} \label{thm:tabord}
Suppose the sample size is $n$. Define $N_r = \{i: f(X_i) = r\}$, $X^{N_r} = (X_i: i \in N_r)$, $n_r = |N_r|$, and $p(r) = Q \{f(X_i) = r\}$. Define $\lambda^{\rm large} = \displaystyle \bigcup_{1 \leq r \leq R} {N_r}$. For $1 \leq r \leq R$, let $\tau_r$ be the treated fraction in $N_r$. Define functions $h^r: \mathrm{supp}(X_i) \to \mathbf R$ for $1 \leq r \leq R$. Define $\lambda^{\rm small} = \displaystyle \bigcup_{1 \leq r \leq R} \lambda^{\tau_r, h^r}(X^{N_r})$, where $\lambda^{\tau_r, h^r}(X^{N_r})$ is defined in \eqref{eq:tau-h}. Suppose $Q$ satisfies Assumption \ref{as:moments} and the treatment assignment scheme satisfies Assumption \ref{as:vary}. Then, under $\lambda^{\rm large}$, for $\hat \theta_n^{\rm sat}$ defined in \eqref{eq:sat}, as $n \to \infty$,
\[ \sqrt n(\hat \theta_n^{\rm sat} - \theta(Q)) \stackrel{d}{\to} N(0, \varsigma_{\rm large}^2)~, \]
where
\begin{multline*}
\varsigma_{\rm large}^2 = \sum_{1 \leq r \leq R} p(r) \Big ( \frac{\mathrm{Var}[Y_i(1) | f(X_i) = r]}{\tau_r} + \frac{\mathrm{Var}[Y_i(0) | f(X_i) = r]}{1 - \tau_r} \Big ) \\
+ \sum_{1 \leq r \leq R} p(r) (E[Y_i(1) - Y_i(0) | f(X_i) = r] - E[Y_i(1) - Y_i(0)])^2~.
\end{multline*}
Suppose in addition that $h^r, 1 \leq r \leq R$ satisfy the assumption in Theorem \ref{thm:tau-min}, under $Q$ restricted to $\{x \in \mathrm{supp}(X_i): f(x) = r\}$. Then, under $\lambda^{\rm small}$, for $\hat \theta_n^{\rm sat}$ defined in \eqref{eq:sat}, as $n \to \infty$,
\[ \sqrt n(\hat \theta_n^{\rm sat} - \theta(Q)) \stackrel{d}{\to} N(0, \varsigma_{\rm small}^2)~, \]
where
\begin{multline*}
\varsigma_{\rm small}^2 = \sum_{1 \leq r \leq R} p(r) \Big ( \frac{\mathrm{Var}[Y_i(1) | f(X_i) = r]}{\tau_r} + \frac{\mathrm{Var}[Y_i(0) | f(X_i) = r]}{1 - \tau_r} \\
- \tau_r (1 - \tau_r) E \Big [ \Big ( E \Big [ \frac{Y_i(1)}{\tau_r} + \frac{Y_i(0)}{1 - \tau_r} \Big | h^r(X_i) \Big ] - E \Big [ \frac{Y_i(1)}{\tau_r} + \frac{Y_i(0)}{1 - \tau_r} \Big | f(X_i) = r \Big ] \Big )^2 \Big | f(X_i) = r \Big ] \Big ) \\
+ \sum_{1 \leq r \leq R} p(r) (E[Y_i(1) - Y_i(0) | f(X_i) = r] - E[Y_i(1) - Y_i(0)])^2~.
\end{multline*}
In addition, $\varsigma_{\rm small}^2 \leq \varsigma_{\rm large}^2$, where the inequality is strict unless for all $1 \leq r \leq R$,
\[ E \Big [ \frac{Y_i(1)}{\tau_r} + \frac{Y_i(0)}{1 - \tau_r} \Big | h^r(X_i) \Big] = E \Big [ \frac{Y_i(1)}{\tau_r} + \frac{Y_i(0)}{1 - \tau_r} \Big | f(X_i) = r \Big ] \]
with probability one under
\[ Q^r(A) = \frac{Q(A \cap \{f(X_i) = r\})}{Q\{f(X_i) = r\}}~. \]
Moreover, among all choices of $(h^r: 1 \leq r \leq R)$, $\varsigma_{\rm small}^2$ is minimized by setting $h^r = g^{\tau_r}$, where $g^{\tau_r}$ is defined in \eqref{eq:tau-g}.
\end{theorem}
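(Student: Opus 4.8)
The plan is to reduce everything to three results already in hand—Lemma \ref{lem:tau-limit} (the single-population limit theorem), Lemma \ref{lem:cond-convd} (the conditional-convergence device), and Theorem \ref{thm:tau-min} (the variance-minimization inequality)—applied subpopulation by subpopulation. First I would rewrite the saturated estimator in inverse-probability form, $\hat \theta_n^{\rm sat} = \frac{1}{n}\sum_i(\frac{Y_i(1)D_i}{\tau_{f(X_i)}} - \frac{Y_i(0)(1-D_i)}{1-\tau_{f(X_i)}})$, which exhibits it as the $\tau_r$-weighted difference-in-means carried out separately in each subpopulation. Writing $\theta_r = E[Y_i(1)-Y_i(0) \mid f(X_i)=r]$ and $\theta(Q)=\sum_r p(r)\theta_r$, I would then split
\[ \sqrt n(\hat \theta_n^{\rm sat}-\theta(Q)) = W_n + B_n, \quad W_n = \sum_{1\le r\le R}\sqrt{\tfrac{n_r}{n}}\,\sqrt{n_r}\big(\hat\mu_{n,r}(1)-\hat\mu_{n,r}(0)-\theta_r\big), \]
where $B_n = \frac{1}{\sqrt n}\sum_i(\theta_{f(X_i)}-\theta(Q))$ is the between-subpopulation part and depends only on the membership vector $M^{(n)}=(f(X_1),\dots,f(X_n))$.

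Conditional on $M^{(n)}$, the units with $f(X_i)=r$ are i.i.d.\ draws from $Q^r$, treatment is assigned independently across subpopulations, and $n_r/n\to p(r)>0$ so each $n_r\to\infty$ almost surely. Under $\lambda^{\rm small}$ the within-$r$ design is exactly the $\lambda^{\tau_r,h^r}$ of Lemma \ref{lem:tau-limit}, whose hypotheses hold for $h^r$ under $Q^r$ by assumption, so the lemma applied conditionally on $M^{(n)}$ gives $\sqrt{n_r}(\hat\mu_{n,r}(1)-\hat\mu_{n,r}(0)-\theta_r)\stackrel{d}{\to} N(0,\varsigma_{\tau_r,h^r}^2(Q^r))$, with $\varsigma_{\tau_r,h^r}^2(Q^r)$ the expression \eqref{eq:tau-limit} evaluated under $Q^r$—precisely the within summand of $\varsigma_{\rm small}^2$. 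These limits are independent across $r$ given $M^{(n)}$, so $W_n\stackrel{d}{\to} N(0,\sum_r p(r)\varsigma_{\tau_r,h^r}^2(Q^r))$ conditionally on $M^{(n)}$, with nonrandom limiting variance. Meanwhile $B_n=g(M^{(n)})\stackrel{d}{\to} N(0,\var[\theta_{f(X_i)}])$ with $\var[\theta_{f(X_i)}]=\sum_r p(r)(\theta_r-\theta(Q))^2$ by the i.i.d.\ central limit theorem, matching the between line of the variance formula. Because $W_n$ converges conditionally on $M^{(n)}$ to a normal law with nonrandom parameters and $B_n$ is $M^{(n)}$-measurable, a version of Lemma \ref{lem:cond-convd} with $Z_n=M^{(n)}$ (the proof is unchanged for a vector-valued conditioning variable) yields joint convergence of $(W_n,B_n)$ to independent normals; summing gives $\sqrt n(\hat\theta_n^{\rm sat}-\theta(Q))\stackrel{d}{\to} N(0,\varsigma_{\rm small}^2)$. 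For $\lambda^{\rm large}$ the within-$r$ design is complete randomization in a single stratum with fraction $\tau_r$; this is the $h^r\equiv\mathrm{const}$ specialization, with within variance $\frac{\var[Y_i(1)\mid f=r]}{\tau_r}+\frac{\var[Y_i(0)\mid f=r]}{1-\tau_r}$, and the whole statement is exactly the saturated-estimator limit of \cite{bugni2019inference}, which I would cite for the $\varsigma_{\rm large}^2$ claim (the within/between split above works identically, now using the classical complete-randomization variance in each stratum).

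For the ordering, note the between lines of $\varsigma_{\rm small}^2$ and $\varsigma_{\rm large}^2$ are identical, so it suffices to compare the within lines term by term. For each $r$ the extra term $\tau_r(1-\tau_r)E[(\cdots)^2\mid f=r]$ subtracted in $\varsigma_{\rm small}^2$ is a nonnegative variance of a conditional expectation, vanishing exactly when $E[\frac{Y_i(1)}{\tau_r}+\frac{Y_i(0)}{1-\tau_r}\mid h^r(X_i)]$ equals $E[\frac{Y_i(1)}{\tau_r}+\frac{Y_i(0)}{1-\tau_r}\mid f(X_i)=r]$ with $Q^r$-probability one; weighting by $p(r)\ge 0$ and summing gives $\varsigma_{\rm small}^2\le\varsigma_{\rm large}^2$ with the stated strictness condition. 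Finally, applying Theorem \ref{thm:tau-min} under $Q^r$ shows each within summand $\varsigma_{\tau_r,h^r}^2(Q^r)$ is minimized at $h^r=g^{\tau_r}$; since the between line does not involve the $h^r$ and the weights $p(r)$ are nonnegative, taking $h^r=g^{\tau_r}$ for every $r$ minimizes $\varsigma_{\rm small}^2$, which is the last claim.

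The main obstacle is making the within/between split fully rigorous when the subpopulation sizes $n_r$ are themselves random. This requires conditioning on $M^{(n)}$, verifying $n_r\to\infty$ almost surely, upgrading the conditional-in-probability conclusions of Lemma \ref{lem:tau-limit} to almost-sure convergence along subsequences exactly as in that lemma's proof, and checking the hypotheses of Lemma \ref{lem:cond-convd}—in particular that the conditional limit law of $W_n$ has a nonrandom, continuous (nondegenerate normal, by condition (a) on each $h^r$) distribution. The remaining work—matching $\varsigma_{\tau_r,h^r}^2(Q^r)$ to the within summand, the i.i.d.\ CLT for $B_n$, and the term-by-term variance comparisons—is routine bookkeeping with \eqref{eq:tau-limit} and the two cited results.
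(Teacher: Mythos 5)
Your proposal is correct, and its skeleton coincides with the paper's: you use the same within/between decomposition (your $W_n + B_n$ is exactly the paper's display, since $B_n = \sum_{1 \leq r \leq R} \sqrt n \, (n_r/n - p(r))\, \theta_r$), the same identification of each within-stratum variance via Lemma \ref{lem:tau-limit} applied under $Q^r$, the same citation of \cite{bugni2019inference} for the $\lambda^{\rm large}$ limit, and the same term-by-term application of Theorem \ref{thm:tau-min} under $Q^r$ for the comparison and minimization claims. The genuine difference is the device that makes the within and between parts asymptotically independent. The paper invokes the coupling argument of Lemma C.1 of \cite{bugni2019inference}, which constructs versions $L_n^{\ast 1}$ of the within-stratum statistics that are exactly independent of the counts $(n_r: 1 \leq r \leq R)$ and differ from $L_n^1$ by $o_P(1)$, after which Slutsky's lemma finishes the argument. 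You instead condition on the membership vector $M^{(n)}$, apply Lemma \ref{lem:tau-limit} on the conditional probability space (where the subpopulation samples are i.i.d.\ $Q^r$ with deterministic sizes $n_r \to \infty$), and then use Lemma \ref{lem:cond-convd} with $Z_n = M^{(n)}$ to get joint convergence with independent limits. This is a legitimate alternative and is more self-contained: it recycles the paper's own conditional-convergence lemma, exactly as the paper itself does inside the proof of Lemma \ref{lem:tau-limit} when combining $A_n - B_n$ with $C_n - D_n$, rather than importing the external coupling. What the coupling buys the paper is that the random sample sizes never need to be conditioned on, so no conditional-CLT technicalities arise; your route must shoulder them, namely (i) extending Lemma \ref{lem:cond-convd} to a vector-valued $Z_n$ (harmless, as you note, since its proof never uses scalarity), (ii) the lemma's requirement that both limit laws be continuous, which forces you to treat the degenerate case $\var[\theta_{f(X_i)}] = 0$ separately (there $B_n \equiv 0$ identically, so nothing is needed), and (iii) checking $E_{Q^r}[Y_i^2(d)] < \infty$, which follows from Assumption \ref{as:moments} because $E[Y_i^2(d) \mid f(X_i) = r] \leq E[Y_i^2(d)]/p(r)$. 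None of these is a gap; they are the bookkeeping you flag, and the substantive steps of your outline are all sound.
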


\begin{proof}[\sc Proof of Theorem \ref{thm:tabord}]
The first convergence holds by Theorem 3.1 of \cite{bugni2019inference}. Define $\theta_r = E[Y_i(1) - Y_i(0) | f(X_i) = r]$. For the second convergence, note
\begin{equation} \label{eq:sat-conv}
\sqrt n (\hat \theta_n^{\rm sat} - \theta(Q)) = \sum_{1 \leq r \leq R} \Big ( \Big ( \frac{n_r}{n} \Big )^{1/2} \sqrt n_r (\hat \mu_{n, r}(1) - \hat \mu_{n, r}(0) - \theta_r) + \sqrt n \Big (\frac{n_r}{n} - p(r) \Big ) \theta_r \Big )~.
\end{equation}
Define
\begin{align*}
L_n^1 & = (\sqrt{n_r} (\hat \mu_{n, r}(1) - \hat \mu_{n, r}(0) - \theta_r): 1 \leq r \leq R) \\
L_n^2 & = \Big (\sqrt n \Big (\frac{n_r}{n} - p(r) \Big ): 1 \leq r \leq R \Big )~.
\end{align*}
It follows from the coupling argument in Lemma C.1 of \cite{bugni2019inference} that
\[ (L_n^1, L_n^2) = (L_n^{\ast 1}, L_n^2) + o_P(1) \]
where $L_n^{\ast 1} \indep L_n^2$ and $L_n^{\ast 1} \stackrel{d}{\to} N(0, \mathrm{diag}(\varsigma_{r, \rm small}^2: 1 \leq r \leq R))$ with
\begin{multline*}
\varsigma_{r, \rm small}^2 = \frac{E[(Y_i(1) - E[Y_i(1) | h^r(X_i)])^2 | f(X_i) = r]}{\tau_r} + \frac{E[(Y_i(0) - E[Y_i(0) | h^r(X_i)])^2 | f(X_i) = r]}{1 - \tau_r} \\
+ E [ (E[Y_i(1) - Y_i(0) | h^r(X_i)] - \theta_r)^2 | f(X_i) = r ]~.
\end{multline*}
Meanwhile, the central limit theorem implies
\[ L_n^2 \stackrel{d}{\to} N(0, \mathrm{diag}(p(r): 1 \leq r \leq R) - (p(r): 1 \leq r \leq R) (p(r): 1 \leq r \leq R)' )~. \]
In addition, it follows from the weak law of large numbers that $\frac{n_r}{n} \stackrel{P}{\to} p(r)$ for $1 \leq r \leq R$. By \eqref{eq:sat-conv} and Slutsky's lemma, I have that $\sqrt n (\hat \theta_n^{\rm sat} - \theta(Q))$ converges to a normal distribution with zero mean and variance
\[ \sum_{1 \leq r \leq R} p(r) \varsigma_{r, \rm small}^2 + \sum_{1 \leq r \leq R} p(r) (\theta_r - \theta(Q))^2~, \]
where I use the fact that
\[ \sum_{1 \leq r \leq R} p(r) \theta_r = \theta(Q)~. \]
The first result then follows from a similar calculation to that at the end of the proof of Lemma \ref{lem:tau-limit}. The last two results can be shown by similar arguments to those used in the proof of Theorem \ref{thm:tau-min}.
\end{proof}

\begin{remark} \rm \label{remark:tabord-detail}
\cite{tabord-meehan2020stratification} considers stratification trees, which leads to a small number of large strata, with different treated fractions in each stratum. Using results from Theorem \ref{thm:tabord}, it is straightforward to combine his procedure with procedures in this paper. The asymptotic variance of $\hat \theta_n^{\rm sat}$ under the combined procedure is no greater than and typically strictly less than that under his procedure alone. The combined procedure is as follows: First, implement the procedure in \cite{tabord-meehan2020stratification}, which produces a finite number of strata with a target treated fraction for each stratum. Second, I view each stratum as a subpopulation and calculate the stratification in \eqref{eq:tau-h} either with a fixed function $h$ or some plug-in estimate, with $\tau$ equal the target treated fraction. Finally, I take the union of these stratifications. The desired properties mentioned above now follow from Theorem \ref{thm:tabord}.
\end{remark}

\subsection{Formal Justification of Results with Attrition} \label{sec:attrition}
Let $A_i$ be a binary variable such that $A_i = 1$ if and only if the unit does not attrite. The difference-in-means estimator for the non-attritors is
\[ \hat \theta_n^{\rm A} = \frac{\frac{1}{n} \sum_{i: D_i = 1} A_i Y_i(1)}{\frac{1}{n} \sum_{i: D_i = 1} A_i} - \frac{\frac{1}{n} \sum_{i: D_i = 0} A_i Y_i(0)}{\frac{1}{n} \sum_{i: D_i = 0} A_i}~. \]
By repeating the arguments in the proof of Theorem S.1.5 of \cite{bai2021inference}, under the assumption that $((Y^{(n)}(0), Y^{(n)}(1), A^{(n)}) \indep D^{(n)} | X^{(n)}$, I have
\begin{align*}
\frac{1}{n} \sum_{i: D_i = 1} A_i Y_i(1) & \stackrel{P}{\to} E[A_i Y_i(1)] \\
\frac{1}{n} \sum_{i: D_i = 1} A_i & \stackrel{P}{\to} E[A_i] \\
\frac{1}{n} \sum_{i: D_i = 0} A_i Y_i(0) & \stackrel{P}{\to} E[A_i Y_i(0)] \\
\frac{1}{n} \sum_{i: D_i = 0} A_i & \stackrel{P}{\to} E[A_i]~.
\end{align*}
As a result,
\[ \hat \theta_n^A \stackrel{P}{\to} \frac{E[A_i(Y_i(1) - Y_i(0))]}{E[A_i]}~. \]
If $A_i \indep (Y_i(1) - Y_i(0))$, then the right hand side is $\theta(Q)$. \qed

\subsection{Nonnegativity of the Variance Estimator} \label{sec:nonneg}
In this subsection, I show $\hat \varsigma_{h, n}^2$ in \eqref{eq:se} is nonnegative. For convenience of notation, I suppress $h$ in the subscripts. I have
\begin{align*}
\hat \varsigma_n^2 & = \hat \sigma_n^2(1) + \hat \sigma_n^2(0) - \frac{1}{2} \hat \rho_n + \frac{1}{2} (\hat \mu_n(1) + \hat \mu_n(0))^2 \\
& = \frac{1}{n} \sum_{1 \leq i \leq 2n: D_i = 1} Y_i^2 - \hat \mu_n^2(1) + \frac{1}{n} \sum_{1 \leq i \leq 2n: D_i = 0} Y_i^2 - \hat \mu_n^2(0) - \frac{1}{2} \hat \rho_n + \frac{1}{2} (\hat \mu_n(1) + \hat \mu_n(0))^2 \\
& = \frac{1}{n} \sum_{1 \leq i \leq 2n} Y_i^2 - \frac{1}{2} \hat \rho_n - \frac{1}{2} (\hat \mu_n(1) - \hat \mu_n(0))^2 \\
& = \frac{1}{2n} \sum_{1 \leq i \leq 2n} Y_i^2 - \frac{1}{n} \sum_{1 \leq s \leq n} Y_{\pi(2s - 1)} Y_{\pi(2s)} + \frac{1}{2n} \sum_{1 \leq s \leq n} (Y_{\pi(2s - 1)} + Y_{\pi(2s)})^2 \\
& \hspace{3em} - \frac{1}{n} \sum_{1 \leq j \leq n / 2} (Y_{\pi(4j - 3)} + Y_{\pi(4j - 2)})(Y_{\pi(4j - 1)} + Y_{\pi(4j)}) \\
& \hspace{3em} - \frac{1}{2} \Big ( \frac{1}{n} \sum_{1 \leq s \leq n} (D_{\pi(2s - 1)} - D_{\pi(2s)}) (Y_{\pi(2s - 1)} - Y_{\pi(2s)}) \Big )^2 \\
& = \frac{1}{2} \Big ( \frac{1}{n} \sum_{1 \leq s \leq n} ((D_{\pi(2s - 1)} - D_{\pi(2s)})(Y_{\pi(2s - 1)} - Y_{\pi(2s)}))^2 \\
& \hspace{3em} - \Big ( \frac{1}{n} \sum_{1 \leq s \leq n} (D_{\pi(2s - 1)} - D_{\pi(2s)})(Y_{\pi(2s - 1)} - Y_{\pi(2s)}) \Big )^2 \Big ) \\
& \hspace{3em} + \frac{1}{2n} \sum_{1 \leq j \leq n / 2}(Y_{\pi(4j - 3)} + Y_{\pi(4j - 2)} - (Y_{\pi(4j - 1)} + Y_{\pi(4j)}) )^2~.
\end{align*}
In the last expression, the second term is obviously nonnegative, while the first term is nonnegative because it is one half the sample variance of $\{(D_{\pi(2s - 1)} - D_{\pi(2s)})(Y_{\pi(2s - 1)} - Y_{\pi(2s)}): 1 \leq s \leq n\}$. \qed

\subsection{Details of the Penalized Procedure} \label{sec:pen-detail}
In this section, I discuss the details of the penalized procedure. For $d \in \{0, 1\}$, let $\tilde \beta_m(d)$ denote the least-square estimators of the linear regression coefficients among the treated or untreated units in the pilot experiment:
\[ \tilde \beta_m(d) = \Big ( \sum_{1 \leq j \leq m: \tilde D_j = d} \tilde X_j \tilde X_j' \Big )^{-1} \sum_{1 \leq j \leq m: \tilde D_j = d} \tilde X_j \tilde Y_j~, \]
and let $\tilde \Omega_m(d)$ denote the variance estimators assuming homoskedasticity:
\[ \tilde \Omega_m(d) = \tilde \nu_m^2(d) \Big ( \sum_{1 \leq j \leq m: \tilde D_j = d} \tilde X_j \tilde X_j' \Big )^{-1}~, \]
where
\[ \tilde \nu_m^2(d) = \frac{\sum_{1 \leq j \leq m} (\tilde Y_j - \tilde X_j' \tilde \beta_m(d))^2 I \{\tilde D_j = d\}}{\sum_{1 \leq j \leq m} I \{\tilde D_j = d\}}~. \]
For $d^{\rm pen}$ defined in \eqref{eq:pen-metric}, let $\pi^{\rm pen}$ denote the solution to
\begin{equation} \label{eq:pen-min}
\min_{\pi \in \Pi} \sum_{1 \leq s \leq n} d^{\rm pen}(X_{\pi(2s - 1)}, X_{\pi(2s)})~.
\end{equation}
Units are then paired to solve \eqref{eq:pen-min}, so the stratification is given by
\begin{equation} \label{eq:pen-strat}
\lambda^{\rm pen}(X^{(n)}) = \{\{\pi^{\rm pen}(2s - 1), \pi^{\rm pen}(2s)\}: 1 \leq s \leq n\}~.
\end{equation}
I start with a further justification for \eqref{eq:pen-strat} by discussing its optimality in a Bayesian framework, in the sense that it minimizes the integrated risk in a Bayesian framework with a diffuse normal prior, where the conditional expectations of potential outcomes are linear. With some abuse of notation, denote the conditional MSE in \eqref{eq:min-mse} by $\mse(\lambda | g, X^{(n)})$, where I make explicit the dependence on $g$. Suppose I have a prior distribution of $g$, denoted by $F(d g)$. Let $Q_X^n(d x^{(n)})$ denote the distribution of $X^{(n)}$ and $Q_{\tilde W}^m(d \tilde w^{(m)})$ denote the distribution of $\tilde W^{(m)}$. Consider the solution to following problem of minimizing the integrated risk across all measurable functions of the form $u: (\tilde w^{(m)}, x^{(n)}) \mapsto \lambda \in \Lambda_n$:
\begin{equation} \label{eq:bayes}
\min_u \int\!\!\!\int\!\!\!\int \mse(u(\tilde w^{(m)}, x^{(n)}) | g, x^{(n)}) Q_X^n(d x^{(n)}) Q_{\tilde W}^m(d \tilde w^{(m)}) F(d g)~.
\end{equation}

I focus on the special case under which and $Y_i(d) \sim N(X_i' \beta(d), \sigma^2)$ for $d \in \{0, 1\}$. Note the potential outcomes are homoskedastic conditional on the covariates. Define $\beta = \beta(1) + \beta(0)$, and I have $g(x) = x' \beta$. As before, I suppose $\tilde W^{(m)} = ((\tilde Y_j, \tilde X_j', \tilde D_j)': 1 \leq j \leq m)$ is available from a pilot experiment. Suppose the prior on $\beta(d)$ is $G_d \stackrel{d}{=} N(\eta(d), \Omega(d))$ for $d \in \{0, 1\}$, being independent across $d \in \{0, 1\}$. The prior distribution of $\beta$ is then $G(d \beta) \stackrel{d}{=} N(\eta(1) + \eta(0), \Omega(1) + \Omega(0))$. I could show the posterior distribution of $\beta(d)$ conditional on $\tilde W^{(m)}$ is
\[ \bar G_d(d \beta | \tilde W^{(m)}) \stackrel{d}{=} N(\bar \eta, \bar \Omega)~, \]
where for $d \in \{0, 1\}$,
\begin{align*}
\bar \eta(d) & = \Big ( (\sigma^2)^{-1} \sum_{j: \tilde D_j = d} \tilde X_j \tilde X_j' + \Omega^{-1}(d) \Big )^{-1} \Big ( (\sigma^2)^{-1} \sum_{j: \tilde D_j = d} \tilde X_j \tilde Y_j + \Omega^{-1}(d) \eta(d) \Big )\\
\bar \Omega(d) & = \Big ( (\sigma^2)^{-1} \sum_{j: \tilde D_j = d} \tilde X_j \tilde X_j' + \Omega^{-1}(d) \Big )^{-1}~.
\end{align*}
Define $\bar \eta = \bar \eta(1) + \bar \eta(0)$ and $\bar \Omega = \bar \Omega(1) + \bar \Omega(0)$. The posterior distribution for $\beta$ is
\[ \bar G(d \beta | \tilde W^{(m)}) \stackrel{d}{=} (\bar \eta, \bar \Omega)~, \]
because $G_d(d \beta)$'s are independent across $d \in \{0, 1\}$.

The next lemma provides the solution to the Bayesian problem in \eqref{eq:bayes}, where the choice set is over all measurable functions $u: (\tilde w^{(m)}, x^{(n)}) \mapsto \lambda \in \Lambda_n$.

\begin{lemma} \label{lem:pen}
The solution to \eqref{eq:bayes} maps each $(\tilde w^{(m)}, x^{(n)})$ to $\lambda = \{\{\pi(2s - 1), \pi(2s)\}: 1 \leq s \leq n/2\}$, where $\pi$ solves
\[ \min_{\pi \in \Pi_n} \sum_{1 \leq s \leq n} \bar d \Big ( x_{\pi(2s-1)}, x_{\pi(2s)} \Big )~, \]
where
\begin{equation} \label{eq:pen-gen}
\bar d(x_1, x_2) = (x_1' \bar \eta - x_2' \bar \eta)^2 + (x_1 - x_2)' \bar \Omega (x_1 - x_2)~.
\end{equation}
\end{lemma}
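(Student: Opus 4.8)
The plan is to solve \eqref{eq:bayes} by the standard Bayesian decision-theoretic route: first reduce the global minimization over decision rules $u$ to a pointwise minimization of the posterior risk at each realized $(\tilde w^{(m)}, x^{(n)})$, and then evaluate that posterior risk explicitly using the MSE structure derived for Theorem \ref{thm:oracle}. For the reduction, I would read the triple integral in \eqref{eq:bayes} as an expectation over the joint law of $(\beta, X^{(n)}, \tilde W^{(m)})$ generated by the prior $G$ on $\beta$ (which induces $F$ on $g$ via $g(x) = x'\beta$), the sampling of $X^{(n)}$, and the pilot likelihood. Because $\Lambda_n$ is finite, the map sending each $(\tilde w^{(m)}, x^{(n)})$ to a minimizer of the posterior expected loss is measurable and hence an admissible rule, and by Fubini the integrated risk of any $u$ is bounded below by the integral of these pointwise minima, with equality at the selector. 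Thus it suffices to minimize, for each fixed $(\tilde w^{(m)}, x^{(n)})$,
\[
\int \mse(\lambda \mid g, x^{(n)}) \,\bar G(d\beta \mid \tilde w^{(m)})
\]
over $\lambda \in \Lambda_n$, where $\bar G = N(\bar\eta, \bar\Omega)$ is the conjugate-normal posterior computed just above the lemma; the covariates do not enter the update because their distribution is ancillary for $\beta$.

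For the second step I would invoke the decomposition of $\mse(\lambda \mid g, x^{(n)})$ used in the proof of Theorem \ref{thm:oracle}. Under the homoskedastic Gaussian model $Y_i(d) \sim N(x_i'\beta(d), \sigma^2)$ the average-conditional-variance term is $\lambda$-invariant, so only $V(\lambda)$ depends on $\lambda$, and with $g_i = x_i'\beta$ one has, for a matched-pair design indexed by $\pi$,
\[
\mse(\lambda \mid g, x^{(n)}) = \mathrm{const} + \frac{1}{4n^2}\sum_{1 \le s \le n}\bigl((x_{\pi(2s-1)} - x_{\pi(2s)})'\beta\bigr)^2 .
\]
Taking the posterior expectation and using that for any fixed vector $v$ and $\beta \sim N(\bar\eta, \bar\Omega)$ one has $E[(v'\beta)^2] = (v'\bar\eta)^2 + v'\bar\Omega v$, the posterior risk of the matched-pair design becomes $\mathrm{const} + \frac{1}{4n^2}\sum_s \bar d(x_{\pi(2s-1)}, x_{\pi(2s)})$ with $\bar d$ as in \eqref{eq:pen-gen}. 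Minimizing over matched-pair designs is therefore exactly the pairing problem in the statement.

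To upgrade this to optimality over all of $\Lambda_n$, not merely over $\Lambda_n^{\rm pair}$, I would reuse the convexity argument from Theorem \ref{thm:oracle}: by Lemma \ref{lem:mixing} any $\lambda \in \Lambda_n$ is a convex combination $\bigoplus_j \delta_j \lambda^j$ of matched-pair designs, so its posterior risk is the corresponding convex combination of posterior risks and hence weakly exceeds the minimum over $\Lambda_n^{\rm pair}$, which is attained at the pairing above. Measurability of the resulting rule in $(\tilde w^{(m)}, x^{(n)})$ holds since $\bar\eta$ and $\bar\Omega$ are measurable in $\tilde w^{(m)}$ and the argmin over the finite set $\Lambda_n$ is measurable.

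The main obstacle I anticipate is the first step: formalizing the posterior and justifying the interchange that turns the integrated risk into an average of pointwise posterior risks, in particular handling the dependence between the pilot data and $\beta$ and confirming that conditioning on $x^{(n)}$ in the loss, together with the ancillarity of the covariates, leaves the update governed solely by $\bar G$. Once that reduction is secured, the remaining pieces — the Gaussian second-moment identity and the mixing argument — are routine.
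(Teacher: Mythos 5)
Your proposal is correct and follows essentially the same route as the paper's proof: reduce the integrated risk to a pointwise minimization of posterior risk, isolate the $\lambda$-dependent quadratic form in $\beta$ via the MSE decomposition underlying Lemma \ref{lem:post} and Theorem \ref{thm:oracle}, apply the Gaussian identity $E[(v'\beta)^2 \mid \tilde W^{(m)}] = (v'\bar\eta)^2 + v'\bar\Omega v$ to obtain $\bar d$, and invoke Lemma \ref{lem:mixing} to pass from matched-pair designs to all of $\Lambda_n$. The only differences are cosmetic (you defer the mixing argument to the end, whereas the paper uses linearity of the posterior risk in $u$ earlier), and your extra care with Fubini and measurable selection fills in details the paper leaves implicit.
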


\begin{proof}[\sc Proof]
First note by similar calculations to those leading to Lemma \ref{lem:post}, \eqref{eq:bayes} is equivalent to
\begin{equation} \label{eq:bayes-linear}
\min_u \int\!\!\!\int\!\!\!\int L(u(\tilde w^{(m)}, x^{(n)}) | \beta, x^{(n)}) Q_X^n(d x^{(n)}) Q_{\tilde W}^m(d \tilde w^{(m)}) G(d \beta)~,
\end{equation}
where
\[ L(u | \beta, x^{(n)}) = \beta' (x^{(n)})' \var_\lambda[D^{(n)} | X^{(n)}] x^{(n)} \beta~. \]
Next, note I could solve the problem pointwise for $\tilde w^{(m)}$ and $x^{(n)}$ because \eqref{eq:bayes-linear} is equivalent to
\begin{equation} \label{eq:bayes-ex-post}
\min_u \bar R(u | \tilde W^{(m)})~,
\end{equation}
where
\[ \bar R(u | \tilde W^{(m)}) = \int L(u(\tilde W^{(m)}, x^{(n)}) | \beta, x^{(n)}) \bar G(d \beta | \tilde W^{(m)})~. \]

To solve \eqref{eq:bayes-ex-post}, first note because $\bar R(u | \tilde W^{(m)})$ is linear in $u$, by Lemma \ref{lem:mixing}, it is solved by a matched-pair design. Next, 
\[ \bar R(u | \tilde W^{(m)}) = \sum_{1 \leq s \leq n} ((x_{\pi(2s - 1)}' \bar \eta - x_{\pi(2s)}' \bar \eta)^2 + (x_{\pi(2s - 1)} - x_{\pi(2s)})' \bar \Omega (x_{\pi(2s - 1)} - x_{\pi(2s)}))~. \]
As a result, minimizing it is equivalent to minimizing the sum of the distances defined in \eqref{eq:pen-gen}.
\end{proof}

Lemma \ref{lem:pen} shows the solution to \eqref{eq:bayes} is not to naïvely pair units according to the values of $X_i' \bar \eta$, where $\bar \eta$ is posterior mean of $\beta$. Instead, the solution to \eqref{eq:bayes} depends not only on the posterior mean of $\beta$, but also on the posterior variance of it. The posterior variance serves as a penalty to matching naïvely on the posterior mean of $\beta$: the larger the variance, the more it penalizes matching on the posterior mean. 

Finally, I make the prior irrelevant. For this purpose, suppose $\Omega = c I$ where $I$ is an identity matrix. I let the constant $c \to \infty$, so that the prior diverges to a diffuse (uninformative) one. Then, $\bar \eta(d)$ converges to $\tilde \beta_m(d)$ and $\bar \Omega(d)$ converges to $\tilde \Omega_m(d)$. The metric then \eqref{eq:pen-gen} converges to the metric defined in \eqref{eq:pen-metric}.

In practice, \eqref{eq:pen-min} can be solved as follows. Define $R_m$ as the result of the following Cholesky decomposition:
\[ R_m' R_m = \tilde \beta_m \tilde \beta_m' + \tilde \Omega_m~, \]
and define
\[ Z_i = R_m X_i~. \]
To see \eqref{eq:pen-min} is equivalent to
\begin{equation} \label{eq:pen-z}
\min_{\pi \in \Pi_n} \frac{1}{n} \sum_{1 \leq s \leq n} \|Z_{\pi(2s - 1)} - Z_{\pi(2s)}\|^2~,
\end{equation}
note
\[ d^{\rm pen}(x_1, x_2) = (x_1 - x_2)' \tilde \beta_m \tilde \beta_m' (x_1 - x_2) + (x_1 - x_2)' \tilde \Omega_m (x_1 - x_2) = (x_1 - x_2)' R_m' R_m (x_1 - x_2)~. \]
The penalized stratification pairs units to minimize the sum of distances in terms of $Z_i$ within pairs. When $\mathrm{dim}(X_i)$ is not too large, the problem can be solved quickly by the package \texttt{nbpMatching} in \texttt{R}.

Because the penalized matched-pair design can be viewed as pairing to minimize the Euclidean distances of $Z$ as in \eqref{eq:pen-z}, inference can be implemented by ``pairing the pairs'' as in Section 4 of \cite{bai2021inference}. I refer the readers to that paper for details.

The next theorem establishes the behavior of the difference-in-means estimator under the penalized procedure as the sample sizes of the pilot and the main experiment both increase. Not surprisingly, as the sample size of the pilot experiment goes to infinity, the penalized procedure behaves similarly to pairing units according to $h$, where $h$ is a linear function. In particular, if the selection on observable assumption holds in the pilot data, then the conclusion of the next theorem holds with $\beta = \beta(1) + \beta(0)$, where $\beta(d) = E[X X']^{-1} E[X Y(d)]$ for $d \in \{0, 1\}$.

\begin{theorem} \label{thm:pen-large}
Suppose $Q$ satisfies Assumption \ref{as:moments}, $h(x) = x' \beta$ satisfies Assumption \ref{as:H} for some $\beta \in \mathbf R^{\mathrm{dim}(X_i)}$. Suppose $E[X X'] < \infty$. Further suppose $\tilde \beta_m \stackrel{P}{\to} \beta$ and $\tilde \Omega_m \stackrel{P}{\to} 0$ as $m \to \infty$. Then, under $\lambda^{\rm pen}$ defined in \eqref{eq:pen-strat}, as $m, n \to \infty$, $\sqrt n(\hat \theta_n - \theta(Q)) \stackrel{d}{\to} N(0, \varsigma_h^2)$ for $\varsigma_h^2$ in \eqref{eq:limit}. Furthermore, $(\hat \varsigma_n^{\rm pen})^2 \stackrel{P}{\to} \varsigma_h^2$.
\end{theorem}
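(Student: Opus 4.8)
The plan is to reduce the claim to the framework of Theorem \ref{thm:l2}. Writing $\tilde h_m(x) = x'\tilde\beta_m$ and $h(x) = x'\beta$, observe that $\int |\tilde h_m(x) - h(x)|^2 Q_X(dx) = (\tilde\beta_m - \beta)'E[X_iX_i'](\tilde\beta_m - \beta) \stackrel{P}{\to} 0$, so $\tilde h_m$ is $L^2(Q_X)$-consistent for $h$ in the sense of Assumption \ref{as:l2}. Were units paired directly on the scalar $\tilde h_m$, Theorem \ref{thm:l2} would deliver both conclusions at once. The only discrepancy is that $\lambda^{\rm pen}$ pairs units by minimizing the penalized distance $d^{\rm pen}$ in \eqref{eq:pen-metric}, which augments $(\tilde h_1 - \tilde h_2)^2$ with the term $(x_1-x_2)'\tilde\Omega_m(x_1-x_2)$. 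Since the proofs of Theorems \ref{thm:limit} and \ref{thm:se} (and hence of Theorem \ref{thm:l2}) use the pairing only through the closeness conditions \eqref{eq:hat-2} and \eqref{eq:hat-4}, the entire task is to verify the analogues of those conditions, with $\pi^{\rm pen}$ and the associated second-level pairs of pairs in place of $\pi^{\tilde h_m}$, holding in terms of $h$. Once they hold, I would invoke the arguments of those theorems verbatim, with the penalized (midpoint-matched) indices replacing the adjacency indices.

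For the within-pair condition the key is a comparison argument. Because $\tilde\Omega_m$ is positive semidefinite, $d^{\rm pen}(x_1,x_2)\ge(\tilde h_1-\tilde h_2)^2$, so combining this lower bound with the optimality of $\pi^{\rm pen}$ in \eqref{eq:pen-min} gives, for the pairing $\pi^{\tilde h}$ that sorts units by $\tilde h_i$,
\[
\frac{1}{n}\sum_{1\le s\le n}(\tilde h_{\pi^{\rm pen}(2s-1)}-\tilde h_{\pi^{\rm pen}(2s)})^2 \leq \frac{1}{n}\sum_{1\le s\le n} d^{\rm pen}(X_{\pi^{\tilde h}(2s-1)},X_{\pi^{\tilde h}(2s)})~.
\]
The right-hand side splits into a scalar part and a penalty part. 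Conditional on the pilot data (independent of the main sample), the scalar part $\frac{1}{n}\sum_s(\tilde h_{\pi^{\tilde h}(2s-1)} - \tilde h_{\pi^{\tilde h}(2s)})^2$ telescopes and is bounded by $\frac{4}{n}\max_i \tilde h_i^2 \stackrel{P}{\to}0$ by Lemma \ref{lem:max}, exactly as in Lemma \ref{lem:close}; the penalty part is at most $\|\tilde\Omega_m\|_{\rm op}\cdot\frac{2}{n}\sum_{1\le i\le 2n}\|X_i\|^2$, which vanishes because $\|\tilde\Omega_m\|_{\rm op}\stackrel{P}{\to}0$ while $\frac{2}{n}\sum_{1\le i\le 2n}\|X_i\|^2 = O_P(1)$ by the weak law of large numbers under $E[X_iX_i']<\infty$. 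Hence $\frac{1}{n}\sum_s(\tilde h_{\pi^{\rm pen}(2s-1)} - \tilde h_{\pi^{\rm pen}(2s)})^2\stackrel{P}{\to}0$. To pass from $\tilde h$ to $h$, I would use $h_i - \tilde h_i = X_i'(\beta - \tilde\beta_m)$ together with $|a+b|^2\le 2a^2+2b^2$ to bound $\frac{1}{n}\sum_s(h_{\pi^{\rm pen}(2s-1)} - h_{\pi^{\rm pen}(2s)})^2$ by the already-controlled $\tilde h$-sum plus $\|\beta-\tilde\beta_m\|^2\cdot\frac{2}{n}\sum_{1\le i\le 2n}\|X_i\|^2\stackrel{P}{\to}0$, which establishes the analogue of \eqref{eq:hat-2}.

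For the pairs-of-pairs condition I would run the identical comparison one level up. Recalling $Z_i = R_m X_i$ with $R_m'R_m = \tilde\beta_m\tilde\beta_m' + \tilde\Omega_m$, the second stage pairs the pairs by minimizing the Euclidean distance between the pair midpoints $\bar Z_s = R_m\bar X_s$, where $\bar X_s$ is the covariate average of pair $s$. Since $\|\bar Z_s - \bar Z_{s'}\|^2 = (\bar X_s - \bar X_{s'})'(\tilde\beta_m\tilde\beta_m' + \tilde\Omega_m)(\bar X_s - \bar X_{s'}) \ge (\bar X_s'\tilde\beta_m - \bar X_{s'}'\tilde\beta_m)^2$, comparing the optimal second-stage matching with the one that sorts pairs by the scalar midpoint $\bar X_s'\tilde\beta_m$ and arguing exactly as before (telescoping for the scalar part, $\|\tilde\Omega_m\|_{\rm op}$ times a bounded average for the penalty part) shows the matched midpoints are close in $\tilde h$, hence in $h$. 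Combining this with the within-pair closeness already established yields the analogue of \eqref{eq:hat-4}. Throughout, the joint limit $m,n\to\infty$ is handled by the standard device used in Theorems \ref{thm:se} and \ref{thm:l2}: condition on the pilot, extract a subsequence along which $\tilde\beta_m\to\beta$, $\tilde\Omega_m\to0$, and the conditional statements hold with probability one, and then remove the conditioning by the uniform integrability of bounded probabilities.

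The main obstacle I anticipate is the pairs-of-pairs step, since the penalized design matches on the genuinely multivariate $Z_i$ rather than on a scalar; it is the degeneracy $R_m'R_m\to\beta\beta'$ that ultimately collapses the second-stage matching onto the $\beta$-direction and makes the scalar telescoping bound applicable. The argument must be organized so that the operator-norm control of the penalty term requires no rate on $\|\tilde\Omega_m\|_{\rm op}$, but only $\|\tilde\Omega_m\|_{\rm op}\stackrel{P}{\to}0$ together with the $O_P(1)$ average of $\|X_i\|^2$.
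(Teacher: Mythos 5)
Your proposal is correct and follows essentially the same route as the paper: both rest on the sandwich argument combining the optimality of $\pi^{\rm pen}$ in \eqref{eq:pen-min} with the positive-semidefiniteness of $\tilde \Omega_m$, bound the penalty terms by $\|\tilde \Omega_m\|_{\rm op}$ times an $O_P(1)$ average of $\|X_i\|^2$, and then pass from $\tilde h_m$ to $h$ exactly as in Lemma \ref{lem:hat} before invoking the machinery behind Theorem \ref{thm:l2}. The only difference is one of completeness: where you verify the pairs-of-pairs condition \eqref{eq:hat-4} explicitly by repeating the comparison at the level of the midpoints $\bar Z_s$, the paper proves only the convergence in distribution and defers the consistency of $(\hat \varsigma_n^{\rm pen})^2$ to the arguments of Theorem 4.3 of \cite{bai2021inference}, so your sketch is, if anything, more self-contained on that point.
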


\begin{proof}
I only prove the convergence in distribution because the convergence of the standard error follows from similar arguments to those used in the proof of Theorem 4.3 in \cite{bai2021inference}. Define $\tilde h_m(x) = x' \tilde \beta_m$. Let $\|\tilde \Omega_m\|_{\rm op}$ denote the operator norm of $\tilde \Omega_m$. Note
\begin{align*}
& \frac{1}{n} \sum_{1 \leq s \leq n} ((X_{\pi^{\rm pen}(2s - 1)}' \tilde \beta_m - X_{\pi^{\rm pen}(2s)}' \tilde \beta_m)^2 + (X_{\pi^{\rm pen}(2s - 1)} - X_{\pi^{\rm pen}(2s)})' \tilde \Omega_m (X_{\pi^{\rm pen}(2s - 1)} - X_{\pi^{\rm pen}(2s)})) \\
& \leq \frac{1}{n} \sum_{1 \leq s \leq n} ((X_{\pi^{\tilde h_m}(2s - 1)}' \tilde \beta_m - X_{\pi^{\tilde h_m}(2s)}' \tilde \beta_m)^2 + (X_{\pi^{\tilde h_m}(2s - 1)} - X_{\pi^{\tilde h_m}(2s)})' \tilde \Omega_m (X_{\pi^{\tilde h_m}(2s - 1)} - X_{\pi^{\tilde h_m}(2s)})) \\
& = \frac{1}{n} \sum_{1 \leq s \leq n} ((X_{\pi^{\tilde h_m}(2s - 1)}' \tilde \beta_m - X_{\pi^{\tilde h_m}(2s)}' \tilde \beta_m)^2 + \|\tilde \Omega_m\|_{\rm op} |X_{\pi^{\tilde h_m}(2s - 1)} - X_{\pi^{\tilde h_m}(2s)}|^2) \\
& \leq \frac{1}{n} \sum_{1 \leq s \leq n} (X_{\pi^{\tilde h_m}(2s - 1)}' \tilde \beta_m - X_{\pi^{\tilde h_m}(2s)}' \tilde \beta_m)^2 + \|\tilde \Omega_m\|_{\rm op} \frac{2}{n} \sum_{1 \leq i \leq 2n} |X_i|^2 \\
& = \frac{1}{n} \sum_{1 \leq s \leq n} (X_{\pi^{\tilde h_m}(2s - 1)}' \tilde \beta_m - X_{\pi^{\tilde h_m}(2s)}' \tilde \beta_m)^2 + o_P(1)~,
\end{align*}
where the first inequality follows because $\pi^{\rm pen}$ solves \eqref{eq:pen-min}, the first equality follows from the definition of the operator norm, the second inequality follows from the fact that $|a + b|^2 \leq 2(|a|^2 + |a|^2)$ for $a, b \in \mathbf R^{\mathrm{dim}(X_i)}$, and the last equality follows from the assumptions that $E[X X'] < \infty$ and $\tilde \Omega_m \stackrel{P}{\to} 0$ and the weak law of large numbers. Because I also know
\[ \frac{1}{n} \sum_{1 \leq s \leq n} (X_{\pi^{\rm pen}(2s - 1)}' \tilde \beta_m - X_{\pi^{\rm pen}(2s - 1)}' \tilde \beta_m)^2 \geq \frac{1}{n} \sum_{1 \leq s \leq n} (X_{\pi^{\tilde h_m}(2s - 1)}' \tilde \beta_m - X_{\pi^{\tilde h_m}(2s - 1)}' \tilde \beta_m)^2~, \]
I have
\[ \frac{1}{n} \sum_{1 \leq s \leq n} (X_{\pi^{\rm pen}(2s - 1)}' \tilde \beta_m - X_{\pi^{\rm pen}(2s - 1)}' \tilde \beta_m)^2 = \frac{1}{n} \sum_{1 \leq s \leq n} (X_{\pi^{\tilde h_m}(2s - 1)}' \tilde \beta_m - X_{\pi^{\tilde h_m}(2s - 1)}' \tilde \beta_m)^2 + o_P(1)~. \]
The rest of the proof follows from similar arguments to those used in the proof of Lemma \ref{lem:hat}.
\end{proof}

\subsection{Inference with Pooled Data} \label{sec:pooled}
So far I have disregarded data from the pilot experiment for inference except when computing $\tilde g_m$. I end this section by describing a test that combines data from the pilot and the main experiments. Define
\[ \tilde \theta_m = \tilde \mu_m(1) - \tilde \mu_m(0)~, \]
where
\[ \tilde \mu_m(d) = \frac{\sum_{1 \leq j \leq m} \tilde Y_j I \{\tilde D_j = d\}}{\sum_{1 \leq j \leq m} I \{\tilde D_j = d\}} \]
for $d \in \{0, 1\}$. I define the new estimator for $\theta(Q)$ as
\[ \hat \theta_n^{\rm combined} = \frac{m}{m + 2n} \tilde \theta_m + \frac{2n}{m + 2n} \hat \theta_n~. \]
Let $\tilde \varsigma_{\mathrm{pilot}, m}^2$ denote the variance estimator of $\tilde \theta_m$ in the pilot experiment. I define the test as
\begin{equation} \label{eq:combined}
\phi_n^{\rm combined}(W^{(n)}, \tilde W^{(m)}) = I \{|T_n^{\rm combined}(W^{(n)}, \tilde W^{(m)})| > \Phi^{-1}(1 - \frac{\alpha}{2}) \}~,
\end{equation}
where
\[ T_n^{\rm combined}(W^{(n)}, \tilde W^{(m)}) = \frac{\sqrt{m + 2n}(\hat \theta_n^{\rm combined} - \theta_0)}{\sqrt{\frac{m}{m + 2n} \tilde \varsigma_{\mathrm{pilot}, m}^2 + \frac{2n}{m + 2n} 2 \hat \varsigma_{\tilde h_m, n}^2}}~, \]
and $\Phi^{-1}(1 - \frac{\alpha}{2})$ denotes the $(1 - \frac{\alpha}{2})$-th quantile of the standard normal distribution.

The following theorem shows the test defined in \eqref{eq:combined} is asymptotically exact as the sample sizes of both the pilot and the main experiments increase. The main additional requirement is as $m \to \infty$, $\sqrt m(\tilde \theta_m - \theta(Q))$ converges in distribution to a normal distribution whose variance is consistently estimable. The assumption is satisfied by many treatment assignment schemes, including i.i.d.\ coin flips and covariate-adaptive randomization. See \cite{bugni2018inference} and \cite{bugni2019inference} for details.

\begin{theorem} \label{thm:combined}
Suppose the treatment assignment scheme satisfies Assumption \ref{as:half}, $Q$ satisfies Assumptions \ref{as:moments}, $h$ satisfies Assumption \ref{as:H}, and $\tilde h_m$ satisfies Assumption \ref{as:l2}. Suppose in addition that as $m \to \infty$, $\sqrt m(\tilde \theta_m - \theta(Q)) \stackrel{d}{\to} N(0, \varsigma_{\rm pilot}^2)$, $\tilde \varsigma_{\mathrm{pilot}, m}^2 \stackrel{P}{\to} \varsigma_{\rm pilot}^2~$, and that as $m, n \to \infty$,
\[ \frac{m}{m + 2n} \to \nu \in [0, 1]~. \]
Then, under $\lambda^{\tilde g_m}(X^{(n)})$ for $h = \tilde g_m$, as $m, n \to \infty$,
\[ \frac{\sqrt{m + 2n}(\hat \theta_n^{\rm combined} - \theta(Q))}{\sqrt{\frac{m}{m + 2n} \tilde \varsigma_{\mathrm{pilot}, m}^2 + \frac{2n}{m + 2n} 2 \hat \varsigma_{\tilde h_m, n}^2}} \stackrel{d}{\to} N(0, 1)~. \]
Thus, for the problem of testing \eqref{eq:H0} at level $\alpha \in (0, 1)$, $\phi_n^{\rm combined}(W^{(n)}, \tilde W^{(m)})$ in \eqref{eq:combined} satisfies
\[ \lim_{m, n \to \infty} E[\phi_n^{\rm combined}(W^{(n)}, \tilde W^{(m)})] = \alpha~, \]
whenever $Q$ additionally satisfies the null hypothesis, i.e.\, $\theta(Q) = \theta_0$.
\end{theorem}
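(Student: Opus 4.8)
The plan is to reduce the $t$-statistic to a ratio whose numerator is an asymptotically normal sum of a ``pilot'' piece and a ``main-experiment'' piece, and whose denominator converges in probability to exactly the standard deviation of that sum. Under the null $\theta_0 = \theta(Q)$, abbreviate $\theta = \theta(Q)$ and decompose the numerator as
\[
\sqrt{m+2n}\,(\hat\theta_n^{\rm combined} - \theta)
= \sqrt{\tfrac{m}{m+2n}}\,\sqrt m\,(\tilde\theta_m - \theta)
+ \sqrt{\tfrac{2n}{m+2n}}\,\sqrt{2n}\,(\hat\theta_n - \theta)~.
\]
Since $\tfrac{m}{m+2n}\to\nu$ and $\tfrac{2n}{m+2n}\to 1-\nu$, the two weights converge to $\sqrt\nu$ and $\sqrt{1-\nu}$. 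The assumed pilot CLT gives $\sqrt m(\tilde\theta_m-\theta)\stackrel{d}{\to}N(0,\varsigma_{\rm pilot}^2)$, while Theorem \ref{thm:l2} gives $\sqrt n(\hat\theta_n-\theta)\stackrel{d}{\to}N(0,\varsigma_h^2)$ and hence $\sqrt{2n}(\hat\theta_n-\theta)\stackrel{d}{\to}N(0,2\varsigma_h^2)$. If the two pieces are asymptotically independent, the numerator converges to $N(0,\,\nu\varsigma_{\rm pilot}^2 + 2(1-\nu)\varsigma_h^2)$.

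First I would dispose of the denominator, which is the routine part. By assumption $\tilde\varsigma_{\mathrm{pilot},m}^2\stackrel{P}{\to}\varsigma_{\rm pilot}^2$, and by Theorem \ref{thm:l2} $\hat\varsigma_{\tilde h_m,n}^2\stackrel{P}{\to}\varsigma_h^2$; combining these with the weight limits and the continuous mapping theorem yields
\[
\sqrt{\tfrac{m}{m+2n}\tilde\varsigma_{\mathrm{pilot},m}^2 + \tfrac{2n}{m+2n}\,2\hat\varsigma_{\tilde h_m,n}^2}
\stackrel{P}{\to}\sqrt{\nu\varsigma_{\rm pilot}^2 + 2(1-\nu)\varsigma_h^2}~.
\]
This limit is exactly the standard deviation of the numerator's limiting law, so once the numerator's convergence is established, Slutsky's lemma delivers $T_n^{\rm combined}\stackrel{d}{\to}N(0,1)$, and the size conclusion follows because $P\{|Z|>\Phi^{-1}(1-\tfrac{\alpha}{2})\}=\alpha$ for $Z\sim N(0,1)$.

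The crux is therefore the asymptotic independence of the pilot and main-experiment pieces, and this is the step I expect to be the main obstacle: the estimator $\hat\theta_n$ is paired according to $\tilde h_m$, which is itself a function of the pilot data $\tilde W^{(m)}$, so the two pieces are genuinely dependent in finite samples. The key observation is that the main-experiment potential outcomes and covariates are drawn independently of $\tilde W^{(m)}$, so conditional on $\tilde W^{(m)}$ the pairing is governed by a fixed function and $\hat\theta_n$ behaves as in Theorem \ref{thm:l2}. Concretely, $\sqrt m(\tilde\theta_m-\theta)$ is measurable with respect to $\tilde W^{(m)}$, while, conditional on $\tilde W^{(m)}$, $\sqrt{2n}(\hat\theta_n-\theta)\stackrel{d}{\to}N(0,2\varsigma_h^2)$. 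This is precisely the configuration of Lemma \ref{lem:cond-convd}, which turns a conditioning-measurable variable together with a conditionally convergent one into independent Gaussian limits. Since the conditional CLT for the main piece holds, as in the proofs of Lemma \ref{lem:tau-limit} and Theorem \ref{thm:l2}, only along subsequences with probability one for $\tilde W^{(m)}$ (via Lemma \ref{lem:hat} and the $L^2$-consistency of $\tilde h_m$), I would wrap the application of Lemma \ref{lem:cond-convd} in the standard subsequencing argument: along any subsequence, pass to a further subsequence on which the empirical $L^2$ error vanishes almost surely, obtain joint convergence to the product law there, and conclude for the full sequence. Adding the two independent limits yields the numerator's $N(0,\,\nu\varsigma_{\rm pilot}^2+2(1-\nu)\varsigma_h^2)$ law, which completes the argument.
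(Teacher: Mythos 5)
Your proposal is correct and follows essentially the same route as the paper's proof: the same weighted decomposition into pilot and main-experiment pieces, asymptotic independence via the conditional CLT combined with Lemma \ref{lem:cond-convd} wrapped in a subsequencing argument (the paper phrases this as a proof by contradiction, which is equivalent), and Slutsky's lemma together with Theorem \ref{thm:l2} and the assumed pilot consistency to handle the denominator.
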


\begin{proof}
To begin with, note I need only establish as $m, n \to \infty$,
\begin{equation} \label{eq:combined-limit}
\sqrt{m + 2n}(\hat \theta_n^{\rm combined} - \theta(Q)) \stackrel{d}{\to} N(0, \nu \varsigma_{\rm pilot}^2 + (1 - \nu) 2 \varsigma_h^2)~,
\end{equation}
and the rest follows from Slutsky's lemma. I prove \eqref{eq:combined-limit} by contradiction. Suppose \eqref{eq:combined-limit} does not hold. Then, there exists a subsequence still denoted by $\{m, n\}$ for notational simplicity, along which as $m, n \to \infty$,
\begin{equation} \label{eq:combined-contra}
\sup_{t \in \mathbf R} \Big | \sqrt{m + 2n}(\hat \theta_n^{\rm combined} - \theta(Q)) - \Phi(z / \sqrt{\nu \varsigma_{\rm pilot}^2 + (1 - \nu) 2 \varsigma_h^2}) \Big | \to c~,
\end{equation}
where $c > 0$, and
\[ \frac{m}{m + 2n} \to \nu \in [0, 1]~. \]
Now consider this subsequence. Since the two convergences in the Lemma \ref{lem:hat} hold in probability, there exists a further subsequence along which they hold with probability one. By Theorem \ref{thm:limit}, I could see along this subsequence, as $m, n \to \infty$, with probability one for $\tilde W^{(m)}$,
\begin{equation} \label{eq:combined-conv}
\sup_{t \in \mathbf R} \Big | Q \{ \sqrt n(\hat \theta_n - \theta(Q)) \leq t | \tilde W^{(m)} \} - \Phi(z / \varsigma_h) \Big | \to 0~.
\end{equation}
Along the subsequence I construct, because $\frac{m}{m + 2n} \to \nu$, by \eqref{eq:combined-conv}, Slutsky's lemma, and Lemma \ref{lem:cond-convd},
\[ \sqrt{m + 2n}(\hat \theta_n^{\rm combined} - \theta(Q)) \stackrel{d}{\to} N(0, \nu \varsigma_{\rm pilot}^2 + (1 - \nu) 2 \varsigma_h^2)~, \]
which is a contradiction to \eqref{eq:combined-contra}. The theorem therefore holds.
\end{proof}

\subsection{Adjusted Standard Error With Four Units} \label{sec:four}
Still suppose the sample size is $2n$, in order to be consistent with the notation in the main text. Suppose units are matched into sets of four units according to a function $h \in \mathbf H$. Let $\pi^h$ be such that $h_{\pi^h(1)} \leq \dots \leq h_{\pi^h(2n)}$. The stratification is given by
\begin{equation} \label{eq:four}
\{\{\pi^h(4s - 3), \pi^h(4s - 2), \pi^h(4s - 1), \pi^h(4s)\}: 1 \leq s \leq n / 2\}~.
\end{equation}
By Lemma \ref{lem:tau-limit}, I still have $\sqrt n(\hat \theta_n - \theta(Q)) \stackrel{d}{\to} N(0, \varsigma_h^2)$, for $\varsigma_h^2$ in \eqref{eq:limit}. The variance estimator is
\begin{equation} \label{eq:four-se}
(\hat \varsigma_{h, n}^{\rm four})^2 = \hat \sigma_n^2(1) + \hat \sigma_n^2(0) - \frac{1}{2} \hat \rho_n^{\rm four} + \frac{1}{2} (\hat \mu_n(1) + \hat \mu_n(0))^2~,
\end{equation}
where
\begin{equation} \label{eq:four-correction}
\hat \rho_n^{\rm four} = \frac{2}{n} \sum_{1 \leq s \leq n / 2} \frac{1}{2} \sum_{i, j, k, l \in \lambda_s, i < j, k < l: D_i = D_j = 1, D_k = D_l = 0} (Y_i + Y_k)(Y_j + Y_l)~.
\end{equation}
To establish 
\[ \hat \rho_n^{\rm four} \stackrel{P}{\to} E[E[g(X_i) | h(X_i)]^2]~, \]
note
\begin{align*}
& E \Big [ \frac{1}{2} \sum_{i, j, k, l \in \lambda_s, i < j, k < l: D_i = D_j = 1, D_k = D_l = 0} (Y_i + Y_k)(Y_j + Y_l) | h^{(n)} \Big ] \\
& = \frac{1}{12} \sum_{i, j, k, l \in \{0, 1, 2, 3\}, i < j, k < l} (\mu_1(h_{\pi^h(4s - i)}) + \mu_0(h_{\pi^h(4s - k)}))(\mu_1(h_{\pi^h(4s - j)}) + \mu_0(h_{\pi^h(4s - l)})) \\
& \hspace{3em} + (\mu_1(h_{\pi^h(4s - i)}) + \mu_0(h_{\pi^h(4s - l)}))(\mu_1(h_{\pi^h(4s - j)}) + \mu_0(h_{\pi^h(4s - k)})) \\
& = \frac{1}{12} (g_h(h_{\pi^h(4j - 3)}) + g_h(h_{\pi^h(4j - 2)}))(g_h(h_{\pi^h(4j - 1)}) + g_h(h_{\pi^h(4j)})) \\
& \hspace{3em} + \frac{1}{12} (g_h(h_{\pi^h(4j - 3)}) + g_h(h_{\pi^h(4j - 1)}))(g_h(h_{\pi^h(4j - 2)}) + g_h(h_{\pi^h(4j)})) \\
& \hspace{3em} + \frac{1}{12} (g_h(h_{\pi^h(4j - 3)}) + g_h(h_{\pi^h(4j)}))(g_h(h_{\pi^h(4j - 2)}) + g_h(h_{\pi^h(4j - 1)}))~,
\end{align*}
where the coefficient $\frac{1}{12} = \frac{1}{2} \times \frac{1}{6}$ appears in the first equality because there are $\binom{4}{2} = 6$ ways to choose 2 units among 4 units to be treated. The consistency of $(\hat \varsigma_{h, n}^{\rm four})^2$ in \eqref{eq:four-se} then follows from similar arguments to those used in the proof of Theorem \ref{thm:se}. By repeating the arguments in Section \ref{sec:nonneg}, I can also show the variance estimator is nonnegative.

Finally, I discuss the variance estimator on p.100 of in \cite{athey2017econometrics}. Suppose $\lambda_s = \{i(s), j(s), k(s), l(s)\}$, $D_{i(s)} = D_{j(s)} = 1$, and $D_{k(s)} = D_{l(s)} = 0$. The variance estimator is constructed as
\[ \sum_{1 \leq s \leq n / 2} n \Big ( \frac{4}{2n} \Big )^2 \Big ( \frac{1}{4} (Y_{i(s)} - Y_{j(s)})^2 + \frac{1}{4} (Y_{k(s)} - Y_{l(s)})^2 \Big )~. \]
It follows from similar arguments to those used above that the variance estimator converges in probability to
\[ E[\var[Y_i(1) | h(X_i)]] + E[\var[Y_i(1) | h(X_i)]]~, \]
which is less than the asymptotic variance of $\hat \theta_n$, i.e., $\varsigma_h^2$ in \eqref{eq:limit}, and strictly so unless \eqref{eq:hom} holds. Therefore, unless \eqref{eq:hom} holds, the test in \cite{athey2017econometrics} fails to control size. The failure of size control for \cite{athey2017econometrics} also arises in settings with a fixed number of strata, as discussed in Section 5 of \cite{bugni2019inference}. \qed

\section{Additional Simulation Results} \label{sec:sims-supp}
This section contains the tables with the raw numbers for the main text and some additional simulation results.

Table \ref{table:sims-main-all-1-5}--\ref{table:sims-main-all-6-10} contain the raw numbers for Table \ref{table:sims-main} in the main text.

Table \ref{table:sims-others} contains the summary statistics for the following stratifications in addition to the ones in the main text:
\begin{enumerate}
	\item[(i'')] None-reg-int: No stratification with the estimator in \cite{lin2013agnostic}, i.e., the OLS estimator of the coefficient on $D$ in the linear regression of $Y$ on a constant, $D$, $X - \bar X_n$, and $D (X - \bar X_n)$, where $\bar X_n$ is the sample average of $X_i$'s, together with White's heteroskedasticity-robust standard error.
	\item[(j)] MS X2: Matched sets of four to minimize the sum of the Mahalanobis distances of $X_2$, namely, all covariates in the main regression specification except the baseline outcome.
	\item[(k)] MS pilot: Matched sets of four according to $\tilde g_m$ from the pilot.
	\item[(l)] MS pen: Matched sets of four to minimize the sum of the distances in \eqref{eq:pen-metric} of all covariates.
\end{enumerate}

Tables \ref{table:sims-others-1-5}--\ref{table:sims-others-6-10} contain the raw numbers for stratifications (j)--(l).

Table \ref{table:sims-ai} includes the size of the test in \cite{athey2017econometrics} for matched sets of four. \cite{athey2017econometrics} assume a finite-population setting, in which the potential outcomes and the covariates are fixed, and the parameter of interest is the sample ATE, defined as the average difference between the treated and untreated potential outcomes of all units in the sample. My paper instead focuses on the ATE and assumes that units are drawn from a superpopulation, so potential outcomes and covariates are random. I have shown in Section \ref{sec:four} that because of the differences in sampling frameworks, the test in \cite{athey2017econometrics} does not control size in my setting unless \eqref{eq:hom} holds, which only happens in Model 1. Therefore, in Model 1, the size of the test is around the nominal level of 5\%. In Models 2 and 3, the size of the test is larger than 5\%.

\begin{table}[ht]
\caption{MSEs, size, and standard errors for stratifications (a)--(i) across papers 1--5}
\begin{adjustbox}{max width=0.8\linewidth,center}
\begin{tabular}{cccccccccccccc} 
\hline\hline
& \multicolumn{3}{c}{}                                  & (a)    & (b)    & (c)     & (d)     & (e)      & (f)      & (g)      & (h)      & (i)  & (i')   \\
Paper              & Model                    & $\theta$       &            & MP X   & MS X   & MP base & MS base & MP X2 & MP pilot & MP pen & Origin &None & None-reg   \\ 
\cmidrule{1-14}
\multirow{10}{*}[-0.2em]{1} & \multirow{5}{*}{1} & \multirow{5}{*}{0}         &MSE& 0.00048 & 0.00070 & - & - & 0.00045 & 0.00066 & 0.00046 & - & 0.001 & 0.00099 \\ 
&&                          &size (adj/adj4)                            & 1.4 & 5.9 & - & - & 0.5 & 3.0 & 0.7 & - & 5.1 & 4.7 \\ 
&&                          &size (MPt)                            & 5.1 & - & - & - & 4.7 & 5.5 & 5.4 & - & - & - \\ 
&&                          &s.e. (adj/adj4)                            & 0.028 & 0.026 & - & - & 0.028 & 0.029 & 0.028 & - & 0.032 & 0.031 \\ 
&&                          &s.e. (MPt)                            & 0.021 & - & - & - & 0.021 & 0.025 & 0.021 & - & - & - \\  
\cmidrule{2-14}
& \multirow{5}{*}{2} & \multirow{5}{*}{0.033} &MSE& 0.00055 & 0.00073 & - & - & 0.00052 & 0.00071 & 0.00054 & - & 0.00092 & 0.00092 \\  
&&                          &size (adj/adj4)                            & 2.5 & 5.1 & - & - & 2.3 & 3.4 & 1.6 & - & 3.6 & 3.2\\ 
&&                          &size (MPt)                            & 2.7 & - & - & - & 2.2 & 3.7 & 1.5 & - & - & - \\  
&&                          &s.e. (adj/adj4)                      & 0.028 & 0.027 & - & - & 0.028 & 0.029 & 0.028 & - & 0.032 & 0.031 \\ 
&&                          &s.e. (MPt)                            & 0.028 & - & - & - & 0.028 & 0.029 & 0.028 & - & - & -  \\ 

\cmidrule{1-14}
\multirow{15}{*}[-0.5em]{2} & \multirow{5}{*}{1} & \multirow{5}{*}{0}  &MSE& 0.038 & 0.053 & 0.070 & 0.07 & 0.063 & 0.053 & 0.033 & - & 0.087  & 0.079\\ 
&&                          &size (adj/adj4)                            & 1.2 & 4.8 & 4.4 & 4.6 & 3.9 & 2.6 & 0.60 & - & 5.7 & 5.3 \\ 
&&                          &size (MPt)                            & 4.8 & - & 4.6 & - & 5.4 & 5.3 & 5.0 & - & - & - \\ 
&&                          &s.e. (adj/adj4)                            & 0.25 & 0.23 & 0.27 & 0.27 & 0.26 & 0.26 & 0.25 & - & 0.29 & 0.28 \\ 
&&                          &s.e. (MPt)                            & 0.19 & - & 0.27 & - & 0.24 & 0.22 & 0.19 & - & - & - \\ 

\cmidrule{2-14}
& \multirow{5}{*}{2} & \multirow{5}{*}{0.18} &MSE& 0.047 & 0.053 & 0.073 & 0.070 & 0.062 & 0.057 & 0.049 & - & 0.081 & 0.073 \\ 
&&                          &size (adj/adj4)                            & 1.7 & 3.6 & 4.9 & 4.4 & 3.9 & 3.3 & 2.0 & - & 4.9 & 5.2 \\ 
&&                          &size (MPt)                            & 1.7 & - & 4.7 & - & 3.7 & 2.7 & 1.9 & - & - & - \\ 
&&                          &s.e. (adj/adj4)                            & 0.25 & 0.24 & 0.27& 0.27 & 0.26 & 0.26 & 0.25 & - & 0.29 & 0.28 \\   
&&                          &s.e. (MPt)                            & 0.26 & - & 0.28 & - & 0.27 & 0.27 & 0.26 & - & - & - \\ 
\cmidrule{2-14}
& \multirow{5}{*}{3} & \multirow{5}{*}{0.012}    &MSE& 0.060 & 0.069 & 0.058 & 0.062 & 0.076 & 0.070 & 0.058 & - & 0.091 & 0.090 \\ 
&&                          &size (adj/adj4)                            & 2.5 & 5.2 & 4.6 & 4.8 & 4.7 & 3.4 & 2.5 & - & 4.9 & 4.8 \\ 
&&                          &size (MPt)                            & 2.5 & - & 4.2 & - & 3.5 & 3.4 & 2.4 & - & - & - \\ 
&&                          &s.e. (adj/adj4)                            & 0.28 & 0.27 & 0.25 & 0.25 & 0.29 & 0.28 & 0.28 & - & 0.3 & 0.3 \\  
&&                          &s.e. (MPt)                            & 0.28 & - & 0.25 & - & 0.29 & 0.29 & 0.28 & - & -  & - \\

\cmidrule{1-14}
\multirow{15}{*}[-0.5em]{3} & \multirow{5}{*}{1} & \multirow{5}{*}{0}         &MSE& 0.079 & 0.13 & 0.15 & 0.17 & 0.15 & 0.11 & 0.075 & 0.22 & 0.22 & 0.17 \\ 
&&                          &size (adj/adj4)                            & 0.3 & 5.2 & 4.8 & 4.9 & 2.9 & 2.7 & 0.4 & 5.2 & 4.4 & 4.5 \\ 
&&                          &size (MPt)                            & 5.2 & - & 4.9 & - & 4.1 & 5.3 & 4.0 & - & - & - \\ 
&&                          &s.e. (adj/adj4)                            & 0.38 & 0.35 & 0.41 & 0.41 & 0.42 & 0.38 & 0.38 & 0.47 & 0.47 & 0.41 \\
&&                          &s.e. (MPt)                            & 0.28 & - & 0.40 & - & 0.39 & 0.33 & 0.28 & - & - & - \\

\cmidrule{2-14}
& \multirow{5}{*}{2} & \multirow{5}{*}{0.41} &MSE& 0.11 & 0.15 & 0.16 & 0.16 & 0.16 & 0.13 & 0.11 & 0.20 & 0.21 & 0.16 \\
&&                          &size (adj/adj4)                            & 2.6 & 6.6 & 4.9 & 5.4 & 4.1 & 4.2 & 2.50 & 4 & 4.2  & 4.8 \\ 
&&                          &size (MPt)                            & 2.2 & - & 4.9 & - & 3.6 & 3.1 & 1.9 & - & -  & - \\ 
&&                          &s.e. (adj/adj4)                            & 0.38 & 0.36 & 0.40 & 0.40 & 0.43 & 0.39 & 0.38 & 0.46 & 0.46 & 0.41 \\  
&&                          &s.e. (MPt)                           & 0.39 & - & 0.40 & - & 0.44 & 0.40 & 0.39 & - & - & - \\ 

\cmidrule{2-14}
& \multirow{5}{*}{3} & \multirow{5}{*}{0.60}    &MSE& 0.084 & 0.10 & 0.098 & 0.10 & 0.11 & 0.092 & 0.078 & 0.13 & 0.13 & 0.11 \\ 
&&                          &size (adj/adj4)                            & 3.0 & 6.2 & 4.7 & 4.6 & 4.8 & 5.1 & 2.7 & 5.3 & 5 & 4.7 \\ 
&&                          &size (MPt)                            & 2.3 & - & 4.1 & - & 3.8 & 3.7 & 1.9 & - & - & - \\ 
&&                          &s.e. (adj/adj4)                            & 0.32 & 0.30 & 0.31 & 0.31 & 0.34 & 0.31 & 0.31 & 0.36 & 0.36 & 0.33 \\ 
&&                          &s.e. (MPt)                            & 0.33 & - & 0.32 & - & 0.35 & 0.33 & 0.33 & - & - & - \\

\cmidrule{1-14}
\multirow{15}{*}[-0.5em]{4} & \multirow{5}{*}{1} & \multirow{5}{*}{0}         &MSE& 0.045 & 0.069 & 0.089 & 0.084 & 0.086 & 0.058 & 0.042 & 0.14 & 0.15 & 0.14 \\ 
&&                          &size (adj/adj4)                            & 0.9 & 4.7 & 5.9 & 4.7 & 2.0 & 2.2 & 1.0 & 3.7 & 4.5 & 4.7 \\ 
&&                          &size (MPt)                            & 5.5 & - & 6 & - & 5.4 & 5.6 & 5.6 & - & - & - \\ 
&&                          &s.e. (adj/adj4)                            & 0.28 & 0.26 & 0.29 & 0.29 & 0.35 & 0.27 & 0.26 & 0.37 & 0.39 & 0.38 \\ 
&&                          &s.e. (MPt)                            & 0.21 & - & 0.29 & - & 0.29 & 0.23 & 0.20 & - & - & - \\

\cmidrule{2-14}
& \multirow{5}{*}{2} & \multirow{5}{*}{-1.31} &MSE& 0.058 & 0.072 & 0.075 & 0.078 & 0.089 & 0.064 & 0.052 & 0.14 & 0.15 & 0.14 \\
&&                          &size (adj/adj4)                            & 2.2 & 4.8 & 3.9 & 4.6 & 2.3 & 3.1 & 2.1 & 5.4 & 4.4 & 3.9 \\  %
&&                          &size (MPt)                            & 2.2 & - & 3.6 & - & 3.0 & 2.7 & 1.5 & - & - & - \\ 
&&                          &s.e. (adj/adj4)                            & 0.28 & 0.27 & 0.29 & 0.29 & 0.35 & 0.27 & 0.27 & 0.37 & 0.39 & 0.38 \\  
&&                          &s.e. (MPt)                            & 0.29 & - & 0.29 & - & 0.34 & 0.28 & 0.28 & - & - & - \\

\cmidrule{2-14}
& \multirow{5}{*}{3} & \multirow{5}{*}{-1.78}    &MSE& 0.075 & 0.086 & 0.070 & 0.070 & 0.13 & 0.072 & 0.062 & 0.18 & 0.17 & 0.17 \\ 
&&                          &size (adj/adj4)                            & 2.2 & 4.3 & 5 & 5.9 & 2.7 & 4.8 & 2.7 & 5.7 & 4.3 & 3.6 \\  %
&&                          &size (MPt)                            & 1.9 & - & 4.7 & - & 3.1 & 3.8 & 1.3 & - & - &-  \\ 
&&                          &s.e. (adj/adj4)                            & 0.32 & 0.30 & 0.27 & 0.27 & 0.40 & 0.28 & 0.27 & 0.41 & 0.43 & 0.42 \\  
&&                          &s.e. (MPt)                            & 0.32 & - & 0.27 & - & 0.39 & 0.30 & 0.30 & - & - & - \\ 

\cmidrule{1-14}
\multirow{15}{*}[-0.5em]{5} & \multirow{5}{*}{1} & \multirow{5}{*}{0}         &MSE& 0.55 & 0.82 & 0.84 & 1.02 & 1.03 & 0.79 & 0.59 & - & 1.24 & 1.25\\ 
&&                          &size (adj/adj4)                            & 1.4 & 5.2 & 3.4 & 6.1 & 6.0 & 2.9 & 1.8 & - & 6.9 & 5.4 \\ 
&&                          &size (MPt)                            & 4.6 & - & 5.6 & - & 5.4 & 5.1 & 5.5 & - & -  & - \\ 
&&                          &s.e. (adj/adj4)                            & 0.99 & 0.92 & 1.04 & 1.01 & 1.01 & 1.03 & 1.01 & - & 1.09 & 1.13 \\ 
&&                          &s.e. (MPt)                            & 0.75 & - & 0.90 & - & 1 & 0.88 & 0.74 & - & - & - \\
\cmidrule{2-14}
& \multirow{5}{*}{2} & \multirow{5}{*}{-0.35} &MSE& 0.82 & 0.90 & 1.07 & 1.12 & 1.17 & 1.01 & 0.90 & - & 1.26 & 1.28 \\
&&                          &size (adj/adj4)                            & 3.8 & 5.1 & 5.4 & 6.1 & 6.9 & 4.6 & 4.6 & - & 5.6 & 5.1 \\ 
&&                          &size (MPt)                            & 2.8 & - & 3.9 & - & 6.3 & 4.5 & 3.3 & - & - & - \\ 
&&                          &s.e. (adj/adj4)                            & 1.00 & 0.97 & 1.05 & 1.04 & 1.07 & 1.05 & 1.02 & - & 1.13 & 1.16 \\
&&                          &s.e. (MPt)                            & 1.07 & - & 1.11 & - & 1.07 & 1.08 & 1.07 & - & -  & - \\  

\cmidrule{2-14}
& \multirow{5}{*}{3} & \multirow{5}{*}{-0.54}    &MSE& 0.92 & 1.01 & 0.94 & 1.08 & 1.11 & 1.03 & 0.94 & - & 1.20 & 1.21 \\  
&&                          &size (adj/adj4)                            & 3.5 & 5.3 & 3.2 & 5 & 4.8 & 4.2 & 3.80 & - & 5.5 & 4.6 \\
&&                          &size (MPt)                            & 2.6 & - & 2.8 & - & 5.3 & 4.2 & 2.7 & - & - &- \\ 
&&                          &s.e. (adj/adj4)                           & 1.05 & 1.01& 1.04 & 1.02 & 1.06 & 1.05 & 1.04 & - & 1.09 & 1.13 \\ 
&&                          &s.e. (MPt)                           & 1.11 & - & 1.10 & - & 1.06 & 1.09 & 1.11 & - & - & - \\ 
\hline\hline
\end{tabular}
\end{adjustbox}
\label{table:sims-main-all-1-5}
\begin{tablenotes} \footnotesize
\item For each stratification, I report (1) the MSE, (2) the size of testing \eqref{eq:H0} for $\theta_0 = \theta$ at the 5\% level, in percentage, and (3) the average standard error. The tests used in this table are as follows: for matched-pair designs, the adjusted $t$-test with the variance estimator in \eqref{eq:se} (adj) and the test in \cite{imbens2015causal} (MPt); for matched sets of four, the adjusted $t$-test with the variance estimator in \eqref{eq:four-se} in the supplement (adj4); for the original stratifications, the test in (23) of \cite{bugni2018inference}; for no stratification, the two-sample $t$-test; for the regression-adjusted estimator, the $t$-test with White's heteroskedasticity-robust standard error. Rows are labeled according to the papers, models, and metrics. Columns are labeled according to the stratifications. The definitions of the stratifications can be found in the main text.
\end{tablenotes}
\end{table}

\begin{table}[ht]
\caption{MSEs, size, and standard errors for stratifications (a)--(i) across papers 6--10}
\begin{adjustbox}{max width=0.85\linewidth,center}
\begin{tabular}{cccccccccccccc} 
\hline\hline
& \multicolumn{3}{c}{}                                  & (a)    & (b)    & (c)     & (d)     & (e)      & (f)      & (g)      & (h)      & (i)  & (i')  \\
Paper              & Model                    & $\theta$     &              & MP X   & MS X   & MP base & MS base & MP X2 & MP pilot & MP pen & Origin &None  & None-reg \\ 

\cmidrule{1-14}
\multirow{10}{*}[-0.2em]{6} & \multirow{5}{*}{1} & \multirow{5}{*}{0}        &MSE & 0.000046 & 0.000072 & - & - & 0.000045 & 0.000072 & 0.000044 & 0.00012 & 0.00012 & 0.00011 \\ 
&&                          &size (adj/adj4)                            & 0.7 & 5.1 & - & - & 0.6 & 3.0 & 0.8 & 5.5 & 4.4 & 4.2 \\ 
&&                          &size (MPt)                            & 5.0 & - & - & - & 5.1 & 5.3 & 3.7 & - & - & - \\ 
&&                          &s.e. (adj/adj4)                            & 0.0094 & 0.0086 & - & - & 0.0094 & 0.0097 & 0.0094 & 0.011 & 0.011 & 0.011 \\ 
&&                          &s.e. (MPt)                            & 0.0068 & - & - & - & 0.0068 & 0.0084 & 0.0068 & - & - & - \\ 

\cmidrule{2-14}
& \multirow{5}{*}{2} & \multirow{5}{*}{0.01} &MSE& 0.000082 & 0.000098 & - & - & 0.000081 & 0.000094 & 0.000081 & 0.00013 & 0.00013 & 0.00013 \\
&&                          &size (adj/adj4)                            & 3.1 & 4.8 & - & - & 2.2 & 3.5 & 2.8 & 5.3 & 6.1  & 6.7 \\ 
&&                          &size (MPt)                            & 2.7 & - & - & - & 1.7 & 3.0 & 2.0 & - & - & - \\ 
&&                          &s.e. (adj/adj4)                            & 0.010 & 0.0098 & - & - & 0.010 & 0.010 & 0.010 & 0.011 & 0.011 & 0.011\\ 
&&                          &s.e. (MPt)                            & 0.011 & - & - & - & 0.011 & 0.011 & 0.011 & - & - & - \\ 

\cmidrule{1-14}
\multirow{10}{*}[-0.2em]{7} & \multirow{5}{*}{1} & \multirow{5}{*}{0}         &MSE& 0.014 & 0.024 & - & - & 0.013 & 0.020 & 0.015 & 0.033 & 0.031 & 0.029 \\  
&&                          &size (adj/adj4)                            & 0.3 & 5.3 & - & - & 0.5 & 2.6 & 0.7 & 4.6 & 4.2 & 3.8 \\  
&&                          &size (MPt)                            & 4.6 & - & - & - & 3.4 & 5.2 & 5.3 & - & - & - \\ 
&&                          &s.e. (adj/adj4)                            & 0.17 & 0.15 & - & - & 0.16 & 0.17 & 0.16 & 0.18 & 0.19 & 0.18 \\ 
&&                          &s.e. (MPt)                            & 0.12 & - & - & - & 0.12& 0.14 & 0.12 & - & - & - \\ 

\cmidrule{2-14}
& \multirow{5}{*}{2} & \multirow{5}{*}{0.21} &MSE& 0.024 & 0.028 & - & - & 0.022 & 0.028 & 0.025 & 0.033 & 0.036 & 0.033 \\
&&                          &size (adj/adj4)                            & 2.6 & 4.9 & - & - & 2.3 & 3.8 & 3.6 & 4.7 & 5 & 4.5 \\ 
&&                          &size (MPt)                            & 1.7 & - & - & - & 1.3 & 3.7 & 3.0 & - & - & - \\ 
&&                          &s.e. (adj/adj4)                            & 0.17 & 0.17 & - & - & 0.17 & 0.18 & 0.17 & 0.19 & 0.19 & 0.19 \\  
&&                          &s.e. (MPt)                            & 0.18 & - & - & - & 0.18 & 0.18 & 0.18 & - & - & - \\

\cmidrule{1-14}
\multirow{15}{*}[-0.5em]{8} & \multirow{5}{*}{1} & \multirow{5}{*}{0}         &MSE& 0.19 & 0.28 & 0.41 & 0.41 & 0.20 & 0.27 & 0.18 & - & 0.46 & 0.47 \\ 
&&                          &size (adj/adj4)                            & 0.5 & 4.5 & 5.7 & 4.8 & 0.9 & 2.1 & 1.5 & - & 5.4 & 5.9 \\ 
&&                          &size (MPt)                            & 6.4 & - & 5.6 & - & 4.2 & 4.6 & 5.4 & - & - & - \\ 
&&                          &s.e. (adj/adj4)                            & 0.59 & 0.54 & 0.63& 0.63 & 0.60 & 0.59 & 0.58 & - & 0.67 & 0.67 \\ 
&&                          &s.e. (MPt)                            & 0.42 & - & 0.63 & - & 0.45 & 0.51 & 0.42 & - & - & - \\ 

\cmidrule{2-14}
& \multirow{5}{*}{2} & \multirow{5}{*}{0.041} &MSE& 0.24 & 0.32 & 0.43 & 0.38 & 0.26 & 0.32 & 0.25 & - & 0.45 & 0.45 \\
&&                          &size (adj/adj4)                            & 1.9 & 5.5 & 5.4 & 4.7 & 2.3 & 3.1 & 2.0 & - & 4.6 & 4.9 \\ %
&&                          &size (MPt)                            & 2.7 & - & 5.7 & - & 3.2 & 3.1 & 3.0 & - & - & - \\ 
&&                          &s.e. (adj/adj4)                           & 0.59 & 0.56 & 0.63 & 0.63 & 0.61& 0.61 & 0.59 & - & 0.67 & 0.66 \\ 
&&                          &s.e. (MPt)                            & 0.56 & - & 0.63 & - & 0.58 & 0.60 & 0.57 & - & -  & - \\

\cmidrule{2-14}
& \multirow{5}{*}{3} & \multirow{5}{*}{1.07}    &MSE& 0.42 & 0.47 & 0.40 & 0.41 & 0.42 & 0.52 & 0.41 & - & 0.59 & 0.59\\ 
&&                          &size (adj/adj4)                            & 3.5 & 4.9 & 5.7 & 5.3 & 3.0 & 5.5 & 3.6 & - & 5.6 & 5.0 \\
&&                          &size (MPt)                            & 2.3 & - & 4.8 & - & 2.4 & 4.0 & 3.0 & - & - & - \\ 
&&                          &s.e. (adj/adj4)                            & 0.71 & 0.69 & 0.63 & 0.63 & 0.72 & 0.72 & 0.71 & - & 0.75 & 0.76 \\ 
&&                          &s.e. (MPt)                            & 0.76 & - & 0.64 & - & 0.76 & 0.76 & 0.75 & - & - & - \\  

\cmidrule{1-14}
\multirow{15}{*}[-0.5em]{9} & \multirow{5}{*}{1} & \multirow{5}{*}{0}         &MSE& 0.0042 & 0.0065 & 0.0082 & 0.0090 & 0.0044 & 0.0057 & 0.0038 & 0.011 & 0.0095 & 0.0094 \\ 
&&                          &size (adj/adj4)                            & 1.0 & 5.2 & 3.6 & 5.1 & 0.8 & 2.2 & 0.5 & 5.7 & 5.5 & 6.0 \\ 
&&                          &size (MPt)                            & 5.0 & - & 3.9 & - & 5.5 & 4.8 & 4.8 & - & - & - \\ 
&&                          &s.e. (adj/adj4)                        &    0.086 & 0.079 & 0.094 & 0.094 & 0.089 & 0.087 & 0.085 & 0.099 & 0.10 & 0.098 \\ 
&&                          &s.e. (MPt)                            & 0.063 & - & 0.094 & - & 0.065 & 0.075 & 0.062 & - & - & - \\ 

\cmidrule{2-14}
& \multirow{5}{*}{2} & \multirow{5}{*}{-0.10} &MSE& 0.0065 & 0.0077 & 0.0097 & 0.0089 & 0.0068 & 0.0075 & 0.0065 & 0.010 & 0.0094 & 0.0093 \\ 
&&                          &size (adj/adj4)                            & 2.9 & 6.1 & 5.8 & 5.2 & 3.3 & 4.3 & 3.0 & 7.3 & 5.9 & 5.7 \\ 
&&                          &size (MPt)                            & 2.2 & - & 6.2 & - & 2.6 & 3.6 & 2.9 & - & - & - \\ 
&&                          &s.e. (adj/adj4)                            & 0.09 & 0.087 & 0.096 & 0.096 & 0.091 & 0.092 & 0.09 & 0.097 & 0.098  & 0.097 \\  
&&                          &s.e. (MPt)                            & 0.094 & - & 0.096 & - & 0.094 & 0.095 & 0.093 & - & - & - \\  

\cmidrule{2-14}
& \multirow{5}{*}{3} & \multirow{5}{*}{-0.012}    & MSE& 0.0065 & 0.0073 & 0.0076 & 0.0077 & 0.0064 & 0.0068 & 0.0065 & 0.0080 & 0.0078 & 0.0079 \\  
&&                          &size (adj/adj4)                            & 4.8 & 6.4 & 5.6 & 6.2 & 4.3 & 5.4 & 4.8 & 6.4 & 5.2 & 5.7 \\ 
&&                          &size (MPt)                            & 3.4 & - & 6 & - & 2.2 & 3.0 & 2.3 & - & - & - \\ 
&&                          &s.e. (adj/adj4)                            & 0.083 & 0.081 & 0.085 & 0.085 & 0.083 & 0.083 & 0.083 & 0.085 & 0.086 & 0.087 \\
&&                          &s.e. (MPt)                            & 0.094 & - & 0.086 & - & 0.094 & 0.091 & 0.094 & - & -  & - \\

\cmidrule{1-14}
\multirow{15}{*}[-0.5em]{10} & \multirow{5}{*}{1} & \multirow{5}{*}{0}         &MSE& 38.26 & 44.42 & 44.68 & 51.34 & 48.29 & 37.41 & 35.58 & - & 56.29 & 52.54 \\ 
&&                          &size (adj/adj4)                            & 2.9 & 5.2 & 4.2 & 5.5 & 3.8 & 2.6 & 2.3 & - & 5.7 & 5.0 \\ 
&&                          &size (MPt)                            & 5.1 & - & 4.6 & - & 3.8 & 3.3 & 3.3 & - & - & - \\ 
&&                          &s.e. (adj/adj4)                            & 6.73 & 6.56 & 6.91 & 6.86 & 7.1 & 6.65 & 6.63 & - & 7.44 & 7.25 \\  
&&                          &s.e. (MPt)                            & 6.16 & - & 6.72 & - & 6.99 & 6.32 & 6.1 & - & - & -  \\ 

\cmidrule{2-14}
& \multirow{5}{*}{2} & \multirow{5}{*}{5.61} &MSE& 53.10 & 57.04 & 78.45& 86.84 & 76.40 & 57.85 & 53.09 & - & 91.23 & 84.21 \\ 
&&                          &size (adj/adj4)                            & 4.4 & 4.9 & 4.9 & 6.2 & 4.4 & 5.1 & 4.0 & - & 5.3 & 5.0 \\ 
&&                          &size (MPt)                            & 4.8 & - & 4.8 & - & 4.9 & 5.3 & 3.5 & - & - & - \\ 
&&                          &s.e. (adj/adj4)                            & 7.67 & 7.58 & 8.9 & 8.87 & 8.79 & 7.65 & 7.59 & - & 9.5 & 9.23 \\  
&&                          &s.e. (MPt)                            & 7.61 & - & 8.92 & - & 8.79 & 7.65 & 7.57 & - & - & - \\ 

\cmidrule{2-14}
& \multirow{5}{*}{3} & \multirow{5}{*}{1.91}    &MSE& 49.23 & 55.22 & 46.99 & 52.37 & 71.83 & 51.18 & 50.39 & - & 83.09 & 79.53 \\ 
&&                          &size (adj/adj4)                            & 3.4 & 4.2 & 4.2 & 6.3 & 4.2 & 3.9 & 3.4 & - & 5.5 & 5.4 \\ 
&&                          &size (MPt)                            & 3.4 & - & 4.8 & - & 4.2 & 3.4 & 3.8 & - & - & - \\ 
&&                          &s.e. (adj/adj4)                            & 7.76 & 7.59 & 7.21 & 7.16 & 8.5 & 7.69 & 7.68 & - & 8.69 & 8.63\\  
&&                          &s.e. (MPt)                            & 7.68 & - & 7.14 & - & 8.55 & 7.7 & 7.66 & - & - & - \\ 
\hline\hline
\end{tabular}
\end{adjustbox}
\label{table:sims-main-all-6-10}
\begin{tablenotes} \footnotesize
\item For each stratification, I report (1) the MSE, (2) the size of testing \eqref{eq:H0} for $\theta_0 = \theta$ at the 5\% level, in percentage, and (3) the average standard error. The tests used in this table are as follows: for matched-pair designs, the adjusted $t$-test with the variance estimator in \eqref{eq:se} (adj) and the test in \cite{imbens2015causal} (MPt); for matched sets of four, the adjusted $t$-test with the variance estimator in \eqref{eq:four-se} in the supplement (adj4); for the original stratifications, the test in (23) of \cite{bugni2018inference}; for no stratification, the two-sample $t$-test; for the regression-adjusted estimator, the $t$-test with White's heteroskedasticity-robust standard error. Rows are labeled according to the papers, models, and metrics. Columns are labeled according to the stratifications. The definitions of the stratifications can be found in the main text.
\end{tablenotes}
\end{table}

\begin{table}[ht]
\caption{Summary statistics for MSEs, size, and standard errors for additional methods not in the main text across all papers and models}
\begin{adjustbox}{max width=0.7\linewidth,center}
\begin{tabular}{ccccc}
\hline\hline
& Stratification & MSE (ratio vs.\ None) & size (\%) & s.e. (ratio vs.\ None) \\ 
\midrule
&  MS X2     & 0.798 & 5.207 & 0.870 \\   
& & [0.551, 0.938] & [3.200, 6.300] & [0.720, 0.968] \\ 
\addlinespace
& MS pilot & 0.749 & 5.256 & 0.828\\
& & [0.444, 0.939] & [3.900, 6.600] & [0.663, 0.946] \\ 
\addlinespace
& MS pen & 0.693 & 4.604 & 0.886 \\ 
& & [0.402, 0.966] & [3.100, 7.200] & [0.629, 0.989] \\ 
\addlinespace
& None-reg-int & 0.949 & 4.870 & 0.984 \\ 
& & [0.782, 1.014] &[3.300, 6.800] &  [0.890, 1.041] \\ 
\hline\hline    
\end{tabular}
\end{adjustbox}
\label{table:sims-others}
\begin{tablenotes} \footnotesize
\item For each stratification, I report summary statistics across all papers and models of (1) the ratio between the MSE under the particular stratification and the MSE under no stratification, (2) the size of testing \eqref{eq:H0} for $\theta_0 = \theta$ at the 5\% level, in percentage, and (3) the ratio between the average standard error under the particular stratification and the average standard error under no stratification. The tests used in this table are: for MS, the adjusted $t$-test with the variance estimator in \eqref{eq:four-se}; for the regression-adjusted estimator, the $t$-test with White's heteroskedasticity-robust standard error. For each metric, I show the average and $[\min, \max]$ across all papers and models. Rows are labeled according to the stratifications. Columns are labeled according to the metrics. The definitions of the stratifications can be found in the text.
\end{tablenotes}
\end{table}

\begin{table}[ht]
\caption{MSEs, size, and standard errors for stratifications (j)--(l) across papers 1--5}
\begin{adjustbox}{max width=0.6\linewidth,center}
\begin{tabular}{cccccccc} 
\hline\hline
& & & & (j) & (k) & (l) & (i'') \\
Paper              & Model                    & $\theta$        &           & MS X2   & MS pilot   & MS pen  & None-reg-int \\ 
\cmidrule{1-8}
\multirow{6}{*}[-0.2em]{1} & \multirow{3}{*}{1} & \multirow{3}{*}{0}      & MSE  & 0.00070 & 0.00075 & 0.00067 & 0.00099 \\ 
&& & size (adj4) & 6.3 & 6.1 & 3.1 & 4.7 \\ 
&& & s.e. (adj4)& 0.026 & 0.028 & 0.027 & 0.032 \\                         
\cmidrule{2-8}
& \multirow{3}{*}{2} & \multirow{3}{*}{0.033} &MSE& 0.00076 & 0.00081 & 0.00072 & 0.00092 \\  
&& &size (adj4)& 6.3 & 4.7 & 4.9 & 3.3 \\ 
&&&s.e. (adj4)& 0.027 & 0.029 & 0.027 & 0.031 \\ 

\cmidrule{1-8}
\multirow{9}{*}[-0.2em]{2} & \multirow{3}{*}{1} & \multirow{3}{*}{0}   &MSE& 0.063 & 0.057 & 0.048  & 0.079 \\  
&& & size (adj4)& 4.3 & 4.6 & 3.2 & 5.3 \\ 
&& & s.e. (adj4)& 0.26 & 0.25 & 0.25 & 0.28 \\              
\cmidrule{2-8}
& \multirow{3}{*}{2} & \multirow{3}{*}{0.18} &MSE& 0.067 & 0.067 & 0.052 & 0.073 \\ 
&& & size (adj4)& 4.8 & 6.0 & 3.3 & 5.3 \\ 
&& & s.e. (adj4)& 0.26 & 0.26 & 0.25 & 0.28 \\ 
\cmidrule{2-8}
& \multirow{3}{*}{3} & \multirow{3}{*}{0.012} &MSE& 0.076 & 0.074 & 0.069 & 0.090 \\ 
&& & size (adj4) & 4.3 & 3.9 & 4.4 & 4.7 \\
&& & s.e. (adj4) & 0.28 & 0.28 & 0.28 & 0.30 \\ 

\cmidrule{1-8}
\multirow{9}{*}[-0.2em]{3} & \multirow{3}{*}{1} & \multirow{3}{*}{0}   &MSE& 0.17 & 0.14 & 0.13 & 0.17 \\  
&&& size (adj4)& 4.6 & 5.1 & 3.9 & 4.3 \\
&&& s.e. (adj4)& 0.41 & 0.37 & 0.37 & 0.42 \\         
\cmidrule{2-8}
& \multirow{3}{*}{2} & \multirow{3}{*}{0.41} &MSE& 0.17 & 0.16 & 0.13 & 0.16 \\ 
&&& size (adj4)& 4.7 & 6.2 & 4.3 & 4.6 \\ 
&&& s.e. (adj4)& 0.42 & 0.38 & 0.37 & 0.41 \\ 
\cmidrule{2-8}
& \multirow{3}{*}{3} & \multirow{3}{*}{0.60} &MSE& 0.12 & 0.099 & 0.090 & 0.11 \\ 
&&& size (adj4)& 5.4 & 5.9 & 4.5 & 5.1 \\ 
&&&s.e. (adj4) & 0.33 & 0.31 & 0.31 & 0.33 \\ 

\cmidrule{1-8}
\multirow{9}{*}[-0.2em]{4} & \multirow{3}{*}{1} & \multirow{3}{*}{0}   &MSE& 0.11 & 0.067 & 0.063 & 0.14 \\ 
&&& size (adj4)& 5.2 & 4.6 & 3.3 & 4.7 \\  %
&&& s.e. (adj4)& 0.33 & 0.26 & 0.27 & 0.38 \\        
\cmidrule{2-8}
& \multirow{3}{*}{2} & \multirow{3}{*}{-1.31} &MSE& 0.11 & 0.070 & 0.065 & 0.14 \\ 
&&& size (adj4) & 5.5 & 3.9 & 3.4 & 3.9 \\ 
&&& s.e. (adj4)& 0.34 & 0.27 & 0.27  & 0.38 \\ 
\cmidrule{2-8}
& \multirow{3}{*}{2} & \multirow{3}{*}{-1.78} &MSE& 0.15 & 0.077 & 0.070 & 0.17 \\
&&& size (adj4) & 4.8 & 5.0 & 3.2 & 3.6 \\  %
&&& s.e. (adj4) & 0.38 & 0.28 & 0.30 & 0.42 \\ 

\cmidrule{1-8}
\multirow{9}{*}[-0.2em]{5} & \multirow{3}{*}{1} & \multirow{3}{*}{0}   &MSE& 1.01 & 0.93 & 0.99 & 1.25 \\ 
&&& size (adj4) & 5.3 & 4.9 & 6.9 & 5.2 \\ 
&&& s.e. (adj4)& 1.01 & 0.99 & 0.96 & 1.13 \\       
\cmidrule{2-8}
& \multirow{3}{*}{2} & \multirow{3}{*}{-0.35} &MSE& 1.11 & 1.08 & 1.00 & 1.28 \\ 
&&& size (adj4)& 6.1 & 6.3 & 4.4 & 5.0 \\ 
&&& s.e. (adj4)& 1.07 & 1.03 & 0.99 & 1.17 \\ 
\cmidrule{2-8}
& \multirow{3}{*}{3} & \multirow{3}{*}{-0.54} &MSE& 1.07 & 1.03 & 1.12 & 1.21 \\ 
&&& size (adj4)& 5.6 & 4.7 & 6.3 & 4.5 \\ 
&&& s.e. (adj4)& 1.06 & 1.04 & 1.02  & 1.13 \\

\hline\hline
\end{tabular}
\end{adjustbox}
\label{table:sims-others-1-5}
\begin{tablenotes} \footnotesize
\item For each stratification, I report (1) the MSE, (2) the size of testing \eqref{eq:H0} for $\theta_0 = \theta$ at the 5\% level, in percentage, and (3) the average standard error. The tests used in this table are: for MS, the adjusted $t$-test with the variance estimator in \eqref{eq:four-se}; for the regression-adjusted estimator, the $t$-test with White's heteroskedasticity-robust standard error. Rows are labeled according to the papers and models. Columns are labeled according to the stratifications. The definitions of the stratifications can be found in the text.
\end{tablenotes}
\end{table}

\begin{table}[ht]
\caption{MSEs, size, and standard errors for stratifications (j)--(l) across papers 6--10}
\begin{adjustbox}{max width=0.6\linewidth,center}
\begin{tabular}{cccccccc} 
\hline\hline
& & & & (j) & (k) & (l) & (i'') \\
Paper              & Model                    & $\theta$       &            & MS X2   & MS pilot   & MS pen & None-reg-int \\ 

\cmidrule{1-8}
\multirow{6}{*}[-0.2em]{6} & \multirow{3}{*}{1} & \multirow{3}{*}{0}  &  MSE     & 0.000068 & 0.000093 & 0.000079 & 0.00011 \\  
&&&size (adj4)& 3.2 & 5.7 & 4.7 & 4.2 \\ 
&&& s.e. (adj4)& 0.0086 & 0.0093 & 0.00091 & 0.011 \\           

\cmidrule{2-8}
& \multirow{3}{*}{2} & \multirow{3}{*}{0.010} &MSE& 0.000096 & 0.00012 & 0.00010 & 0.00013 \\ 
&&&size (adj4)& 5.4 & 6.6 & 5.4 & 6.8 \\ 
&&&s.e. (adj4)& 0.0098 & 0.010 & 0.0099 & 0.011 \\ 

\cmidrule{1-8}
\multirow{6}{*}[-0.2em]{7} & \multirow{3}{*}{1} & \multirow{3}{*}{0}       & MSE & 0.025 & 0.026 & 0.023 & 0.029 \\  
&&& size (adj4)& 6.3 & 4.9 & 5.1 & 3.7 \\  
&&&  s.e. (adj4)& 0.15 & 0.16 & 0.15 & 0.19 \\ %
\cmidrule{2-8}
& \multirow{3}{*}{2} & \multirow{3}{*}{0.21} &MSE& 0.028 & 0.031 & 0.029 & 0.033 \\ 
&&&size (adj4)& 5.0 & 5.2 & 5.1 & 4.6 \\  
&&&s.e. (adj4)& 0.17 & 0.17 & 0.17 & 0.19 \\ 

\cmidrule{1-8}
\multirow{9}{*}[-0.2em]{8} & \multirow{3}{*}{1} & \multirow{3}{*}{0}   &MSE& 0.31 & 0.34 & 0.28 & 0.47 \\ 
&&& size (adj4)& 5.2 & 5.4 & 3.4 & 5.8 \\ 
&&& s.e. (adj4)& 0.56 & 0.57 & 0.56 & 0.67 \\ 

\cmidrule{2-8}
& \multirow{3}{*}{2} & \multirow{3}{*}{0.041} &MSE& 0.34 & 0.35 & 0.31 & 0.45 \\ 
&&& size (adj4)& 5.6 & 4.8 & 3.6 & 4.8 \\ 
&&& s.e. (adj4)& 0.59 & 0.59 & 0.58 & 0.67 \\  

\cmidrule{2-8}
& \multirow{3}{*}{3} & \multirow{3}{*}{1.07} &MSE& 0.46 & 0.52 & 0.48 & 0.59 \\ 
&&& size (adj4)& 4.5 & 6.0 & 4.7 & 5.1 \\
&&& s.e. (adj4)& 0.70 & 0.71 & 0.69 & 0.75 \\  

\cmidrule{1-8}
\multirow{9}{*}[-0.2em]{9} & \multirow{3}{*}{1} & \multirow{3}{*}{0}   &MSE& 0.0067 & 0.0068 & 0.0065 & 0.0094 \\ 
&&& size (adj4)& 4.8 & 4.8 & 5.2 &  6.0 \\ 
&&& s.e. (adj4)& 0.082 & 0.083 & 0.082 & 0.010 \\      

\cmidrule{2-8}
& \multirow{3}{*}{2} & \multirow{3}{*}{-0.10} &MSE& 0.0080 & 0.0083 & 0.0085 & 0.0094 \\
&&& size (adj4)& 5.4 & 6.0 & 6.2 & 5.6 \\
&&& s.e. (adj4)& 0.088 & 0.090 & 0.088 & 0.097 \\

\cmidrule{2-8}
& \multirow{3}{*}{3} & \multirow{3}{*}{-0.012} &MSE& 0.0068 & 0.0074 & 0.0076 & 0.0079 \\   
&&& size (adj4)& 5.8 & 6.5 & 7.2 & 5.6 \\ 
&&& s.e. (adj4)& 0.082 & 0.082 & 0.081 & 0.088 \\

\cmidrule{1-8}
\multirow{9}{*}[-0.2em]{10} & \multirow{3}{*}{1} & \multirow{3}{*}{0}   &MSE& 52.82 & 39.16 & 40.30 & 52.49 \\
&&& size (adj4)& 5.1 & 4.4 & 3.7 & 4.8 \\ 
&&& s.e. (adj4)& 7.07 & 6.55 & 6.56 & 7.29 \\     

\cmidrule{2-8}
& \multirow{3}{*}{2} & \multirow{3}{*}{5.61} &MSE& 85.45 & 59.39 & 56.69 & 84.05 \\
&&&size (adj4) & 6.3 & 4.4 & 6.1 & 5.0 \\ 
&&& s.e. (adj4)& 8.78 & 7.62 & 7.55 & 9.26 \\ 

\cmidrule{2-8}
& \multirow{3}{*}{3} & \multirow{3}{*}{1.91} &MSE& 71.99 & 56.51 & 54.92 & 79.53 \\ 
&&&size (adj4) & 4.8 & 5.3 & 4.8 & 5.3 \\ 
&&&s.e. (adj4) & 8.47 & 7.60 & 7.57 & 8.63 \\
\hline\hline
\end{tabular}
\end{adjustbox}
\label{table:sims-others-6-10}
\begin{tablenotes} \footnotesize
\item For each stratification, I report (1) the MSE, (2) the size of testing \eqref{eq:H0} for $\theta_0 = \theta$ at the 5\% level, in percentage, and (3) the average standard error. The tests used in this table are: for MS, the adjusted $t$-test with the variance estimator in \eqref{eq:four-se}; for the regression-adjusted estimator, the $t$-test with White's heteroskedasticity-robust standard error. Rows are labeled according to the papers and models. Columns are labeled according to the stratifications. The definitions of the stratifications can be found in the text.
\end{tablenotes}
\end{table}

\begin{table}[ht]
\caption{The size of the test in \cite{athey2017econometrics} for matched sets of four}
\begin{adjustbox}{max width=0.3\linewidth,center}
\begin{tabular}{cccc}
\hline\hline
& & (b) & (d)\\
Paper & Model & MS X & MS base \\ 
\cmidrule{1-4}
\multirow{2}{*}{1} &1& 6.0 &  -\\
&2& 8.2 &  -\\
\cmidrule{1-4}
\multirow{3}{*}{2} &1& 5.1 & 4.8 \\
&2& 6.3 & 5.1\\
&3& 6.3 & 5.1\\
\cmidrule{1-4}
\multirow{3}{*}{3} &1& 5.7 & 5.2 \\
&2& 8.7 & 5.8\\
&3& 8.0 & 5.7\\
\cmidrule{1-4}
\multirow{3}{*}{4} &1& 5.0 & 5.0 \\
&2& 7.6 & 4.2\\
&3& 6.4 & 5.9\\
\cmidrule{1-4}
\multirow{3}{*}{5} &1& 6.0 & 6.6\\
&2& 7.7 & 9.6\\
&3& 7.7 & 9.0\\
\cmidrule{1-4}
\multirow{2}{*}{6} &1& 5.2 &   - \\
&2& 7.7 &   -\\
\cmidrule{1-4}
\multirow{2}{*}{7} &1& 5.5 &   -\\
&2& 8.5 &   -\\
\cmidrule{1-4}
\multirow{3}{*}{8} &1& 4.4 & 4.8 \\
&2&7.1 & 5.0\\
&3& 8.5 & 5.6\\
\cmidrule{1-4}
\multirow{3}{*}{9} &1& 6.2 & 5.3 \\
&2& 8.7 & 5.2\\
&3& 10.4 & 6.2\\
\cmidrule{1-4}
\multirow{3}{*}{10} &1& 5.3 & 5.8\\
&2& 6.1 & 6.3\\
&3& 5.2 & 6.6\\
\hline\hline
\end{tabular}
\end{adjustbox}
\label{table:sims-ai}
\begin{tablenotes} \footnotesize
\item This table shows the size of the test in \cite{athey2017econometrics} for testing \eqref{eq:H0} for $\theta_0 = \theta$ at the 5\% level, in percentage. Rows are labeled according to papers and models. Columns are labeled according to the stratifications. The definitions of the stratifications can be found in the main text.
\end{tablenotes}
\end{table}

\clearpage

Here are the details of all the data used in simulation:
\begin{enumerate}
	\item \cite{herskowitz2021gambling}:

	I re-implement the analysis on p.93 of the original paper and estimate the effect of lumpy prime on demand. I use Wave 2 data from ``Panel-Clean.dta''. There are 997 observations for simulation. I fill in the missing values and choose the covariates following the "OVERALL SPECIFICATION COVARIATES" part in Analysis-3.do. 20\% of the original data are sampled with replacement and fixed throughout the replications to be used as the pilot data, which contains 199 observations. I consider Model 1 and Model 2 for data imputation. When drawing from the empirical distribution, 996 observations are sampled with replacement so the sample size is a multiple of four.

	dependent: lmp\_matrix (an indicator for whether the maximum number of tickets was demanded.)

	covariates: meaninc (mean income for duration of study), betmean\_prop (mean amount spent on betting / mean income during study), lmp\_purchased, lumpy\_incprop

	treatment: lumpyprime (lumpy expenditure prime treatment group)

	\item \cite{lee2021poverty}:

	I re-implement the analysis in (1) on p.49 of the original paper and estimate the effect of treatment on total remittances sent from migrants. I rerun the 1--3 code files in folder ``Migrant-Survey'' and get the migrants data ``Endline-Data-Combined-Status-Merged-Ready-18-Active-Acc.dta'' for regression. There are 809 observations. 20\% of the original data are sampled with replacement and fixed throughout the replications to be used as the pilot data, which contains 161 observations. I consider Models 1--3 for data imputation. When drawing from the empirical distribution, 808 observations are sampled with replacement so the sample size is a multiple of four.

	dependent: log\_total\_remittances (missing values generated because of log transformation and are filled using 0)

	covariates: log\_total\_remittances\_b (baseline outcome), household\_size (missing values are filled using baseline outcomes), hohh\_age, hohh\_female, and hohh\_completed\_primary 

	treatment: treatment

	\item \cite{abel2020value}:

	I re-implement the analysis in (4) on p.56 of the original paper and estimated the effect of reference letters on employment. I used experiment 2 data from ``experiment2\_employment.dta''. There are 1000 observations. 20\% of the original data are sampled with replacement and fixed throughout the replications to be used as the pilot data, which contains 200 observations. I consider Models 1--3 for data imputation. When drawing from the empirical distribution, 1000 observations are sampled with replacement so the sample size is a multiple of four.

	dependent: f2\_b3 (number of jobs applied in the last four months, 246 observations with missing values are dropped. the treatment percentage did not change much, about 0.55)

	covariates: bs\_c3\_jobs\_applied (baseline outcome), age\_yr, female\_d, educ\_yr (missing values are filled using the mean), married\_d, lang\_zulu\_d, lang\_xhosa\_d, lang\_venda\_d (there are 4 languages and here we used 3 dummy variables)

	treatment: reference\_d

	original stratification: gender, total 2

	\item \cite{gerber2020one}:

	I re-perform the OLS on p.303 of the original paper and estimat the effect of the close poll treatment on vote margin predictions using data from 2010 RCT experiment and following lines 677--710 in code file ``Main\_e2010.do''. Because the observations in the control group do not have any experiment records, I only consider the two treatment groups ``close'' and ``not close.'' There are 6650 observations. 20\% of the original data are sampled with replacement and fixed throughout the replications to be used as the pilot data, which contains 1330 observations. I consider Models 1--3 for data imputation. When drawing from the empirical distribution, 6648 observations are sampled with replacement so the sample size is a multiple of four.

	dependent: votemarg\_post (post-treat vote margin prediction)  

	covariates: int\_gov\_scale (interest in politics, 1-5 scale), pelosi (identify Nancy Pelosi as speaker), vote\_admin\_past (share voted previous 5 elections), male, race (4 dummy variables), schooling (schooling years), inc(1-5 scale), age (1-7 scale)

	treatment: t\_close

	original stratification: ppstaten, in each replication, only consider states with more than 5 observations.

	\item \cite{deserranno2019leader}:

	I re-implement the group-level analysis on p.263 of the original paper, which is also shown in Table A.15 of the online appendix, and estimate the effect of treatment on wealth score inequality. I follow lines 1250--1289 in code file ``AEJApp-2018-0248\_Tables-and-Figures.do'' and transfer member-level panel data into group-level using IQR function. After discarding observations with missing group indices, I finally get 92 groups. 20\% of the original data are sampled with replacement and fixed throughout the replications to be used as the pilot data, which contains 18 observations. I consider Models 1--3 for data imputation. When drawing from the empirical distribution, 92 observations are sampled with replacement so the sample size is a multiple of four.

	dependent: iqrwealth (endline group-level wealth scores, generated from wealth\_endline)  

	covariates: branch (scale from 1 to 9, numeric)

	treatment: vote

	\item \cite{barrera-osorio2019medium-}:

	I re-implement the analysis on p.268, table 3, column (1) of the original paper and estimate the impacts of the basic treatment on on-time secondary enrollment outcomes. I follow lines 144--151 in code file ``Final\_Tables\_Journal.do.'' Observations with missing values on the dependent variable are filtered out. Moreover, variables ending in ``\_missing'' recorded the missing status of correspondent variables ending in ``\_fill'' and I impute ``\_fill'' variables using the median. Running one replication using full data takes 6 hours, so I randomly sample 7880 out of the 15759 observations to reduce the running time. 20\% of the original data are sampled with replacement and fixed throughout the replications to be used as the pilot data, which contains 1576 observations. I consider Models 1--2 for data imputation. When drawing from the empirical distribution, 7880 observations are sampled with replacement so the sample size is a multiple of four.

	dependent: on\_time (binary, whether enrolled or not)

	covariates: s\_teneviv\_fill (indicator of house ownership), s\_utilities\_fill (number of utilities in the house), s\_durables\_fill (number of durable goods), s\_infraest\_hh\_fill (infrastructure in the household, scale 0-22) , s\_age\_sorteo\_fill (age at the moment of lottery), s\_sexo\_fill (gender of student), s\_yrs\_fill (years of education), grade\_fill (grade at baseline), s\_single\_fill (if the household is single headed), s\_edadhead\_fill (age of the head of the household), s\_yrshead\_fill (years of education of the head), s\_tpersona\_fill (number of individuals in the house, scale 0-22), s\_num18\_fill (number of people under 18 in the house), s\_estrato\_fill (strata of the household, scale 0-2), s\_puntaje\_fill (SISBEN score), s\_ingtotal\_fill (income, from 0-4000). 

	treatment: treatment

	original stratification: grader(6-11), gender, total 12 

	\item \cite{himmler2019soft}:

	I re-implement the analysis of table 2, column (9) on p.130 of the original paper and estimate the effect of commitment treatment on the number of exams passed. I use data in ``soft\_commitments\_AEJ.dta'' and followed lines 74--77 in code file ``soft\_commitments\_AEJ.do.'' There are 392 observations, and 32.91\% are assigned to the treatment group in the original paper. Then, 20\% of the original data are sampled with replacement and fixed throughout the replications to be used as the pilot data, which contains 78 observations. There are no baseline outcomes. Models 1--2 are considered for data imputation. When drawing from the empirical distribution, 392 observations are sampled with replacement so the sample size is a multiple of four. 

	dependent: pass\_all (number of exams passed) 

	covariates: male, c\_HSGPA (centered high school GPA), age (scale 1-21, generated from original dummy variables dage1 - dage21),  dschooltype1 , dschooltype2 (two binary school type variables), nongerman (foreign citizership), c\_app\_day (centered application day), freshdeg (fresh HS degree).

	treatment: commitment

	original stratification: gpa, 4 strata

	\item \cite{abel2019bridging}:

	I re-implement the analysis of table 3, column (2) on p.292 of the original paper and estimate the effects of WorkshopPlus on search hours. I use data in file ``final\_data2.dta'' and follow lines 159--163 in code file ``AP\_final\_AEJ.do.'' I only consider the control group vs the ``workshopPlus'' group and filter out observations originally assigned to the ``workshop'' group. There are two outcomes in the experiment. In order to be consistent with the second simulation regarding to the second outcome ``application number,'' I discard the observations missing any ``search hours'' or ``application numbers'' values. I get 1479 observations, 45.57\% of which are assigned to the treatment group originally. 20\% of the original data are sampled with replacement and fixed throughout the replications to be used as pilot data, which contains 295 observations. Models 1-3 are considered for data imputation. Model 2 used baseline1 + covariates. When drawing from the empirical distribution, 1476 observations are sampled with replacement so the sample size is a multiple of four.

	dependent: b2\_t (search hours)  

	covariates: educ\_yr (education years), age\_yr (age in years), female\_d, lang\_ (three dummy variables indicating spoken languages), location\_f\_ (two dummy variables indicating location), round (follow-up 2)

	baseline: nomiss\_bs\_c2 (variable miss\_bs\_c2 is an indicator for missing values, and we filled  in missing baseline outcomes using the median)

	treatment: ws\_plus\_d

	secondary outcome: b3\_t (application numbers)

	\item \cite{de_mel2019labor}:

	I re-implement the analysis in table 5 panel A ``Number of paid workers,'' column ``After subsidy Year 3+'' on p.220 of the original paper and estimated the effect of treatment on employment. I follow lines 59--76 to define treatment status and lines 585--606 to perform regression in code file ``AEJreplicationfile\_LaborDrops.do'' using data in ``Sri-Lanka-Panel-Experiment-Paper.dta''. There are 12 rounds of experiments and the means of rounds 10-12 outcomes are treated as endline (year 3+) outcomes. I filter out observations whose outcomes are missing in any of these three rounds and impute the covariates with missing values using the median. I finally got 454 observations. 20\% of the original data are sampled with replacement and fixed throughout the replications to be used as the pilot data, which contains 90 observations. I consider Models 1--3 for data imputation. When drawing from the empirical distribution, 452 observations are sampled with replacement so the sample size is a multiple of four.

	dependent: allpaid\_trunc (number of all paid works, calculated using mean of round 10-12 grouped by the key variable sheno).

	baseline: baseallpaid

	covariates: basetotalscore, booster, raven, digitspan, baselK\_noland, baseedn, baseprofits. (Did not find descriptions. Variables baselowassets, basehighassets, baselowprofits, basehighprofits are dismissed because they make the matrix not invertible)

	treatment: voucheronly

	original stratification: there are 6 variables (strata1-strata6)

	\item \cite{lafortune2018role}:

	I re-implement the analysis of interactions in Panel A, Column ``Income per capita,'' Table 5 on p.242 of the original paper and estimate the effect of role models on income per capita. I run the provided ``.do'' files to generate the dataset ``Base\_Analisis\_SEG1.dta'' according to the ``Read-me.pdf''. I then perform data processing and regression analysis on this dataset following the codes in ``Interacciones.do''. After removing observations with missing outcomes and imputing missing gender and age values using the median, I finally get 979 observations. 20\% of the original data are sampled with replacement and fixed throughout the replications to be used as the pilot data, which contains 195 observations. I consider Models 1--3 for data imputation. When drawing from the empirical distribution, 976 observations are sampled with replacement so the sample size is a multiple of four. 

	dependent: IncomePC\_seg1 (income per capita)

	baseline: lb\_IncomePC\_seg1

	covariates: Edad (age), mujer (gender), Ed2, Ed3 (education dummy variables, ignore Ed1), NO\_info\_Educ (indicating missing values of these three education variables), NegocioBsico030\_fi, NegocioIntermedio300M1M\_fi (description written in foreign language, but are dummy variables indicating different amount of money, ignore NegocioDesarrollado1MM\_fi ), NO\_info\_Size (indicating missing values of  former three ``Nego\_'' variables

	treatment: asignado\_role\_Model
\end{enumerate}

\section{Matched-Pair Designs in the AEA RCT Registry} \label{sec:aea}
The following experiments in the AEA RCT Registry use matched-pair designs: AEARCTR-0000086, 0000171, 0000293, 0000443, 0000481, 0000550, 0000578, 0000587, 0000644, 0000688, 0000721, 0000983, 0000986, 0001034, 0001097, 0001218, 0001370, 0001591, 0001607, 0001712, 0001714, 0001778, 0001992, 0001995, 0002010, 0002125, 0002132, 0002282, 0002585, 0002622, 0002664, 0002750, 0002776, 0003056, 0003076, 0003524, 0003581, 0003629, 0003648, 0003779, 0003814, 0003933, 0003994, 0004024, 0004042, 0004022, 0006706.

\end{document}